\documentclass[11pt,a4paper,reqno]{amsart}%
\usepackage[latin1]{inputenc}
\usepackage{mathrsfs}
\usepackage{dsfont}
\usepackage{hyperref}
\usepackage{amsmath}
\usepackage{amssymb}
\usepackage{amsthm}
\usepackage{amsfonts}
\usepackage{amstext}
\usepackage{amsopn}
\usepackage{amsxtra}
\usepackage{mathrsfs}
\usepackage{esint}
\usepackage{pst-all}
\usepackage{graphicx}
%
%
%ENVIRONMENTS
\theoremstyle{plain}
\newtheorem{theorem}{Theorem}[section]
\newtheorem{lemma}[theorem]{Lemma}
\newtheorem{corollary}[theorem]{Corollary}
\newtheorem{proposition}[theorem]{Proposition}

\theoremstyle{definition}

\newtheorem{example}[theorem]{Example}
\theoremstyle{remark}
\newtheorem{remark}[theorem]{Remark}

\numberwithin{equation}{section}

\newcommand{\ii}{\infty}
\newcommand\R{{\ensuremath {\mathbb R} }}

\newcommand\N{{\ensuremath {\mathbb N} }}
\newcommand\Z{{\ensuremath {\mathbb Z} }}

\newcommand\1{{\ensuremath {\mathds 1} }}
\renewcommand\phi{\varphi}

\newcommand{\gS}{\mathfrak{S}}

\newcommand{\wto}{\rightharpoonup}
\newcommand{\cS}{\mathcal{S}}
\newcommand{\cP}{\mathcal{P}}

\newcommand{\cM}{\mathcal{M}}

\newcommand{\cB}{\mathcal{B}}

\newcommand{\cE}{\mathcal{E}}
\newcommand{\cETF}{\mathcal{E}_{\rm TF}}

\newcommand{\cW}{\mathcal{W}}
\newcommand{\cF}{\mathcal{F}}

\newcommand{\eps}{\epsilon}

\newcommand{\by}{\mathbf{y}}

\renewcommand{\epsilon}{\varepsilon}
\newcommand\pscal[1]{{\ensuremath{\left\langle #1 \right\rangle}}}
\newcommand{\norm}[1]{ \left| \! \left| #1 \right| \! \right| }
\newcommand{\tr}{{\rm Tr}\,}

\renewcommand{\geq}{\geqslant}
\renewcommand{\leq}{\leqslant}

\renewcommand{\tilde}{\widetilde}

\newcommand{\nn}{\nonumber}

%ARTICLE
\title{The semi-classical limit of large fermionic systems}

\author[S. Fournais]{S{\o}ren Fournais}
\address{Department of Mathematics, Aarhus University, Ny Munkegade 118, DK-8000 Aarhus C, Denmark} 
\email{fournais@math.au.dk}

\author[M. Lewin]{Mathieu Lewin}
\address{CNRS \& Universit\'e Paris-Dauphine, CEREMADE (UMR 7534), Place de Lattre de Tassigny, F-75775 Paris Cedex 16, France} 
\email{mathieu.lewin@math.cnrs.fr}

\author[J. P. Solovej]{Jan Philip Solovej}
\address{Department of Mathematics, University of Copenhagen, Universitetsparken 5, DK-2100 Copenhagen {\O}, Denmark} 
\email{solovej@math.ku.dk}

\date{\today}

\begin{document}

\begin{abstract}
We study a system of $N$ fermions in the regime where the intensity of the interaction scales as $1/N$ and with an effective semi-classical parameter $\hbar=N^{-1/d}$ where $d$ is the space dimension. For a large class of interaction potentials and of external electromagnetic fields, we prove the convergence to the Thomas-Fermi minimizers in the limit $N\to\infty$. The limit is expressed using many-particle coherent states and Wigner functions. The method of proof is based on a fermionic de Finetti-Hewitt-Savage theorem in phase space and on a careful analysis of the possible lack of compactness at infinity.
\end{abstract}

\maketitle

\tableofcontents

\section{Introduction and main results}

Large interacting quantum systems are extremely difficult to describe accurately, due to the high complexity of the many-particle Schrödinger equation. It is therefore useful to derive approximate models that are much easier to handle and, at the same time, have a proper physical behavior. In the \emph{mean-field model}, the real interaction between the particles is replaced by a self-consistent effective potential depending on the average density of particles in the system. It is believed that this model becomes a good approximation to the true many-particle problem when the system is sufficiently dense (such that the particles meet very often) and the interactions are weak (for a law of large numbers to hold). The purpose of this paper is to discuss the validity of the mean-field model for fermions in a semi-classical-type limit.

\subsubsection*{The mean-field Hamiltonian for bosons and fermions}
The Hamiltonian that will be the object of our attention describes $N$ spinless quantum particles in $\R^d$, where $d\geq1$ is arbitrary (of course physically $d=1,2,3$). In units such that $m=1/2$, it takes the form
\begin{equation}
H_{\hbar,N}=\sum_{j=1}^N\big(-i\hbar\nabla_{j}+A(x_j)\big)^2+V(x_j)+\frac{1}{N}\sum_{1\leq k<\ell\leq N}w(x_k-x_\ell).
 \label{eq:Hamiltonian_intro}
\end{equation}
For bosons this operator must be restricted to the subspace $\bigotimes_s^N L^2(\R^d)\subset L^2(\R^{dN})$ containing all the functions that are symmetric under exchange of variables, that is,
$$\Psi(x_{\sigma(1)},...,x_{\sigma(N)})=\Psi(x_1,...,x_N)$$
for every permutation $\sigma$. For fermions, it must be restricted to the subspace $\bigwedge^N L^2(\R^d)\subset L^2(\R^{dN})$ containing all the functions that are \emph{anti}symmetric,
$$\Psi(x_{\sigma(1)},...,x_{\sigma(N)})={\rm sgn}(\sigma)\Psi(x_1,...,x_N).$$
In~\eqref{eq:Hamiltonian_intro}, $A$ plays the role of a magnetic vector potential but can have a different physical origin (e.g.~for rotating gases), $V$ is an external potential that traps the particles (possibly only in a finite region of space if $V\to0$ at infinity), and $w$ is the two-particle interaction potential (that could in principle depend on $N$ as well, but that will be kept fixed here). The factor $1/N$ in front of the interaction term is typical of the mean-field scaling and its role is to make the two sums of the same order $N$ in the Hamiltonian. In the limit of large $N$, the expectation is that the particles behave independently. In that case, the law of large numbers tells us that the $j$th particle experiences the mean-field potential
$$\frac1{N}\sum_{k\neq j}w(x_j-x_k)\simeq\int_{\R^d}w(x_j-y)\rho(y)\,dy$$
where $\rho$ is the density of particles in the system. 

The validity of the mean-field approximation can be studied in the time-dependent or in the time-independent case. For the time-dependent equation, one usually assumes that the particles are independent at the initial time and then proves the propagation of chaos (that is, of the independence) for all times in the limit of large $N$. The situation is slightly different for the stationary problem where one has to prove that (say, in the ground state) the particles are independent. This second situation will be the object of our study.

For \textbf{bosons}, the validity of the mean-field approximation has been studied in the limit $N\to\ii$ with $\hbar$ fixed, first in some specific situations~\cite{LieLin-63,BenLie-83,LieThi-84,LieYau-87,Solovej-90,Bach-91,BacLewLieSie-93,FanSpoVer-80,VdBLewPul-88,RagWer-89,Werner-92,Kiessling-12,LieSeiSolYng-05,LieSei-06,Seiringer-11,GreSei-13,SeiYngZag-12}. The most general case has been addressed in a recent series of works~\cite{LewNamRou-14,LewNamRou-15b,LewNamRou-16c,LewNamRou-17b,Rougerie-15} by Nam, Rougerie and the second author of the present article. Under rather general assumptions on $A$, $V$ and $w$, it is possible to prove that the bosonic ground state energy per particle converges to that of the nonlinear Hartree functional
\begin{multline}
\cE^{V,A}_\text{Hartree}(u)=\int_{\R^d}\left(|-i\hbar \nabla u(x)+A(x)u(x)|^2+V(x)|u(x)|^2\right)\,dx\\+\frac{1}{2}\int_{\R^d}\int_{\R^d}w(x-y)|u(x)|^2|u(y)|^2\,dx\,dy.
\label{eq:Hartree}
\end{multline}
In mathematical terms, the bottom of the spectrum of $H_{\hbar,N}$ restricted to the bosonic subspace $\bigotimes_s^N L^2(\R^d)$, divided by $N$, converges to the infimum of $\cE_\text{Hartree}^{V,A}$:
$$\lim_{N\to\ii}\frac{\inf\sigma_{\bigotimes_s^N L^2(\R^d)}(H_{\hbar,N})}{N}=\inf_{\int_{\R^d}|u|^2=1}\cE^{V,A}_\text{Hartree}(u).$$
For the convergence of states and the link with the minimizers of $\cE^{V,A}_\text{Hartree}$, we refer to~\cite{LewNamRou-14} and to the discussion below. 
The link between the time-dependent Schrödinger equation $i\dot\Psi=H_{\hbar,N}\Psi$ and the nonlinear time-dependent Hartree equation 
$$i\frac{\partial}{\partial t}u = \big(-i\hbar\nabla+A\big)^2u+Vu+(|u|^2\ast w)u$$
has stimulated many works as well~\cite{Hepp-74,GinVel-79,Spohn-80,BarGolMau-00,ElgErdSchYau-06,ElgSch-07,AmmNie-08,ErdSchYau-09,FroKnoSch-09,RodSch-09,KnoPic-10,Pickl-11,LewNamSch-15}. The coupled limit $N\to\ii$ and $\hbar\to0$ has been investigated for the time-dependent problem in~\cite{LioPau-93,GraMarPul-03,FroGraSch-07}. 

The purpose of this work is to address the case of \textbf{fermions}. The anti-symmetry (called the \emph{Pauli principle}) implies that two particles cannot be at the same place and this usually makes the kinetic energy grow much faster than $N$, by the Lieb-Thirring inequality~\cite{LieThi-75,LieThi-76,LieSei-09}. More precisely, if $\Psi_N$ is an $N$-particle normalized antisymmetric function with support in a bounded domain $\Omega^N\subset\R^{dN}$, then we have
$$\sum_{j=1}^N\int_{\Omega^N}|\nabla_j\Psi|^2\geq C|\Omega|^{-\frac{2}{d}}N^{1+\frac{2}{d}}.$$
In order to ensure that all the terms in the Hamiltonian are of the same order, it is therefore necessary for fermions to take
$$\boxed{\hbar=\frac{1}{N^{\frac1d}},}$$
which means that the mean-field limit is coupled to a semi-classical limit. We therefore end up with the Hamiltonian
\begin{equation}
\boxed{H_N:=\sum_{j=1}^N\left(\frac{-i\nabla_{j}}{N^{\frac{1}d}}+A(x_j)\right)^2+V(x_j)+\frac{1}{N}\sum_{1\leq k<\ell\leq N}w(x_k-x_\ell),}
\label{eq:Hamiltonian}
\end{equation}
that will be our main object of interest in this work. Its fermionic ground state energy (that is, the bottom of the spectrum in the fermionic subspace $\bigwedge^N L^2(\R^d)$) will be denoted by
\begin{equation}
\boxed{E(N)=\inf\sigma_{\bigwedge^N L^2(\R^d)}(H_N).}
\end{equation}

A given physical Hamiltonian is not necessarily in the form~\eqref{eq:Hamiltonian_intro} but it can sometimes be recast in that form after an appropriate scaling. This is for instance the case for atoms~\cite{LieSim-77b,LieSim-77} (in units of length of the order $Z^{-1/3}$) and non-relativistic fermion stars~\cite{LieThi-84,LieYau-87}, see~\eqref{eq:atoms} and~\eqref{eq:stars} below.

\subsubsection*{Vlasov and Thomas-Fermi theories}
The Hamiltonian $H_N$ has been studied a lot in the time-dependent setting~\cite{NarSew-81,Spohn-81,BarGolGotMau-03,ElgErdSchYau-04,FroKno-11,BenPorSch-14,BenJakPorSafSch-15,BacBrePetPicTza-15,PetPic-14,BenPorSafSch-15}, where its dynamics is known to converge to the Vlasov time-dependent equation in the limit $N\to\ii$. This is the Hamiltonian dynamics associated with the \emph{Vlasov energy}, which is the semi-classical equivalent to the Hartree functional~\eqref{eq:Hartree}:
\begin{multline}
\cE^{V,A}_{\rm Vla}(m)=\frac{1}{(2\pi)^d}\int_{\R^d}\int_{\R^d}|p+A(x)|^2m(x,p)\,dx\,dp+\int_{\R^d}V(x)\rho_m(x)\,dx\\+\frac12\iint_{\R^d\times\R^d}w(x-y)\rho_m(x)\,\rho_m(y)\,dx\,dy.
\label{eq:def_Vlasov}
\end{multline}
Here 
$$\rho_m(x)=\frac{1}{(2\pi)^d}\int_{\R^d}m(x,p)\,dp.$$
and $m(x,p)$ is a probability measure on the phase space $\R^d\times\R^d$ that must satisfy the additional constraint 
\begin{equation}
\boxed{0\leq m(x,p)\leq 1\quad\text{a.e.}}
\label{eq:constraint_m}
\end{equation}
This condition says that one cannot put more than one particle at $x$ with a momentum $p$ and it is inherited from the Pauli principle. 

We will look at the ground state problem, that is, we study the minimization of the Vlasov energy~\eqref{eq:def_Vlasov}. If we minimize in the variable $p$ for any fixed $x$ without the constraint~\eqref{eq:constraint_m}, the ground state measures all have the trivial momentum distribution $\delta_0(p+A(x))$ and the first term disappears. With the fermionic constraint~\eqref{eq:constraint_m}, the Vlasov energy is minimized for measures of the form
\begin{equation}
m_\rho(x,p)=\1\left(|p+A(x)|^2\leq c_{\rm TF}\,\rho(x)^{2/d}\right)
 \label{eq:def_m_rho}
\end{equation}
where $\rho$ now minimizes the \emph{Thomas-Fermi energy}
\begin{multline}
\cE^V_{\rm TF}(\rho):=\cE^{V,A}_{\rm Vla}(m_\rho)=\frac{d}{d+2}c_\text{TF}\int_{\R^d}\rho(x)^{1+\frac{2}{d}}\,dx+\int_{\R^d}V(x)\rho(x)\,dx\\
+\frac12\iint_{\R^d\times\R^d}w(x-y)\rho(x)\,\rho(y)\,dx\,dy
\label{eq:def_TF}
\end{multline}
and 
$$c_\text{TF}=4\pi^2\left(\frac{d}{|S^{d-1}|}\right)^{\tfrac{2}{d}}.$$
The functional is similar to the Hartree energy~\eqref{eq:Hartree} with $\rho=|u|^2$ except that the quantum term $\int_{\R^d}|\nabla u|^2$ has been replaced by the fermionic classical kinetic energy proportional to $\int_{\R^d}\rho^{1+2/d}$ and that the magnetic field has been discarded (but it appears in the formula~\eqref{eq:def_m_rho} of the minimizers on phase space). 

We are interested in the link between the many-particle ground state energy $E(N)$ and that of the Vlasov and Thomas-Fermi energies in the limit $N\to\ii$ with $\hbar=N^{-1/d}$. To this end we introduce the Thomas-Fermi ground state energy
\begin{align*}
e^V_{\rm TF}(\lambda)&:=\inf\left\{\cETF^V(\rho)\ :\ 0\leq \rho\in L^1(\R^d)\cap L^{1+2/d}(\R^d),\ \int_{\R^d}\rho=\lambda\right\}\\
&=\inf\left\{ \cE^{V,A}_{\rm Vlas}(m)\ :\ 0\leq m\leq 1,\ (2\pi)^{-d}\int_{\R^{2d}}m=\lambda\right\},
\end{align*}
where $\lambda=1$ in our case. The equality of the two infima is obtained by inserting the formula~\eqref{eq:def_m_rho} in the Vlasov energy~\eqref{eq:def_Vlasov}.

\subsubsection*{Convergence of $E(N)/N$}
Our first result is that the leading order of $E(N)$ is given by the Thomas-Fermi ground state energy $e^V_{\rm TF}(1)$.

\begin{theorem}[Convergence of the ground state energy]\label{thm:CV_GS_energy}
Assume that $w$ is even, that $w,|A|^2\in L^{1+d/2}(\R^d)+L^\ii_\epsilon(\R^d)$ and that either 
$$V\in  L^{1+d/2}(\R^d)+L^\ii_\epsilon(\R^d)$$ 
or 
$$V_-\in  L^{1+d/2}(\R^d)+L^\ii_\epsilon(\R^d),\qquad V_+\in L^1_{\rm loc}(\R^d),\qquad \lim_{|x|\to\ii}V_+(x)=+\ii.$$
Then we have
\begin{equation}
\boxed{\lim_{N\to\ii}\frac{E(N)}{N}=e^V_{\rm TF}(1).}
\label{eq:CV_GS_energy}
\end{equation}
\end{theorem}

We recall that $f\in L^p(\R^d)+L^\ii_\epsilon(\R^d)$ means that for every $\epsilon>0$, we can write $f=f_p+f_\ii$ with $f_p\in L^p(\R^d)$ and $\norm{f_\ii}_\ii\leq \epsilon$.  The assumptions on the potentials cover most physically interesting systems in this regime. Our result can easily be generalized in many directions but we do not state precisely the corresponding theorems. Our two assumptions on $V$ cover either systems which are locally confined (the assumption $V\in L^{1+d/2}(\R^d)+L^\ii_\epsilon(\R^d)$ essentially means that $V\to0$ at infinity) or confined (when $V_+\to+\ii$ at infinity). Our proof works the same if $V$ has a different limit at infinity or if 
$V=+\ii$ outside of a domain $\Omega$, which corresponds to Dirichlet boundary conditions. We could also weaken the assumptions on $w_+$ but we refrain from doing it in order to simplify the next statement.  It is also possible to use a pseudo-relativistic kinetic energy
\begin{equation}
\sqrt{\left(\frac{-i\nabla_{j}}{N^{\frac{1}d}}+A(x_j)\right)^2+m^2} 
 \label{eq:pseudo-relativistic}
\end{equation}
instead of the non-relativistic kinetic energy, but this generalization (and the appropriate modifications on $\cE_{\rm TF}$ and on the assumptions on the potential) will not be discussed further. Finally, we notice that the assumption on the magnetic potential $A$ is probably far from optimal, but it already covers the physical case of a potential in 3D satisfying $\nabla\cdot A=0$ and whose magnetic field $B=\nabla\times A$ is square-integrable (since then $A\in L^6(\R^3)$ by the Sobolev inequality). 

Results of the form of~\eqref{eq:CV_GS_energy} have been proved for particular models. For atoms as in the work of Lieb and Simon~\cite{LieSim-73,LieSim-77}, we have 
\begin{equation}
d=3,\qquad A(x)=0,\qquad V(x)=-\frac{1}{t|x|},\qquad w(x)=\frac{1}{|x|},
\label{eq:atoms}
\end{equation}
where $t=\lim(N/Z)$ is the limiting proportion of electrons and protons. In~\cite{LieYau-87}, Lieb and Yau studied pseudo-relativistic stars but their results also apply to the simpler non-relativistic model which corresponds to 
\begin{equation}
d=3,\qquad A(x)=0,\qquad V(x)=0,\qquad w(x)=-\frac{1}{|x|}. 
 \label{eq:stars}
\end{equation}

In Section~\ref{sec:proof_CV_energy}, we provide an elementary proof of Theorem~\ref{thm:CV_GS_energy} which is very much inspired of~\cite{LieYau-87} and was recently written for bosons in~\cite{Lewin-15}.

\subsubsection*{Coherent states and Wigner functions}
The main results of the paper concern the convergence of states, which requires more subtle tools. We will express it using coherent states and Wigner functions, but other choices are possible. 

Let $f$ be any fixed normalized real-valued function in $L^2(\R^d)$. For every fixed $(x,p)$ in the phase space $\R^d\times\R^d$, we introduce the \emph{coherent state}
\begin{equation}
f_{x,p}^ \hbar(y)=\hbar^{-\frac{d}{4}}f\left(\frac{y-x}{\sqrt\hbar}\right)e^{i \frac{p\cdot y}{\hbar}},
\label{eq:f_x_p}
\end{equation}
where we recall that $\hbar=N^{-1/d}$. Then we have the resolution of the identity
\begin{equation}
\frac{1}{(2\pi\hbar)^{d}}\int_{\R^d}\int_{\R^d}|f^\hbar_{x,p}\rangle\langle f^\hbar_{x,p}|\,dx\,dp=1
\label{eq:resolution_of_identity_intro}
\end{equation}
in $L^2(\R^d)$. A typical choice is $f$ a Gaussian, but any real-valued $f$ can indeed be used. For any such $f$ and a fermionic $N$-particle state $\Psi_N$, we introduce the corresponding $k$-particle Husimi function~\cite{Husimi-40,Takahashi-86,ComRob-12}
\begin{multline}
m^{(k)}_{f,\Psi_N}(x_1,p_1,...,x_k,p_k)\\
:=\pscal{\Psi_N,a^*(f^\hbar_{x_1,p_1})\cdots a^*(f^\hbar_{x_k,p_k})a(f^\hbar_{x_k,p_k})\cdots a(f^\hbar_{x_1,p_1})\Psi_N},
\label{eq:def_m_k_Psi_intro}
\end{multline}
for $k=1,...,N$, where $a$ and $a^*$ are the fermionic annihilation and creation operators. In practice $f$ is a very well localized function and the measure $m^{(k)}_{f,\Psi_N}$ describes how many particles are in the $k$ semi-classical boxes with length scale $\sqrt\hbar$,
centered at $(x_1,p_1)$,...,$(x_k,p_k)$ in the phase space $\R^d\times\R^d$. Two equivalent formulas are
\begin{align}
&m^{(k)}_{f,\Psi_N}(x_1,p_1,...,x_k,p_k)\nn\\
&\quad=\frac{N!}{(N-k)!}\pscal{\Psi_N,\big(P_{x_1,p_1}^\hbar\otimes\cdots \otimes P_{x_k,p_k}^\hbar\otimes \1_{N-k}\big)\,\Psi_N}_{L^2(\R^{dN})}
\label{eq:def_m_k_Psi_tensor_product}\\
%&\qquad=k!{N\choose k}\int_{\R^{d(N-k)}}d\by\pscal{\Psi(\cdot,\by),P^\hbar_{x_1,p_1}\otimes\cdots\otimes P^\hbar_{x_k,p_k}\otimes \1_{N-k}\Psi(\cdot,\by)}_{L^2((\R^d)^k)}\nn\\
&\quad=\frac{N!}{(N-k)!}\int_{\R^{d(N-k)}}\left|\pscal{f^\hbar_{x_1,p_1}\otimes\cdots\otimes f^\hbar_{x_k,p_k},\Psi_N(\cdot,\by)}_{L^2(\R^{dk})}\right|^2\,d\by\label{eq:def_m_k_Psi_integral}
\end{align}
where $P_{x,p}^\hbar:=|f^\hbar_{x,p}\rangle\langle f_{x,p}^\hbar|$ is the orthogonal projection onto $f^\hbar_{x,p}$. As will be proved below in Lemma~\ref{lem:prop_m}, we have 
$$0\leq m^{(k)}_{f,\Psi_N}\leq 1$$
and
$$\frac{1}{(2\pi)^{dk}}\int_{\R^{2dk}}m^{(k)}_{f,\Psi_N}=\frac{N(N-1)\cdots (N-k+1)}{N^k}\underset{N\to\ii}{\longrightarrow}1,$$
for all $k\geq1$.

Next we turn to the $k$-particle Wigner function which is defined as in~\cite{LioPau-93,FroGraSch-07} by 
\begin{multline}
\mathscr{W}_{\Psi_N}^{(k)}(x_1,p_1,...,x_k,p_k):=\int_{\R^{dk}}\int_{\R^{d(N-k)}}e^{-i\sum_{\ell=1}^kp_\ell\cdot y_\ell}\times\\ \times\Psi_N(x_1+ \hbar y_1/2,\cdots, x_k+ \hbar y_k/2,x_{k+1},...,x_N)\times\\ \times\overline{\Psi_N(x_1-\hbar y_1/2,\cdots, x_k-\hbar y_k/2,x_{k+1},...,x_N)}\,dy_1\cdots dy_k\,dx_{k+1}\cdots dx_N.
 \label{eq:def_Wigner_k}
\end{multline}
Contrary to $m_{f,\Psi_N}^{(k)}$, the Wigner function $\mathscr{W}_{\Psi_N}^{(k)}$ is not necessarily positive, but it will have the same positive limit as $m_{f,\Psi_N}^{(k)}$ in the semi-classical regime.

\subsubsection*{Convergence of states: confined case}
In our main results about the convergence of states, we for simplicity distinguish the confined and unconfined situations, which we state in two separate theorems. We start with the former.

\begin{theorem}[Convergence of states, confined case]\label{thm:CV_states_confined}
Assume that $w$ is even, that $w,|A|^2,V_-\in L^{1+d/2}(\R^d)+L^\ii_\epsilon(\R^d)$ and that
$$V_+\in L^1_{\rm loc}(\R^d),\qquad \lim_{|x|\to\ii}V_+(x)=+\ii.$$
Let $\{\Psi_N\}\subset\bigwedge ^N L^2(\R^d)$ be any sequence such that $\|\Psi_N\|=1$ and
\begin{equation}
\pscal{\Psi_N,H_N\Psi_N}=E(N)+o(N).
\label{eq:energy_leading_order}
\end{equation}
Then there exists a subsequence $\{N_j\}$ and a probability measure $\mathscr P$ on the set 
$$\cM=\left\{0\leq\rho\in L^1(\R^d)\cap L^{1+2/d}(\R^d)\ :\ \int_{\R^d}\rho=1,\ \cE^{V}_{\rm TF}(\rho)=e_{\rm TF}^V(1)\right\}$$ 
of all the minimizers of the Thomas-Fermi functional, such that the following limit holds:
\begin{equation}
m^{(k)}_{f,\Psi_{N_j}}(x_1,p_1,...,x_k,p_k)\to \int_\cM \prod_{\ell=1}^k\underbrace{\1\left(|p_\ell+A(x_\ell)|^2\leq c_{\rm TF}\,\rho(x_\ell)^{2/d}\right)}_{=m_\rho(x_\ell,p_\ell)}\,d{\mathscr P}(\rho)
\label{eq:CV_m_k}
\end{equation}
weakly in $L^1(\R^{2dk})$ and weakly-$\ast$ in $L^\ii(\R^{2dk})$, for all $k\geq1$ and every real-valued normalized $f\in L^2(\R^d)$ or, in other words,
\begin{equation}
\int_{\R^{2dk}}m^{(k)}_{f,\Psi_{N_j}}\phi \to \int_\cM \left(\int_{\R^{2dk}} (m_\rho)^{\otimes k}\phi\right)\,d{\mathscr P}(\rho)
\label{eq:CV_m_k_explained}
\end{equation}
for every test function $\phi\in L^1(\R^{2dk})+L^\ii(\R^{2dk})$. For the Wigner function $\mathscr{W}_{\Psi_{N_j}}^{(k)}$ defined in~\eqref{eq:def_Wigner_k}, we have the same limit
\begin{equation}
\int_{\R^{2dk}}\cW^{(k)}_{\Psi_{N_j}}\phi \to \int_\cM \left(\int_{\R^{2dk}} (m_\rho)^{\otimes k}\phi\right)\,d{\mathscr P}(\rho),
\label{eq:CV_W_k_explained}
\end{equation}
this time for every test function $\phi$ such that 
\begin{equation}
\partial_{x_1}^{\alpha_1}\cdots \partial_{x_k}^{\alpha_k}\partial_{p_1}^{\beta_1}\cdots \partial_{p_k}^{\beta_k}\phi\in L^\ii(\R^{2dk}),\qquad\max(\alpha_j,\beta_j)\leq 1.
\label{eq:ass_Hwang}
\end{equation}
Furthermore, we have the convergence of the $k$-particle probability density 
\begin{equation}
\int_{\R^d}\cdots\int_{\R^d}|\Psi_{N_j}(x_1,...,x_{N_j})|^2\,dx_{k+1}\cdots dx_{N_j} \to \int_\cM \prod_{j=1}^k\rho(x_j)\;d{\mathscr P}(\rho)
\label{eq:CV_rho_k}
\end{equation}
weakly in $L^1(\R^{d})\cap L^{1+\frac{2}{d}}(\R^{d})$ for $k=1$, and weakly-$\ast$ in the sense of measures for $k\geq2$. Finally, we have the convergence of the $k$-particle kinetic energy density
\begin{multline}
\int_{\R^d}\cdots\int_{\R^d}\Big|\cF_{\hbar}[\Psi_{N_j}](p_1,...,p_{N_j})\Big|^2\,dp_{k+1}\cdots dp_{N_j}\\
\to \int_\cM \prod_{\ell=1}^k\left|\left\{\rho\geq |p_\ell+A|^d c_{\rm TF}^{-d/2}\right\}\right|\;d{\mathscr P}(\rho),
\label{eq:CV_t_k}
\end{multline}
weakly-$\ast$ in the sense of measures for $k\geq1$.
\end{theorem}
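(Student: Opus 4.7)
The plan is to work primarily with the Husimi measures $m^{(k)}_{f,\Psi_N}$ and to deduce the Wigner, position and momentum convergences from their limits as corollaries. Since $0\le m^{(k)}_{f,\Psi_N}\le 1$ uniformly in $N$, each sequence at fixed $k$ is pre-compact for the weak-$\ast$ topology of $L^\infty(\R^{2dk})$, and a diagonal extraction produces a subsequence $\{N_j\}$ along which $m^{(k)}_{f,\Psi_{N_j}}\rightharpoonup m^{(k)}_\infty$ simultaneously for every $k\ge 1$. The energy hypothesis~\eqref{eq:energy_leading_order} combined with the confinement $V_+(x)\to\infty$ provides a priori bounds on the kinetic and potential energies of $\Psi_{N_j}$ and prevents escape of mass to infinity, so that $0\le m^{(k)}_\infty\le 1$ and $(2\pi)^{-dk}\int m^{(k)}_\infty=1$ for every $k$.

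The central step is an application of a fermionic de Finetti--Hewitt--Savage theorem in phase space. The fermionic symmetry of $\Psi_N$ translates into a consistency relation between the $m^{(k)}_{f,\Psi_N}$, and the structure theorem provides a Borel probability measure $\mathscr{Q}$ concentrated on
$$\Bigl\{ m\ :\ 0\le m\le 1,\ (2\pi)^{-d}\textstyle\int_{\R^{2d}} m=1\Bigr\}$$
such that $m^{(k)}_\infty=\int m^{\otimes k}\,d\mathscr Q(m)$ for every $k\ge 1$. The Pauli-type bound $m\le 1$ in the limit is precisely what the phase-space fermionic de Finetti theorem is designed to preserve.

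To identify the support of $\mathscr Q$, I would pass to the limit in the energy. Standard coherent-state lower bounds for the kinetic term $(-i\hbar\nabla+A)^2$ and for $V$, combined with weak convergence in the two-body term (justified by $w\in L^{1+d/2}(\R^d)+L^\infty_\epsilon(\R^d)$ and the a priori $L^{1+2/d}$ bound on the density of $\Psi_{N_j}$ coming from Lieb--Thirring), yield
$$\liminf_{j\to\infty}\frac{\pscal{\Psi_{N_j},H_{N_j}\Psi_{N_j}}}{N_j}\ \ge\ \int \cE^{V,A}_{\rm Vla}(m)\,d\mathscr Q(m).$$
Together with Theorem~\ref{thm:CV_GS_energy} and the fact that the minimizers of $\cE^{V,A}_{\rm Vla}$ are exactly the bathtub states $m_\rho$ with $\rho\in\cM$ (as recalled around~\eqref{eq:def_m_rho}), this forces $\mathscr Q$ to be concentrated on $\{m_\rho:\rho\in\cM\}$. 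Pushing $\mathscr Q$ forward through the bijection $\rho\mapsto m_\rho$ produces the probability measure $\mathscr P$ on $\cM$ required in~\eqref{eq:CV_m_k}.

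The remaining assertions are corollaries of this Husimi convergence. Choosing $f$ Gaussian in~\eqref{eq:f_x_p}, the classical identity expressing $m^{(k)}_{f,\Psi_N}$ as the convolution of $\mathscr{W}^{(k)}_{\Psi_N}$ with a Gaussian of width $\sqrt\hbar$ in each of the $2dk$ phase-space variables, together with~\eqref{eq:ass_Hwang}, makes this convolution transparent as $\hbar\to0$ and yields~\eqref{eq:CV_W_k_explained}. The $k$-particle density~\eqref{eq:CV_rho_k} is obtained by integrating $m^{(k)}_{f,\Psi_{N_j}}$ in all momenta; this differs from the exact $k$-particle density only by a convolution with $|f^{\hbar}|^2$ in each position variable, which is harmless in the limit, and the Lieb--Thirring estimate promotes the convergence to the weak topology of $L^1\cap L^{1+2/d}$. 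Similarly, \eqref{eq:CV_t_k} follows by integrating the Wigner limit in all positions and using the pointwise identity $\int \1\!\left(|p+A(x)|^2\le c_{\rm TF}\rho(x)^{2/d}\right)dx=|\{\rho\ge|p+A|^d c_{\rm TF}^{-d/2}\}|$. The principal difficulty is the phase-space fermionic de Finetti theorem itself: showing that the constraint $0\le m\le 1$ survives in the semi-classical limit of Husimi marginals of antisymmetric states is the main analytical input, while passing to the limit in the two-body term under the stated $L^{1+d/2}+L^\infty_\epsilon$ assumption on $w$ is the secondary technical point.
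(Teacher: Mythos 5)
Your proposal is correct and follows essentially the same route as the paper: extraction of weakly convergent Husimi marginals, the fermionic de Finetti--Hewitt--Savage representation (with tightness from the confinement upgrading the mass constraint to $(2\pi)^{-d}\int\mu=1$), a Fatou-type lower bound giving $\liminf E(N_j)/N_j\ge\int\cE_{\rm Vla}^{V,A}(\mu)\,d{\rm P}(\mu)$, and comparison with Theorem~\ref{thm:CV_GS_energy} to localize the support on the bathtub minimizers $m_\rho$. The derivations of~\eqref{eq:CV_W_k_explained},~\eqref{eq:CV_rho_k} and~\eqref{eq:CV_t_k} via Gaussian convolution, momentum integration of the Husimi measures and the Lieb--Thirring bound also match the paper's treatment.
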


In the statement,
\begin{equation}
\cF_\hbar[f](p):=\frac{1}{(2\pi\hbar)^{d/2}}\int_{\R^d}f(x) e^{-i\frac{p\cdot x}{\hbar}}\,dx
\label{eq:Fourier_hbar}
\end{equation}
is the $\hbar$-dependent Fourier transform. Later we also use the notation $\widehat{f}:=\cF_1[f]$ for the unscaled Fourier transform.

The condition $\max(\alpha_j,\beta_j)\leq 1$ in~\eqref{eq:ass_Hwang} means that all the components of the $d$-dimensional multi-indices $\alpha_j$ and  $\beta_j$ are $\leq1$. The condition on $\phi$ is taken from~\cite{Hwang-87} but it can be replaced by any other for which the Calderon-Vaillancourt theorem holds true.

The result says that, in the limit $N\to\ii$, the many-body approximate minimizers $\Psi_N$ become purely semi-classical to leading order and that the corresponding semi-classical measures are a convex combination of factorized states involving the Vlasov minimizers $m_\rho$ with $\rho\in\cM$. Note that if the Thomas-Fermi energy has a unique minimizer $\rho_0$, then there is no need to extract subsequences and the probability measure $\mathscr{P}$ has to be a delta measure at $\rho_0$. In particular, all the limits are now factorized, which corresponds to independent probabilities. The convergence of the $k$-particle densities~\eqref{eq:CV_rho_k} and kinetic energy densities~\eqref{eq:CV_t_k} follows easily from the convergence~\eqref{eq:CV_m_k} of the Husimi measures and the fact that the system is confined (and from the Lieb-Thirring inequality~\cite{LieThi-75,LieThi-76,LieSei-09} for the one-particle density).

The simplest fermionic trial states $\Psi_N$ are the Slater determinants $\Psi_N=(N!)^{-1/2}\det(\psi_i(x_j))$ (a.k.a.~Hartree-Fock states) where $\psi_1,...,\psi_N$ form an orthonormal system~\cite{LieSim-77,Lions-87,Solovej-03}. The one-particle Husimi measure of a Slater determinant is given by
$$m^{(1)}_{f,\Psi_N}(x,p)=\sum_{i=1}^N\big|\langle\psi_i,f^\hbar_{x,p}\rangle\big|^2.$$
It can be proved that $\Psi_N$ satisfies~\eqref{eq:energy_leading_order} if $m^{(1)}_{f,\Psi_N}(x,p)$ is converging to a Vlasov minimizer $m_\rho$ with $\rho\in\cM$ (see Lemma~\ref{lem:semi-classics} below). For a Slater determinant, the $k$-particle semi-classical measures are always factorized in the semi-classical limit and we then end up with a Dirac delta $\mathscr{P}=\delta_\rho$. Our proof of the convergence of the energy (Theorem~\ref{thm:CV_GS_energy}) will indeed rely on the Hartree-Fock problem.

An equivalent way to formulate the limits in Theorem~\ref{thm:CV_states_confined} is in terms of quantization of observables. In other words, for every test function $\phi$ we associate an operator ${\rm Op}^{N,\hbar}(\phi)$ acting on $\bigwedge^N L^2(\R^d)$ and we look at the limit of $\pscal{\Psi_N,{\rm Op}^{N,\hbar}(\phi)\Psi_N}$. The quantization associated with the coherent states $f_{x,p}^\hbar$ is defined by 
\begin{multline*}
{\rm Op}^{N,\hbar}_{f}(\phi):=\int_{(\R^d\times\R^d)^k}\phi(x_1,...,p_k)a^*(f_{x_1,p_1}^\hbar)\cdots a^*(f^\hbar_{x_k,p_k})\times\\
 \times a(f^\hbar_{x_k,p_k})\cdots a(f^\hbar_{x_1,p_1})\,dx_1\cdots dp_k.
\end{multline*}
Similarly, associated with the Wigner function there is an operator defined in terms of its integral kernel by
\begin{multline*}
{\rm Op}^{N,\hbar}_{\rm Weyl}(\phi)(x_1,...,x_N,y_1,...,y_N)\\:={N\choose k}^{-1}\sum_{1\leq i_1<\cdots<i_k\leq N}{\rm Op}^{k,\hbar}_{\rm Weyl}(\phi)(x_{i_1},...,x_{i_k},y_{i_1},...,y_{i_k})
\end{multline*}
where
\begin{multline*}
{\rm Op}^{k,\hbar}_{\rm Weyl}(\phi)(x_1,...,x_k,y_1,...,y_k)\\:=
\hbar^{-dk}\int_{\R^{dk}}\phi\left(\frac{x_{1}+y_{1}}2,p_1,...,\frac{x_{k}+y_{k}}2,p_k\right)e^{\frac{i}{\hbar}\sum_{j=1}^kp_j\cdot(x_{j}-y_{j})}\,dp_1\cdots dp_k.
\end{multline*}
Then the theorem gives the limit 
$$\pscal{\Psi_{N_j},{\rm Op}^{N_j,N_j^{-1/d}}_{f/\rm Weyl}(\phi)\Psi_{N_j}}\underset{j\to\ii}{\longrightarrow}\int_\cM \left(\int_{\R^{2dk}} m_\rho^{\otimes k}\phi\right)\,d{\mathscr P}(\rho)$$ 
for any fixed $\phi$ in the function spaces mentioned in the statement.

\subsubsection*{Convergence of states: unconfined case}
In the unconfined case we have a similar result, except that the limits are \emph{a priori} local. Since some of the particles can escape to infinity, our result will involve the minimizers of the problems $e^V_{\rm TF}(\lambda)$ for a mass $0\leq \lambda\leq1$.

\begin{theorem}[Convergence of states, unconfined case]\label{thm:CV_states_unconfined}
Assume that $w$ is even and that 
$$w,|A|^2,V\in  L^{1+d/2}(\R^d)+L^\ii_\epsilon(\R^d).$$ 
Let $\{\Psi_N\}\subset\bigwedge ^N L^2(\R^d)$ be any sequence such that $\|\Psi_N\|=1$ and
$$\pscal{\Psi_N,H_N\Psi_N}=E(N)+o(N).$$
Then there exists a subsequence $\{N_j\}$ and a probability measure $\mathscr P$ on the set 
\begin{multline*}
\cM=\bigg\{0\leq\rho\in L^1(\R^d)\cap L^{1+2/d}(\R^d)\ :\ \int_{\R^d}\rho\leq 1,\\
\cE^{V}_{\rm TF}(\rho)=e_{\rm TF}^V\Big(\int_{\R^d}\rho\Big)=e_{\rm TF}^V(1)-e_{\rm TF}^0\Big(1-\int_{\R^d}\rho\Big)\bigg\} 
\end{multline*}
which coincides with the set of all the possible weak limits of minimizing sequences for the Thomas-Fermi problem, such that 
\begin{itemize}
 \item the limit~\eqref{eq:CV_m_k_explained} for $m^{(k)}_{f,\Psi_{N_j}}$ holds for every $\phi\in L^1(\R^{2dk})+L^\ii_\eps(\R^{2dk})$;
 \item the limit~\eqref{eq:CV_W_k_explained} for $\cW^{(k)}_{\Psi_{N_j}}$ holds for every $\phi$ satisfying~\eqref{eq:ass_Hwang} and tending to zero at infinity;
 \item the limit~\eqref{eq:CV_rho_k} for $\rho^{(k)}_{\Psi_{N_j}}$ holds weakly in $L^1_{\rm loc}(\R^{d})\cap L^{1+\frac{2}{d}}(\R^{d})$ for $k=1$, and weakly-$\ast$ on $C^0_0(\R^{dk})$ instead of $C^0(\R^{dk})$ for $k\geq2$.
\end{itemize}
If $\int_{\R^d}\rho=1$ for $\mathscr{P}$-almost-every $\rho\in\cM$, then these limits hold as in Theorem~\ref{thm:CV_states_confined}.
\end{theorem}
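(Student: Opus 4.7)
The plan is to reduce to the confined case of Theorem~\ref{thm:CV_states_confined} by a geometric localization in Fock space, exploiting the fact that the assumption $V_+\in L^{1+d/2}(\R^d)+L^\ii_\eps(\R^d)$ guarantees that any mass escaping to infinity contributes exactly the free Thomas--Fermi energy $e^0_{\rm TF}$. First, using Lemma~\ref{lem:prop_m} (which gives $0\le m^{(k)}_{f,\Psi_N}\le 1$ with total mass tending to $1$), I extract by a diagonal argument a subsequence $\{N_j\}$ along which $m^{(k)}_{f,\Psi_{N_j}}$ converges weakly-$\ast$ in $L^\ii(\R^{2dk})$ to a symmetric limit $m^{(k)}$ for every $k\ge1$, and similarly for the Wigner measures and reduced densities. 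Then a phase-space version of the fermionic de Finetti--Hewitt--Savage theorem (the main abstract tool of the paper) applied to $\{m^{(k)}\}$ yields
\begin{equation*}
m^{(k)}=\int m^{\otimes k}\,d\mu(m)
\end{equation*}
for some probability measure $\mu$ supported on $\{0\le m\le 1,\ (2\pi)^{-d}\int_{\R^{2d}}m\le 1\}$; this replaces the almost-Slater structure of minimizers by a mixture of one-body phase-space densities.

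Next I perform an IMS-type fermionic Fock-space localization. Fix $\chi_R,\eta_R$ smooth with $\chi_R^2+\eta_R^2=1$, $\chi_R=1$ on $B_R$, $\supp\eta_R\subset B_{R/2}^c$, and split $\Psi_N$ into inside/outside components in $\cF_-(\chi_R L^2)\otimes\cF_-(\eta_R L^2)$. The commutators of $(-i\hbar\nabla+A)^2$ with $\chi_R$ produce a kinetic error of order $\hbar^2\|\nabla\chi_R\|_\ii^2\,N=o(N)$ once $R\to\ii$ sufficiently slowly, and the interaction term decouples up to $o(N)$ using $w\in L^{1+d/2}+L^\ii_\eps$ together with the Lieb--Thirring control on the one-particle density. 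Combined with Theorem~\ref{thm:CV_GS_energy} applied on each region (with $V$ replaced by $0$ on the outside, valid because $V_+\in L^{1+d/2}+L^\ii_\eps$ is small at infinity), this yields
\begin{equation*}
\frac{\pscal{\Psi_N,H_N\Psi_N}}{N}\ge \frac{N_{\rm in}}{N}\,\frac{E^V(N_{\rm in})}{N_{\rm in}}+\frac{N_{\rm out}}{N}\,\frac{E^0(N_{\rm out})}{N_{\rm out}}-o(1),
\end{equation*}
with $N_{\rm in}=\tr[\chi_R\gamma^{(1)}_{\Psi_N}\chi_R]$. Since the left-hand side tends to $e^V_{\rm TF}(1)$ by Theorem~\ref{thm:CV_GS_energy} and the binding inequality $e^V_{\rm TF}(1)\le e^V_{\rm TF}(\lambda)+e^0_{\rm TF}(1-\lambda)$ holds for all $\lambda\in[0,1]$, its saturation forces $\mu$-almost every $m$ to equal $m_\rho$ for some $\rho$ in the set $\cM$ of the statement; the measure $\mathscr P$ is then the pushforward of $\mu$ under $m\mapsto\rho_m$.

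For the convergence statements, the weak-$\ast$ $L^\ii$ convergence of the Husimi measures is in hand, and the restriction of the allowed class of test functions to $L^1+L^\ii_\eps$ for $m^{(k)}_{f,\Psi_{N_j}}$ and to $\phi$ vanishing at infinity for $\cW^{(k)}_{\Psi_{N_j}}$ simply encodes the fact that the missing mass $1-\int\rho$ may genuinely escape to spatial infinity and must be discarded by the test function; the correspondence between Husimi and Wigner limits is established exactly as in the confined case via the Calderon--Vaillancourt calculus for the coherent states $f^\hbar_{x,p}$. Local convergence of the reduced densities $\rho^{(k)}_{\Psi_{N_j}}$ follows by pairing the Husimi convergence with the Lieb--Thirring bound, now interpreted weakly-$\ast$ on $C^0_0$ rather than $C^0$ for $k\ge2$. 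The main obstacle is the localization step: closing the energy inequality with an $o(N)$ remainder at the semi-classical scale $\hbar=N^{-1/d}$, while simultaneously decoupling the $1/N$-scaled long-range interaction and preserving antisymmetry through the Fock-space splitting, is delicate because both the kinetic and interaction terms of $H_N$ are of order $N$, and it is this step that requires the careful analysis of the lack of compactness promised in the abstract.
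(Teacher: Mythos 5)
Your overall strategy is the same as the paper's: extract weak limits of the Husimi/Wigner measures, apply the weak fermionic de Finetti--Hewitt--Savage theorem to write them as $\int_\cB \mu^{\otimes k}\,d{\rm P}(\mu)$, localize the energy with $\chi_R$, $\eta_R=(1-\chi_R^2)^{1/2}$ in Fock space, and close the argument with the binding inequality $e^V_{\rm TF}(1)\leq e^V_{\rm TF}(\lambda)+e^0_{\rm TF}(1-\lambda)$. The de Finetti part, the treatment of the IMS and interaction cross terms, and the identification of the Husimi/Wigner/density limits are all in order.

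The genuine gap is in the central inequality $\pscal{\Psi_N,H_N\Psi_N}/N\geq (N_{\rm in}/N)E^V(N_{\rm in})/N_{\rm in}+(N_{\rm out}/N)E^0(N_{\rm out})/N_{\rm out}-o(1)$, where you treat the number of particles inside and outside $B_R$ as a single deterministic quantity $N_{\rm in}=\tr[\chi_R\gamma^{(1)}_{\Psi_N}\chi_R]$. The localized state is not an $N_{\rm in}$-particle state: it is a mixture $\sum_n G^{\pm}_{R,n}$ over all particle numbers $n=0,\dots,N$, and the outside energy can only be bounded below by $\sum_n (n/N)\,(E^0(n,n/N)/n)\,\tr(G_{R,n}^+)$, where $E^0(n,n/N)$ is the energy of $n$ particles with coupling $1/N$ and $\hbar=N^{-1/d}$ (so after rescaling one gets $e^0_{\rm TF}(n/N)$, not $e^0_{\rm TF}(1)$ of an autonomous $N_{\rm out}$-body problem). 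Crucially, the fluctuations of $n/N$ do not concentrate in general: the mass $\int_{\R^d}\rho_\mu$ genuinely varies over the support of $\rm P$, and the distribution of $n/N$ is correlated with $\rm P$. To run the saturation argument $\rm P$-almost everywhere you need (i) the identity $\tr G^+_{R,n}=\tr G^-_{R,N-n}$ to convert the outside particle-number distribution into the inside one, and (ii) the statement that $\sum_n f(n/N)\tr(G^-_{R,n})\to\int_\cB f\bigl(\int_{\R^d}\chi_R^2\rho_\mu\bigr)\,d{\rm P}(\mu)$ for continuous $f$ (proved in the paper via the moments $N^{-k}\sum_n\binom{n}{k}\tr(G^-_{R,n})=N^{-k}\tr[(\chi_R)^{\otimes k}\gamma^{(k)}_{\Psi_N}(\chi_R)^{\otimes k}]$ and the local convergence of $\rho^{(k)}_{\Psi_N}/N^k$). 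Only then does the lower bound take the form $\int_\cB\{\cE^{V,A}_{\rm Vla}(\chi_R^2\mu)+e^0_{\rm TF}(1-\int\chi_R^2\rho_\mu)\}\,d{\rm P}(\mu)$, from which the support property of $\rm P$ follows after $R\to\ii$. Your write-up replaces this random-mass bookkeeping by a single number, which is exactly the step where the argument would fail without further work.
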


In the unconfined case some particles may be lost at infinity (if not all), and the limiting minimizing densities $\rho$ might not be probability measures. Nevertheless, the result says that the remaining particles must solve the minimization problem $e^V_{\rm TF}(\int\rho)$, corresponding to the fraction $\int_{\R^d}\rho$ of the $N$ particles which have not escaped to infinity. Furthermore, if no particle is lost ($\int_{\R^d}\rho=1$ on $\cM$), then the convergence is the same as in the confined case.

Note that the weak limit of the $k$-particle kinetic energy density~\eqref{eq:CV_t_k} is unknown in the unconfined case, due to the lack of control in the $x$-variable.
However, it follows from~\eqref{eq:CV_m_k_explained} that the amount of kinetic energy in the limit cannot exceed the Vlasov one
$$
\liminf_{j\to\ii}\frac{\pscal{\Psi_{N_j},(-\Delta)\Psi_{N_j}}}{N_j^{1+\frac2d}}\geq \int_\cB\left(\frac{d }{d+2}c_{\rm TF}\int_{\R^d}\rho^{1+\frac{2}{d}}\right)\;d{\mathscr P}(\rho).
$$

\subsubsection*{Discussion and strategy of proof}
Our theorems~\ref{thm:CV_states_confined} and~\ref{thm:CV_states_unconfined} seem to be the first dealing with the convergence to the Thomas-Fermi problem for general potentials $A,V,w$. Given that the Thomas-Fermi model is at the core of many approximate models in chemistry and material science~\cite{EllLeeCanBur-08}, this result is important for applications. For atoms and stars, our result provides the following information.

\begin{example}[Large atoms]
For the choice~\eqref{eq:atoms}, the Thomas-Fermi energy is
$$\int_{\R^3}\left(\frac35 c_{\rm TF} \rho(x)^{5/3}-\frac{\rho(x)}{t|x|}\right)\,dx+\frac12 \int_{\R^3}\int_{\R^3}\frac{\rho(x)\,\rho(y)}{|x-y|}\,dx\,dy$$
and the set $\cM$ consists of a unique $\rho_t$ which has the mass $\int_{\R^3}\rho_t=\min(1,1/t)$. Theorem~\ref{thm:CV_states_unconfined} therefore provides the strong convergence to $m_{\rho_t}$ for $0<t\leq 1$ and the weak convergence to $m_{\rho_1}$ for $t>1$. This was proved in~\cite{LieSim-73,LieSim-77b,Lieb-81b}.
\end{example}

\begin{example}[Non-relativistic stars]
For stars~\eqref{eq:stars}, the corresponding Thomas-Fermi energy is
$$\frac35 c_{\rm TF}\int_{\R^3} \rho(x)^{5/3}\,dx-\frac12 \int_{\R^3}\int_{\R^3}\frac{\rho(x)\,\rho(y)}{|x-y|}\,dx\,dy$$
and we have $\cM=\{0\}\cup\big\{\rho_0(\cdot-\tau),\ \tau\in\R^3\big\}$ for some unique $\rho_0$ with $\int_{\R^3}\rho_0=1$~\cite{AucBea-71,AucBea-71b,Friedman,Lions-81b,Lions-84,LieYau-87}. Therefore, if we can find a translation of the system for which the limiting semi-classical measures are non trivial, the limit can only involve the Thomas-Fermi minimizing density $\rho_0$. The convergence of states seems to be new. It was only discussed for a locally perturbed model in~\cite[Thm.~6]{LieYau-87}. As mentioned previously, we have a similar convergence result for pseudo-relativistic stars.
\end{example}

Our main new tool in this paper is a fermionic (weak) version of the de Finetti-Hewitt-Savage theorem for \emph{classical} measures (Theorem~\ref{thm:deFinetti_weak} below), which implies that the weak limits of the semi-classical measures must always be a combination of factorized probabilities (Theorem~\ref{thm:subsequences_m_k} below). We follow here ideas of~\cite{LewNamRou-14}, where a weak version of the \emph{quantum} de Finetti theorem was introduced for the mean-field limit of bosons. The fact that we do not need a quantum version for fermions is of course due to the semi-classical feature of our model. The Pauli principle only persists in the constraint that $0\leq m\leq1$. We refer to~\cite{Rougerie-15} for a general presentation of classical and quantum de Finetti theorems, with applications to the mean-field limit for bosons, and to~\cite{Golse-13,AmmNie-08} for the time-dependent problem.

\subsubsection*{Organization of the paper}
The rest of the paper is devoted to the proofs of our results. In Section~\ref{sec:measures} we give the main properties of the Husimi and Wigner measures. The main result there is Theorem~\ref{thm:subsequences_m_k} in which we explain that a sequence $\{\Psi_N\}$ has always convergent semi-classical measures (up to extraction of a subsequence), with a limit that is a convex combination of factorized `fermionic'' measures on the phase space:
$$m^{(k)}_{f,\Psi_{N_j}}\wto \int_{\substack{0\leq\mu\leq 1\\ \int_{\R^{2d}}\mu\leq (2\pi)^d}} \mu^{\otimes k}\;d{\rm P}(\mu),\qquad \forall k\geq1.$$
Sections~\ref{sec:proof_CV_energy} and~\ref{sec:proof_CV_states} are then devoted to the proof of Theorems~\ref{thm:CV_GS_energy},~\ref{thm:CV_states_confined} and~\ref{thm:CV_states_unconfined}. The proof of Theorem~\ref{thm:CV_GS_energy} is based on ideas of Lieb and Yau~\cite{LieYau-87}, while that of Theorem~\ref{thm:CV_states_confined} follows easily from our fermionic de Finetti-Hewitt-Savage theorem. The proof of Theorem~\ref{thm:CV_states_unconfined} is more tedious and relies on the techniques introduced in~\cite{Lewin-11,LewNamRou-14}.

\subsubsection*{Acknowledgement} 
M.L. and J.P.S acknowledge financial support from the European Research Council (Grant Agreements MNIQS 258023 and MASTRUMAT 321029). S.F. acknowledges support from a Danish research council Sapere
Aude grant. This work was started when the authors were at the Centre \'Emile Borel of the Institut Henri Poincaré in Paris in 2013. Part of this work was done when S.F. was a visiting professor at the University Paris-Dauphine.

%%%%%%%%%%%%%%%%%%%%%%%%%%%%%%%%%%%%%
%%%%%%%%%%%%%%%%%%%%%%%%%%%%%%%%%%%%%
\section{Fermionic semi-classical measures}\label{sec:measures}
%%%%%%%%%%%%%%%%%%%%%%%%%%%%%%%%%%%%%
%%%%%%%%%%%%%%%%%%%%%%%%%%%%%%%%%%%%%

This section is devoted to the study of the general properties of the Husimi measures $m^{(k)}_{f,\Psi_N}$ which we have defined in~\eqref{eq:def_m_k_Psi}, and to the relation with the Wigner measures. 
The main result of the section is a general theorem about the properties of the measures obtained in the limit $N\to\ii$, whose proof will be given later in Section~\ref{sec:proof_thm_subsequences_m_k}. 

\begin{theorem}[Convergence to factorized fermionic measures on phase space]\label{thm:subsequences_m_k}
Let $\Psi_N$ be a sequence of normalized fermionic functions, $\hbar=N^{-1/d}$ and $m^{(k)}_{f,\Psi_N}, \mathscr{W}_{\Psi_N}^{(k)}$ be defined as in~\eqref{eq:def_m_k_Psi_intro} and~\eqref{eq:def_Wigner_k}. 
Then, there exists a subsequence $N_j$ and a probability measure ${\rm P}$ on the set 
$$\cB=\left\{\mu\in L^1(\R^{2d})\ :\ 0\leq \mu\leq 1,\ (2\pi)^{-d}\int_{\R^{2d}} \mu\leq 1\right\}$$
such that, for every $k\geq1$, 
\begin{equation}
\int_{{\R^{2dk}}}m^{(k)}_{f,\Psi_{N_j}}\phi\to\int_\cB \left(\int_{{\R^{2dk}}}\mu^{\otimes k}\phi\right)\;d{\rm P}(\mu),
\label{eq:def_weak_limit}
\end{equation}
for every normalized real-valued function $f\in L^2(\R^d)$ and every $\phi\in L^1({\R^{2dk}})+L^\ii_\epsilon({\R^{2dk}})$, and
\begin{equation}
 \int_{\R^{2dk}}\mathscr{W}_{\Psi_{N_j}}^{(k)}\phi\to \int_\cM \left(\int_{\R^{2dk}} \mu^{\otimes k}\phi\right)\,d{\rm P}(\mu)
\label{eq:def_weak_limit_Wigner2}
\end{equation}
for every function $\phi$ tending to zero at infinity, satisfying~\eqref{eq:ass_Hwang}. 

Furthermore, if $\Psi_N$ satifies the kinetic energy bound
\begin{equation}
\pscal{\Psi_N,\left(\sum_{j=1}^N-\Delta_j\right)\Psi_N}\leq CN^{1+2/d},
\label{eq:semi-classical_bound}
\end{equation}
then 
$$\int_{\cB}\left(\int_{\R^{2d}}|p|^2\mu(x,p)\,dx\,dp\right)\,d{\rm P}(\mu)\leq C(2\pi)^{d}$$
(hence $\int_{\R^{2d}} |p|^2\mu(x,p)\,dx\,dp$ is finite $\rm P$-almost surely) and the $k$-particle densities converge weakly
\begin{equation}
\int_{\R^{dN}}U(x_1,...,x_k)|\Psi_n(x_1,...,x_N)|^2dx_1\cdots dx_N\to\int_\cB \left(\int_{\R^{dk}}\rho_\mu^{\otimes k} U\right)\;d{\rm P}(\mu)
\label{eq:def_weak_limit_density}
\end{equation}
for every $k\geq1$ and every $U\in L^{1+d/2}(\R^{d})+L^\ii_\epsilon(\R^{d})$ for $k=1$ and $U\in C^0_0(\R^{dk})$ for $k\geq2$, where
$$\rho_\mu(x)=\frac{1}{(2\pi)^d}\int_{\R^d}\mu(x,p)\,dp.$$
\end{theorem}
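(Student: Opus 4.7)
The plan is to establish the structural properties of the Husimi hierarchy $\{m^{(k)}_{f,\Psi_N}\}_{k=1}^N$, extract a weak-$\ast$ convergent subsequence, apply a fermionic de Finetti-Hewitt-Savage theorem in phase space, and transfer the resulting representation to the Wigner measures and to the densities. I would first verify, directly from~\eqref{eq:def_m_k_Psi_tensor_product} and the CAR relations, four key properties of $m^{(k)}_{f,\Psi_N}$: symmetry in its $k$ phase-space arguments; the pointwise Pauli bound $0\leq m^{(k)}_{f,\Psi_N}\leq 1$, which follows from writing $m^{(k)}_{f,\Psi_N}=\|a(f^\hbar_{x_k,p_k})\cdots a(f^\hbar_{x_1,p_1})\Psi_N\|^2$ and using $\|a(g)\|\leq\|g\|$ on antisymmetric Fock space; the exact normalization $(2\pi)^{-dk}\int m^{(k)}_{f,\Psi_N} = N!/[(N-k)!\,N^k]$ via the resolution of identity~\eqref{eq:resolution_of_identity_intro}; and the consistency relation under marginalization in $(x_k,p_k)$. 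Banach-Alaoglu and a diagonal extraction then give a single subsequence $\{N_j\}$ along which $m^{(k)}_{f,\Psi_{N_j}}$ converges weakly-$\ast$ in $L^\infty(\R^{2dk})$ to a symmetric limit $m^{(k)}_\infty$, still bounded by $1$, consistent, and of total mass $\leq 1$.

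Next I would apply the fermionic weak de Finetti-Hewitt-Savage theorem (stated later in the paper) to the consistent symmetric sub-probability hierarchy $(m^{(k)}_\infty)_{k\geq1}$. This produces a probability measure $\rm P$ on the set of sub-probability measures on $\R^{2d}$ such that $m^{(k)}_\infty = \int\mu^{\otimes k}\,d{\rm P}(\mu)$ for every $k$. The crucial point is to show that $\rm P$ is supported on $\cB$. The mass constraint $(2\pi)^{-d}\int\mu\leq 1$ follows from $(2\pi)^{-dk}\int m^{(k)}_\infty\leq 1$ and Fatou's lemma applied to $((2\pi)^{-d}\int\mu)^k$ as $k\to\infty$. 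The pointwise bound $\mu\leq 1$ is extracted from $m^{(k)}_\infty\leq 1$ by averaging: for any bounded measurable $B\subset\R^{2d}$,
\[
\int_\cB \Big(|B|^{-1}\textstyle\int_B\mu\Big)^k d{\rm P}(\mu) = |B|^{-k}\int_{B^k}m^{(k)}_\infty \leq 1
\]
so Fatou as $k\to\infty$ forces $|B|^{-1}\int_B\mu\leq 1$ for $\rm P$-a.e.\ $\mu$, and Lebesgue differentiation on a countable family of balls gives $\mu\leq 1$ a.e. Strengthening the weak-$\ast$ convergence in $L^\infty$ to weak convergence against $\phi\in L^1+L^\infty_\epsilon$ follows from the uniform $L^1\cap L^\infty$ control and a truncation/Dunford-Pettis argument.

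For the Wigner limit~\eqref{eq:def_weak_limit_Wigner2} I would use the identity expressing $m^{(k)}_{f,\Psi_N}$ as the $k$-fold phase-space convolution of $\mathscr{W}^{(k)}_{\Psi_N}$ with the Wigner transform of $|f^\hbar_{0,0}\rangle\langle f^\hbar_{0,0}|$, a fixed kernel that concentrates on the diagonal as $\hbar\to 0$; testing against $\phi$ satisfying~\eqref{eq:ass_Hwang} and using Hwang's version of the Calder\'on-Vaillancourt bound gives $|\int\phi\,(\mathscr{W}^{(k)}_{\Psi_N}-m^{(k)}_{f,\Psi_N})|=O(\hbar)$, so the Wigner limit coincides with the Husimi one. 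For the density convergence~\eqref{eq:def_weak_limit_density}, the kinetic energy bound~\eqref{eq:semi-classical_bound} translates, via the coherent-state quantization of $-\Delta$, into a uniform estimate $(2\pi)^{-d}\int|p|^2 m^{(1)}_{f,\Psi_N}\leq C+o(1)$; passing this to the limit yields $\int|p|^2\mu\,dx\,dp<\infty$ $\rm P$-a.s.\ and prevents momentum mass from escaping. Integrating out $p_1,\ldots,p_k$ and treating the residual $x$-convolution by $|f|^2(\cdot/\sqrt\hbar)\hbar^{-d/2}$ (also a Dirac as $\hbar\to 0$) then gives~\eqref{eq:def_weak_limit_density}; the upgrade to $L^{1+d/2}$ duality at $k=1$ uses the Lieb-Thirring inequality to obtain uniform $L^{1+2/d}$ bounds on the one-particle density.

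The main obstacle, and the genuinely new input of the section, is the transfer of the pointwise Pauli bound from the finite-$N$ hierarchy to the almost-sure bound $\mu\leq 1$ on the de Finetti measure; this is precisely what distinguishes the fermionic phase-space statement from its bosonic counterpart in~\cite{LewNamRou-14}, and it relies on the combination of the uniform $L^\infty$-bound with the consistency of the hierarchy. The remaining steps are standard semi-classical analysis, with some care needed in the Wigner step to accommodate the test function class~\eqref{eq:ass_Hwang}.
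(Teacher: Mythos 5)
Your outline of the elementary properties (symmetry, the Pauli bound, the normalization, the finite-$N$ marginal consistency) and of the Wigner/density transfer matches the paper. But there is a genuine gap at the central step: you assert that the weak-$\ast$ limit hierarchy $(m^{(k)}_\infty)_{k\geq1}$ is \emph{consistent}, i.e.\ satisfies $(2\pi)^{-d}\int m^{(k)}_\infty\,dx_k\,dp_k=m^{(k-1)}_\infty$, and you then feed it into a de Finetti--Hewitt--Savage theorem. This consistency is false in general: the marginalization integrates over all of $\R^{2d}$ in the last variable, and under mere weak convergence mass can escape to infinity, so the marginal of the weak limit is only $\leq$ the weak limit of the marginal. (Indeed, if consistency with $m^{(0)}=1$ did hold, the limit measures would all have mass exactly $1$, contradicting the conclusion you are proving, which only asserts $(2\pi)^{-d}\int\mu\leq1$; your own phrase ``consistent, and of total mass $\leq1$'' is internally inconsistent.) The classical de Finetti--Hewitt--Savage theorem, and the paper's Theorem~\ref{thm:deFinetti_strong}, require exact consistency and only cover the tight case. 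The whole point of the paper's Theorem~\ref{thm:deFinetti_weak} is to handle the non-consistent case, and its proof is the technical core of the section: one works with the finite-$N$ densities, localizes to balls $B_R$, decomposes according to the number $n$ of particles inside $B_R$, introduces the fraction $t=n/N$ as an extra variable, passes to the limit in this augmented hierarchy, and only then assembles the measure $\rm P$ on $\cB$ by a projective limit over $R$. Your proposal skips this entirely, and what you identify as ``the genuinely new input'' (transferring $m^{(k)}\leq1$ to $\mu\leq1$ $\rm P$-a.s.) is in fact the easy part, done in a few lines in the proof of Theorem~\ref{thm:deFinetti_strong}.

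A second, smaller omission: the statement requires a \emph{single} measure $\rm P$ that works for \emph{every} normalized real-valued $f\in L^2(\R^d)$. Extracting a subsequence for one $f$ (or diagonally over countably many) does not give this; one must prove that the limit is independent of $f$. The paper devotes Lemma~\ref{lem:m_do_not_depend_on_f} to this, comparing the coherent-state quantizations for two different $f,g$ and showing $\norm{{\rm Op}_f^{k,\hbar}(\phi)-{\rm Op}_g^{k,\hbar}(\phi)}\leq C\hbar^{kd}\sqrt{\hbar}$ for factorized Schwartz test functions, after reducing to smooth compactly supported $f,g$ via $\norm{m^{(k)}_{f,\Psi_N}-m^{(k)}_{\tilde f,\Psi_N}}_{L^\ii}\leq 2k\|f-\tilde f\|_{L^2}$. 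Without this step the theorem as stated is not proved. The rest of your plan (Fatou/averaging for the support of $\rm P$, the Gaussian convolution link to Wigner measures with Calder\'on--Vaillancourt, the Lieb--Thirring upgrade for $k=1$) is in line with the paper.
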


The result says that, whatever the sequence $\{\Psi_N\}$, the semi-classical measures always get factorized in the limit $N\to\ii$ or, more precisely, they are a convex combination of factorized phase-space measures uniformly bounded by $1$ and with a mass $\leq1$. In the rest of this section, we first derive some elementary properties of the measures $m^{(k)}_{f,\Psi_{N}}$, then we state a general convergence theorem in the spirit of the classical de Finetti-Hewitt-Savage theorem, before we provide the detailed proof of Theorem~\ref{thm:subsequences_m_k} in Section~\ref{sec:proof_thm_subsequences_m_k}.

\subsection{Measures on phase space for $N$-body states}

In this section, we recall the elementary properties of the measures $m^{(k)}_{f,\Psi_{N}}$. Some of the results of this section are well-known, but we gather them all here for the convenience of the reader.

\subsubsection*{Coherent states}
As before, we fix a real-valued normalized function $f\in L^2(\R^d)$ and define 
\begin{equation}
f_{x,p}^ \hbar(y)=\hbar^{-\frac{d}{4}}f\left(\frac{y-x}{\sqrt\hbar}\right)e^{i \frac{p\cdot y}{\hbar}},
\label{eq:f_x_p2}
\end{equation}
as well as
\begin{equation}
f^\hbar:=f^\hbar_{0,0}=\hbar^{-d/4}f(\cdot/\sqrt{\hbar})\quad\text{and}\quad g^\hbar:=\cF_\hbar[f^\hbar]=\hbar^{-d/4}\widehat{f}(\cdot/\sqrt{\hbar}).
\label{eq:def_f_hbar_g_hbar}
\end{equation}
We will later have to take $\hbar=N^{-1/d}$ but for the moment $\hbar$ could be arbitrary. We recall the resolution of the identity
\begin{equation}
\frac{1}{(2\pi\hbar)^{d}}\int_{\R^d}\int_{\R^d}|f^\hbar_{x,p}\rangle\langle f^\hbar_{x,p}|\,dx\,dp=1.
\label{eq:resolution_of_identity}
\end{equation}
The latter follows from the useful formulas
\begin{align}
\langle f^\hbar_{x,p},u\rangle&=\int_{\R^d}f^\hbar(y-x)u(y)e^{-i \frac{p\cdot y}{\hbar}}\,dy\nn\\
&=(2\pi\hbar)^{d/2}\cF_\hbar\left[f^\hbar(\cdot-x)u\right](p)\label{eq:scalar_product_coherent_state}\\
&=\int_{\R^d}g^\hbar(k-p)\cF_\hbar[u](k)e^{-i \frac{k\cdot x}{\hbar}}\,dk\nn\\
&=(2\pi\hbar)^{d/2}\cF_\hbar\left[g^\hbar(\cdot-p)\cF_\hbar[u]\right](x)\nn
\end{align}
which imply immediately that 
\begin{align*}
\frac{1}{(2\pi\hbar)^{d}}\int_{\R^d}\int_{\R^d}|\langle f^\hbar_{x,p},u\rangle|^2dx\,dp
&=\int_{\R^d}\left(\int_{\R^d}\left|\cF_\hbar\left[f^\hbar(\cdot-x)\,u\right](p)\right|^2\,dp\right)dx\nn\\
&=\int_{\R^d}\left(\int_{\R^d}\left|f^\hbar(y-x)u(y)\right|^2\,dy\right)dx\nn\\
&=\|f^\hbar\|_{L^2(\R^d)}^2\norm{u}_{L^2(\R^d)}^2=\norm{u}_{L^2(\R^d)}^2.
\end{align*}

\subsubsection*{Semi-classical measures on phase space}
Next, we recall the definition~\eqref{eq:def_m_k_Psi_intro} of the $k$-particle semi-classical (Husimi) measure
\begin{multline}
m^{(k)}_{f,\Psi_N}(x_1,p_1,...,x_k,p_k)\\
:=\pscal{\Psi_N,a^*(f^\hbar_{x_1,p_1})\cdots a^*(f^\hbar_{x_k,p_k})a(f^\hbar_{x_k,p_k})\cdots a(f^\hbar_{x_1,p_1})\Psi_N},
\label{eq:def_m_k_Psi}
\end{multline}
for $k=1,...,N$. Here $a^*(f)$ and $a(f)$ are, respectively, the fermionic creation and annihilation operators, satisfying the Canonical Anticommutation Relations~\cite{DerGer-13}
\begin{equation}
\begin{cases}
  a^*(f)a(g)+a(g)a^*(f)=\pscal{g,f},\\
  a^*(f)a^*(g)+a^*(g)a^*(f)=0.
\end{cases}
\label{eq:CAR}
\end{equation}
% Equivalently, we can write 
% $$m^{(k)}_{f,\Psi_N}(x_1,p_1,...,x_k,p_k)=\pscal{f^\hbar_{x_1,p_1}\wedge\cdots\wedge f^\hbar_{x_k,p_k},\Gamma^{(k)}_{\Psi_N}\,f^\hbar_{x_1,p_1}\wedge\cdots\wedge f^\hbar_{x_k,p_k}}$$
% where $\Gamma^{(k)}_{\Psi_N}$ is the $k$-particle density matrix of $\Psi_N$ and where
% $$f_1\wedge\cdots\wedge f_k(y_1,...,y_k):=\frac{1}{\sqrt{k!}}\det(f_i(x_j))$$
% is the Slater determinant. 
% \begin{multline*}
% m^{(k)}_{f,\Psi_N}(x_1,p_1,...,x_k,p_k)=N(N-1)\cdots (N-k+1)\int_{\R^d}\cdots \int_{\R^d}\\
% \left|\int_{\R^d}\cdots\int_{\R^d} f_{x_1,p_1}^\hbar(y_1)\cdots f_{x_k,p_k}^\hbar(y_k)\overline{\Psi(y_1,...,y_N)}\,dy_1\cdots dy_k\right|^2dy_{k+1}\cdots dy_{N}
% \end{multline*}
% and
See~\eqref{eq:def_m_k_Psi_tensor_product} and~\eqref{eq:def_m_k_Psi_integral} for two other formulas for $m^{(k)}_{f,\Psi_N}$.
% \begin{multline}
% m^{(k)}_{f,\Psi_N}(x_1,p_1,...,x_k,p_k)
% =N(N-1)\cdots (N-k+1)\times\\ \times\pscal{\Psi_N,\big(P_{x_1,p_1}^\hbar\otimes\cdots \otimes P_{x_k,p_k}^\hbar\otimes \1_{N-k}\big)\,\Psi_N}_{L^2(\R^{dk})}
% \label{eq:def_m_k_Psi_tensor_product}
% \end{multline}
% with $P_{x,p}^\hbar:=|f^\hbar_{x,p}\rangle\langle f_{x,p}^\hbar|$. This can also be rewritten as
% \begin{align}
% &m_{f,\Psi_N}^{(k)}(x_1,p_1,...,x_k,p_k)\nn\\
% %&\qquad=k!{N\choose k}\int_{\R^{d(N-k)}}d\by\pscal{\Psi(\cdot,\by),P^\hbar_{x_1,p_1}\otimes\cdots\otimes P^\hbar_{x_k,p_k}\otimes \1_{N-k}\Psi(\cdot,\by)}_{L^2((\R^d)^k)}\nn\\
% &\qquad=k!{N\choose k}\int_{\R^{d(N-k)}}\left|\pscal{f^\hbar_{x_1,p_1}\otimes\cdots\otimes f^\hbar_{x_k,p_k},\Psi_N(\cdot,\by)}_{L^2((\R^d)^k)}\right|^2\,d\by.\label{eq:def_m_k_Psi_integral}
% \end{align}
The following is a simple consequence of these formulas and of the fact that $\Psi_N$ is an antisymmetric wave function.
  
\begin{lemma}[Elementary properties of the phase space measures]\label{lem:prop_m}
Let $f\in L^2(\R^d,\R)$ and $\Psi_N\in\bigwedge_1^NL^2(\R^d)$ be two normalized functions. Then for every $1\leq k\leq N$, the function $m^{(k)}_{f,\Psi_N}$ defined in~\eqref{eq:def_m_k_Psi} is symmetric and satisfies
\begin{equation}
0\leq m^{(k)}_{f,\Psi_N}\leq 1\quad\text{a.e. on $\R^{2dk}$},
\label{eq:m_k_Psi_unif_bound}
\end{equation}
\begin{multline}
\frac{1}{(2\pi)^{dk}}\int_{\R^{2dk}}m^{(k)}_{f,\Psi_N}(x_1,p_1,...,x_k,p_k)\,dx_1\cdots dp_k\\=N(N-1)\cdots (N-k+1)\hbar^{dk},
\label{eq:m_k_Psi_normalization}
\end{multline}
and
\begin{multline}
\frac{1}{(2\pi)^{d}}\int_{\R^{2d}}m^{(k)}_{f,\Psi_N}(x_1,p_1,...,x_k,p_k)\,dx_k\,dp_k\\=\hbar^{d}(N-k+1) m^{(k-1)}_{f,\Psi_N}(x_1,p_1,...,x_{k-1},p_{k-1}).
\label{eq:m_k_Psi_marginal}
\end{multline}
\end{lemma}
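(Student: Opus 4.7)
The plan is to exploit the three equivalent expressions \eqref{eq:def_m_k_Psi_intro}, \eqref{eq:def_m_k_Psi_tensor_product}, \eqref{eq:def_m_k_Psi_integral} for $m^{(k)}_{f,\Psi_N}$ established above, choosing whichever is most convenient for each of the four claims (symmetry, nonnegativity, upper bound, total and marginal masses), together with the resolution of identity \eqref{eq:resolution_of_identity} and the canonical anticommutation relations \eqref{eq:CAR}. Symmetry under permutations of the $k$ arguments is easiest from the tensor-product form \eqref{eq:def_m_k_Psi_tensor_product}: a transposition $(x_i,p_i)\leftrightarrow (x_j,p_j)$ amounts to conjugating the operator by the corresponding permutation of the $N$ variables in $L^2(\R^{dN})$, and the two resulting sign factors cancel by antisymmetry of $\Psi_N$. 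Equivalently, using the CAR in \eqref{eq:def_m_k_Psi_intro}, an adjacent transposition anticommutes two $a^*$'s and two $a$'s and produces $(-1)^2=1$. Nonnegativity is then immediate from formula \eqref{eq:def_m_k_Psi_integral}, which writes $m^{(k)}_{f,\Psi_N}$ as $N!/(N-k)!$ times the integral of a squared modulus.

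For the upper bound $m^{(k)}_{f,\Psi_N}\leq 1$, I read \eqref{eq:def_m_k_Psi_intro} as
$$ m^{(k)}_{f,\Psi_N}(x_1,p_1,\ldots,x_k,p_k)=\bigl\|a(f^\hbar_{x_k,p_k})\cdots a(f^\hbar_{x_1,p_1})\Psi_N\bigr\|^2. $$
The CAR yield $a(g)^*a(g)+a(g)a(g)^*=\|g\|^2$, so $a(g)^*a(g)\leq \|g\|^2$ as operators and $\|a(g)\|_{\mathrm{op}}=\|g\|$. A simple change of variables gives $\|f^\hbar_{x,p}\|_{L^2(\R^d)}=\|f\|=1$, hence each $a(f^\hbar_{x_i,p_i})$ is a contraction; the composition of $k$ contractions applied to the normalized $\Psi_N$ produces a vector of $L^2$-norm at most $1$. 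This is the one place in the argument where the fermionic (Pauli) structure is genuinely used, since for bosons the annihilation operators are unbounded.

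The marginal identity \eqref{eq:m_k_Psi_marginal} then follows by integrating \eqref{eq:def_m_k_Psi_tensor_product} over the last pair $(x_k,p_k)$ and applying \eqref{eq:resolution_of_identity} to the factor $P^\hbar_{x_k,p_k}$, namely $\int_{\R^{2d}}P^\hbar_{x_k,p_k}\,dx_k\,dp_k=(2\pi\hbar)^d\,\1$. The resulting quantity
$$ \tfrac{N!}{(N-k)!}(2\pi\hbar)^d\,\langle\Psi_N,\,P^\hbar_{x_1,p_1}\otimes\cdots\otimes P^\hbar_{x_{k-1},p_{k-1}}\otimes \1_{N-k+1}\,\Psi_N\rangle $$
equals $(2\pi\hbar)^d(N-k+1)\,m^{(k-1)}_{f,\Psi_N}$ after writing $\tfrac{N!}{(N-k)!}=(N-k+1)\tfrac{N!}{(N-k+1)!}$ and invoking \eqref{eq:def_m_k_Psi_tensor_product} at order $k-1$. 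Dividing by $(2\pi)^d$ yields \eqref{eq:m_k_Psi_marginal}, and iterating this recursion down to $m^{(0)}_{f,\Psi_N}=\|\Psi_N\|^2=1$ produces the total mass identity \eqref{eq:m_k_Psi_normalization}. I do not anticipate any serious obstacle; the only point deserving care is justifying the interchange of integration and quadratic form when inserting \eqref{eq:resolution_of_identity}, which is harmless because one may first pass to the pointwise nonnegative integrand in \eqref{eq:def_m_k_Psi_integral} and apply Fubini before rewriting the integral as a scalar product.
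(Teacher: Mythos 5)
Your proposal is correct and follows essentially the same route as the paper: symmetry from the CAR (or equivalently from the tensor-product formula and the antisymmetry of $\Psi_N$), the bound $0\le m^{(k)}_{f,\Psi_N}\le1$ from $0\le a^*(g)a(g)\le\|g\|^2=1$, and the marginal/normalization identities from the representation \eqref{eq:def_m_k_Psi_tensor_product} together with the resolution of the identity \eqref{eq:resolution_of_identity}. You merely spell out the details (iterating the contraction bound, the recursion down to $m^{(0)}=1$, and the Fubini justification) that the paper leaves implicit.
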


\begin{proof}
The symmetry of $m^{(k)}_{f,\Psi_N}$ follows from the fact that two creation (resp. annihilation) operators anti-commute due to~\eqref{eq:CAR} or, equivalently, from the formula~\eqref{eq:def_m_k_Psi_tensor_product} and the fact that $\Psi_N$ is anti-symmetric. Similarly, the uniform bound~\eqref{eq:m_k_Psi_unif_bound} is a consequence of the operator inequality $0\leq a^*(f)a(f)\leq 1$. The formulas~\eqref{eq:m_k_Psi_marginal} and~\eqref{eq:m_k_Psi_normalization} follow from the representation~\eqref{eq:def_m_k_Psi_tensor_product} and the resolution of the identity~\eqref{eq:resolution_of_identity}, which can be rewritten as $(2\pi\hbar)^{-d}\iint_{\R^{2d}}P^\hbar_{x,p}\,dx\,dp=1$.
\end{proof}

From~\eqref{eq:m_k_Psi_normalization} and~\eqref{eq:m_k_Psi_marginal}, the choice
$\hbar=N^{-1/d}$ 
arises naturally, as it makes $(2\pi)^{-dk}m^{(k)}_{f,\Psi_N}$ a probability measure in the limit $N\to\ii$ for all $k\geq1$.

\begin{remark}
The measures $m^{(k)}_{f,\Psi_N}$ can be defined in a similar manner for bosons. Only the uniform bound~\eqref{eq:m_k_Psi_unif_bound} (which is the expression of Pauli's principle here) will be lost.
\end{remark}

%%%%%%%%%%%%%%%%%%%%%%%%%%%%%%%%%%%
\subsubsection*{Link with $k$-particle densities}
%%%%%%%%%%%%%%%%%%%%%%%%%%%%%%%%%%%

For any fixed (normalized) fermionic function $\Psi_N$, we denote by
\begin{equation}
\rho^{(k)}_{\Psi_N}(x_1,...,x_k):={N\choose k}\int_{\R^d}\cdots\int_{\R^d}|\Psi_N(x_1,...,x_N)|^2\,dx_{k+1}\cdots dx_N
\label{eq:def_rho_k}
\end{equation}
and
\begin{equation}
t^{(k)}_{\Psi_N}(p_1,...,p_k):={N\choose k}\int_{\R^d}\cdots\int_{\R^d}|\cF_\hbar[\Psi_N](p_1,...,p_N)|^2\,dp_{k+1}\cdots dp_N
\label{eq:def_t_k}
\end{equation}
the position and momentum densities for $k$ particles, with $1\leq k\leq N$. For later purposes, we also define the $k$-particle density matrix of $\Psi_N$ which is the operator $\gamma^{(k)}_{\Psi_N}$ with integral kernel 
\begin{multline}
\gamma^{(k)}_{\Psi_N}(x_1,...,x_k;x_1',...,x_k'):={N\choose k}\int_{\R^d}\cdots\int_{\R^d}\Psi_N(x_1,...,x_N)\times\\ \times\overline{\Psi_N(x'_1,...,x'_k,x_{k+1},...,x_N)}\,dx_{k+1}\cdots dx_N.
\label{eq:def_gamma_k}
\end{multline}
All the physical observables can be expressed in terms of these objects only. The following gives a link with the semi-classical measures we have defined in the previous section.

\begin{lemma}[Particle densities and fermionic semi-classical measures]\label{lem:link_PDM}
Let $f$ be any normalized function in $L^2(\R^d,\R)$ and $f^\hbar,g^\hbar$ be defined as in~\eqref{eq:def_f_hbar_g_hbar}. Then we have
\begin{equation}
\frac{1}{(2\pi)^{dk}}\int_{(\R^d)^k}m_{f,\Psi_N}^{(k)}(x_1,p_1,...,x_k,p_k)\,dp_1\cdots dp_k=k!\,\hbar^{dk}\;\rho^{(k)}_{\Psi_N}\ast \big(|f^\hbar|^2\big)^{\otimes k}
\label{eq:link_PDM_x}
\end{equation}
and
\begin{equation}
\frac{1}{(2\pi)^{dk}}\int_{(\R^d)^k}m_{f,\Psi_N}^{(k)}(x_1,p_1,...,x_k,p_k)\,dx_1\cdots dx_k=k!\,\hbar^{dk}\;t^{(k)}_{\Psi_N}\ast \big(|g^\hbar|^2\big)^{\otimes k}
\label{eq:link_PDM_p}
\end{equation}
for any (normalized) fermionic function $\Psi_N$.
\end{lemma}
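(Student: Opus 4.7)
The plan is to start from the tensorial representation \eqref{eq:def_m_k_Psi_integral} of the Husimi measure and to integrate out the momentum variables (resp. the position variables) by a direct application of Plancherel's theorem, using the two equivalent expressions of $\langle f_{x,p}^\hbar, u\rangle$ given in \eqref{eq:scalar_product_coherent_state}.

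More precisely, for fixed $\by = (x_{k+1},\dots,x_N)$, write $u_{\by}(y_1,\dots,y_k) := \Psi_N(y_1,\dots,y_k,\by)$. Iterating the first identity of \eqref{eq:scalar_product_coherent_state} in each variable yields
\[
\langle f^\hbar_{x_1,p_1}\otimes\cdots\otimes f^\hbar_{x_k,p_k}, u_{\by}\rangle
= (2\pi\hbar)^{dk/2}\,\cF_\hbar^{\otimes k}\bigl[\,f^\hbar(\cdot - x_1)\cdots f^\hbar(\cdot - x_k)\,u_{\by}\,\bigr](p_1,\dots,p_k).
\]
Plugging this into \eqref{eq:def_m_k_Psi_integral}, integrating over $(p_1,\dots,p_k)$, and applying Plancherel on each momentum variable, the modulus squared becomes
\[
(2\pi\hbar)^{dk}\prod_{\ell=1}^k |f^\hbar(y_\ell - x_\ell)|^2\;|u_{\by}(y_1,\dots,y_k)|^2.
\]
A use of Fubini on the remaining $\by$-integral then produces $\int |\Psi_N(y_1,\dots,y_k,\by)|^2\,d\by$, which by the definition \eqref{eq:def_rho_k} equals $\binom{N}{k}^{-1}\rho^{(k)}_{\Psi_N}(y_1,\dots,y_k)$. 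Combining this with the combinatorial prefactor $N!/(N-k)!=k!\binom{N}{k}$, and recognizing the remaining $(y_1,\dots,y_k)$-integral as the convolution $\bigl[\rho^{(k)}_{\Psi_N}\ast (|f^\hbar|^2)^{\otimes k}\bigr](x_1,\dots,x_k)$ (which is legitimate since $f$, hence $|f^\hbar|^2$, is real-valued and the convolution is understood pointwise), one obtains \eqref{eq:link_PDM_x} after dividing by $(2\pi)^{dk}$.

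For \eqref{eq:link_PDM_p}, the argument is entirely symmetric in the Fourier picture: I instead use the second form of \eqref{eq:scalar_product_coherent_state},
\[
\langle f^\hbar_{x,p}, u\rangle = (2\pi\hbar)^{d/2}\,\cF_\hbar\bigl[g^\hbar(\cdot - p)\,\cF_\hbar[u]\bigr](x),
\]
take the tensor product over the $k$ variables, integrate over $(x_1,\dots,x_k)$ using Plancherel, and exchange the order of integration to bring out the momentum density $t^{(k)}_{\Psi_N}$ defined by \eqref{eq:def_t_k}. The resulting factor is $k!\hbar^{dk}$ times $t^{(k)}_{\Psi_N}\ast (|g^\hbar|^2)^{\otimes k}$ evaluated at $(p_1,\dots,p_k)$.

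No serious obstacle is anticipated: the argument only uses Plancherel, the anti-symmetry of $\Psi_N$ (already encoded in the definitions of $\rho^{(k)}$ and $t^{(k)}$), and Fubini. The only care needed is keeping track of the $(2\pi\hbar)^{d/2}$ factors in $\cF_\hbar$ and of the combinatorial factor $N!/((N-k)!k!)$ that converts $\binom{N}{k}^{-1}\Psi_N$-integrals into the normalized densities $\rho^{(k)}_{\Psi_N}$ and $t^{(k)}_{\Psi_N}$, and justifying Fubini by the non-negativity of all the integrands involved.
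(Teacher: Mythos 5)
Your proposal is correct and follows essentially the same route as the paper: starting from the integral representation \eqref{eq:def_m_k_Psi_integral}, rewriting the coherent-state overlaps via \eqref{eq:scalar_product_coherent_state}, and then applying Plancherel and Fubini to integrate out the momentum (resp.\ position) variables, with the same bookkeeping of the $(2\pi\hbar)^{d/2}$ and $k!\binom{N}{k}$ factors. No gap to report.
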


\begin{proof}
As in~\eqref{eq:scalar_product_coherent_state}, we start by noticing that, for every fixed $\by\in (\R^d)^{N-k}$,
\begin{align*}
&\pscal{f^\hbar_{x_1,p_1}\otimes\cdots\otimes f^\hbar_{x_k,p_k},\Psi_N(\cdot,\by)}_{L^2((\R^d)^k)}\\
&\qquad\qquad=(2\pi\hbar)^{dk}\cF_{\hbar}\left[f^\hbar_{x_1,0}\otimes\cdots\otimes f^\hbar_{x_k,0} \Psi_N(\cdot,\by)\right](p_1,...,p_k)\\
&\qquad\qquad=(2\pi\hbar)^{dk}\cF_{\hbar}\left[g^\hbar_{0,p_1}\otimes\cdots\otimes g^\hbar_{0,p_k} \cF_\hbar[\Psi_N](\cdot,\by)\right](x_1,...,x_k).
\end{align*}
Next we compute the integral of $m^{(k)}_{f,\Psi}$ over the $p_j$'s, using~\eqref{eq:def_m_k_Psi_tensor_product}:
\begin{align*}
&\int_{(\R^d)^k}m_{f,\Psi_N}^{(k)}(x_1,p_1,...,x_k,p_k)\,dp_1\cdots dp_k\\
&\ =k!{N\choose k}\int_{(\R^d)^k}dp_1\cdots dp_k\int_{\R^{d(N-k)}}d\by\left|\pscal{f^\hbar_{x_1,p_1}\otimes\cdots\otimes f^\hbar_{x_k,p_k},\Psi_N(\cdot,\by)}\right|^2\\
&\ =(2\pi\hbar)^{dk}k!{N\choose k}\int_{(\R^d)^k}dp_1\cdots dp_k\int_{\R^{d(N-k)}}d\by\\
&\qquad\qquad\times\left|\cF_{\hbar}\left[f^\hbar_{x_1,0}\otimes\cdots\otimes f^\hbar_{x_k,0} \Psi_N(\cdot,\by)\right](p_1,...,p_k)\right|^2\\
&\ =(2\pi\hbar)^{dk}k!{N\choose k}\int_{\R^{dN}}\left|f^\hbar(y_1-x_1)\cdots f^\hbar(y_k-x_k)\Psi_N(y)\right|^2dy_1\cdots dy_N\\
&\ =(2\pi\hbar)^{dk}k!\;\rho^{(k)}_{\Psi_N}\ast \big(|f^\hbar|^2\big)^{\otimes k}(x_1,...,x_k).
\end{align*}
The proof of~\eqref{eq:link_PDM_p} is similar.
% Similarly, we have
% \begin{align*}
% &\int_{(\R^d)^k}m_{f,\Psi_N}^{(k)}(x_1,p_1,...,x_k,p_k)\,dx_1\cdots dx_k\\
% &\quad=(2\pi\hbar)^{dk}k!{N\choose k}\int_{\R^{dN}}\left|g^\hbar(q_1-p_1)\cdots g^\hbar(q_k-p_k)\cF_\hbar[\Psi](q)\right|^2dq_1\cdots dq_N\\
% &\quad=(2\pi\hbar)^{dk}k!\;t^{(k)}_{\Psi_N}\ast \big(|g^\hbar|^2\big)^{\otimes k}(p_1,...,p_k).
% \end{align*}
\end{proof}

A simple consequence of~\eqref{eq:link_PDM_p} is the well-known formula for the kinetic energy and its generalization with magnetic field (see, e.g.,~\cite[Eq. (5.20)]{Lieb-81b}):

\begin{corollary}[Kinetic energy]\label{cor:kinetic}
If $f\in H^1(\R^3,\R)$ and $|A|^2\in L^{1+d/2}(\R^d)+L^\ii_\eps(\R^d)$, we have
% \begin{multline}
% \pscal{\Psi_N,\bigg(\sum_{j=1}^N-\hbar^2\Delta_j\bigg)\Psi_N}\\
% =\frac{1}{(2\pi\hbar)^{d}}\int_{\R^d}\int_{\R^d}|p|^2m_{f,\Psi_N}^{(1)}(x,p)\,dx\,dp-N\hbar\int_{\R^d}|\nabla f|^2.
% \label{eq:formula_kinetic_energy}
% \end{multline}
% and
\begin{multline}
\pscal{\Psi_N,\bigg(\sum_{j=1}^N(-i\hbar\nabla_{j}+A(x_j))^2\bigg)\Psi_N}\\
=\frac{1}{(2\pi\hbar)^{d}}\int_{\R^d}\int_{\R^d}|p+A(x)|^2m_{f,\Psi_N}^{(1)}(x,p)\,dx\,dp-N\hbar\int_{\R^d}|\nabla f|^2\\
+2\Re\pscal{\Psi_N,\bigg(\sum_{j=1}^N(A-A\ast|f^\hbar|^2)(x_j)\cdot (-i\hbar\nabla_j)\bigg)\Psi_N}\\
+\pscal{\Psi_N,\bigg(\sum_{j=1}^N(|A|^2-|A|^2\ast|f^\hbar|^2)(x_j)\bigg)\Psi_N}.
\label{eq:formula_kinetic_energy_magnetic_field}
\end{multline}
\end{corollary}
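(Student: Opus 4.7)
The plan is to reduce each term in $(2\pi\hbar)^{-d}\iint |p+A(x)|^2 m^{(1)}_{f,\Psi_N}(x,p)\,dx\,dp$ to the expectation value of a one-body operator against $\Psi_N$, by exploiting that $m^{(1)}_{f,\Psi_N}$ is the lower (anti-Wick) symbol of the one-particle density matrix $\gamma^{(1)}_{\Psi_N}$. From~\eqref{eq:def_m_k_Psi_tensor_product} and~\eqref{eq:def_gamma_k} with $k=1$ one has $m^{(1)}_{f,\Psi_N}(x,p)=\langle f^\hbar_{x,p},\gamma^{(1)}_{\Psi_N}f^\hbar_{x,p}\rangle$, hence for any symbol $\phi$,
$$\frac{1}{(2\pi\hbar)^d}\iint\phi(x,p)m^{(1)}_{f,\Psi_N}\,dx\,dp=\tr\!\Big(\gamma^{(1)}_{\Psi_N}T_\phi\Big),\qquad T_\phi:=\frac{1}{(2\pi\hbar)^d}\iint\phi(x,p)\,|f^\hbar_{x,p}\rangle\langle f^\hbar_{x,p}|\,dx\,dp.$$
Expanding $|p+A(x)|^2=|p|^2+2p\cdot A(x)+|A(x)|^2$, the problem reduces to identifying the three Toeplitz operators $T_{|p|^2}$, $T_{p\cdot A}$ and $T_{|A|^2}$ as concrete differential/multiplication operators on $L^2(\R^d)$.

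For each symbol I would use the identity~\eqref{eq:scalar_product_coherent_state} and Parseval's formula for $\cF_\hbar$. The symbol $|A(x)|^2$ is independent of $p$, so the $p$-integration in $T_{|A|^2}$ collapses to the resolution of identity and yields the multiplication operator $T_{|A|^2}=|A|^2\ast|f^\hbar|^2$ (with the convention forced by the real/even structure of $|f^\hbar|^2$). For $|p|^2$ one rewrites $p_j\,\cF_\hbar[g]=\cF_\hbar[-i\hbar\partial_j g]$ and integrates by parts in $y$; after the cross-terms $\int f^\hbar\,\partial_j f^\hbar = \tfrac12\int\partial_j|f^\hbar|^2=0$ vanish (here the assumption $f\in H^1(\R^d,\R)$ enters), the remaining pieces produce
$$T_{|p|^2}=-\hbar^2\Delta+\hbar\int_{\R^d}|\nabla f|^2.$$
Tracing against $\gamma^{(1)}_{\Psi_N}$ (of trace $N$) gives $\langle\Psi_N,\sum_j(-i\hbar\nabla_j)^2\Psi_N\rangle+N\hbar\int|\nabla f|^2$, which accounts for the $-N\hbar\int|\nabla f|^2$ subtracted on the right-hand side of the claimed identity.

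The main obstacle is the mixed symbol $p\cdot A(x)$. Applying the same Fourier manipulation converts $p_j$ into $-i\hbar\partial_j$ acting on $f^\hbar(\cdot-x)\,u$; the Leibniz rule then splits the result into a piece where the derivative hits $f^\hbar(\cdot-x)$ and a piece where it hits $u$. Using $f^\hbar\,\partial_j f^\hbar=\tfrac12\partial_j|f^\hbar|^2$ and integrating by parts once in $y$, the first piece reassembles into $-\tfrac{i\hbar}{2}\partial_j(A_j\ast|f^\hbar|^2)$, while the second gives $(A_j\ast|f^\hbar|^2)(-i\hbar\partial_j)$. The two combine into the self-adjoint symmetrized operator
$$T_{p\cdot A}=\tfrac12\bigl\{(A\ast|f^\hbar|^2)\cdot,\;-i\hbar\nabla\bigr\},$$
whose $\gamma^{(1)}_{\Psi_N}$-trace equals $\Re\langle\Psi_N,\sum_j(A\ast|f^\hbar|^2)(x_j)\cdot(-i\hbar\nabla_j)\Psi_N\rangle$ by self-adjointness. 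This symmetrization step is the delicate point, and the real-valuedness of $f$ is what makes the ``upper'' and ``lower'' pieces combine cleanly.

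To conclude, I would add $\tr(\gamma^{(1)}T_{|p|^2})+2\tr(\gamma^{(1)}T_{p\cdot A})+\tr(\gamma^{(1)}T_{|A|^2})$, subtract $N\hbar\int|\nabla f|^2$, and compare with the direct expansion
$$(-i\hbar\nabla+A)^2=(-i\hbar\nabla)^2+\{A,-i\hbar\nabla\}+|A|^2,\qquad \langle\Psi_N,\{A,-i\hbar\nabla\}\Psi_N\rangle=2\Re\langle\Psi_N,A\cdot(-i\hbar\nabla)\Psi_N\rangle.$$
The difference between the ``exact'' contributions $A$, $|A|^2$ and their convolved counterparts $A\ast|f^\hbar|^2$, $|A|^2\ast|f^\hbar|^2$ is precisely the two correction terms $2\Re\langle\Psi_N,\sum_j(A-A\ast|f^\hbar|^2)(x_j)\cdot(-i\hbar\nabla_j)\Psi_N\rangle$ and $\langle\Psi_N,\sum_j(|A|^2-|A|^2\ast|f^\hbar|^2)(x_j)\Psi_N\rangle$ appearing on the right, which completes the identity~\eqref{eq:formula_kinetic_energy_magnetic_field}.
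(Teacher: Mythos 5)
Your proposal is correct and follows essentially the same route as the paper: expand $|p+A|^2=|p|^2+2p\cdot A+|A|^2$, identify the contribution of each symbol (the $x$-dependent symbols giving convolutions with $|f^\hbar|^2$, the $|p|^2$ symbol giving $-\hbar^2\Delta$ plus the constant $\hbar\int|\nabla f|^2$, and the cross symbol giving the symmetrized $(A\ast|f^\hbar|^2)\cdot(-i\hbar\nabla)$ via the evenness of $|f^\hbar|^2$), and compare with the direct expansion of the magnetic Laplacian. The only cosmetic difference is that you package everything as traces of Toeplitz operators against $\gamma^{(1)}_{\Psi_N}$, whereas the paper computes the $|p|^2$ part through the momentum density $t^{(1)}_{\Psi_N}$ and Lemma~\ref{lem:link_PDM}; the computations are the same.
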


\begin{proof}
By density we may assume that $\Psi$ and $A$ are regular enough, which allows us to make the following calculation. By definition, we have
\begin{align*}
\pscal{\Psi_N,\left(\sum_{j=1}^N-\hbar^2\Delta_j\right)\Psi_N}&=\int_{\R^d}t^{(1)}_{\Psi_N}(p)|p|^2\,dp\\
&=\frac{1}{(2\pi\hbar)^d}\int_{\R^d}\int_{\R^d}m^{(1)}_{f,\Psi_N}(x,p)|p|^2\,dx\,dp\\
&\qquad+\int_{\R^d}\int_{\R^d}t^{(1)}_{\Psi_N}(p)|g^\hbar(q-p)|^2(|p|^2-|q|^2)\,dp\,dq\\
&=\frac{1}{(2\pi\hbar)^d}\int_{\R^d}\int_{\R^d}m^{(1)}_{f,\Psi_N}(x,p)|p|^2\,dx\,dp\\
&\qquad -\int_{\R^d}\int_{\R^d}t^{(1)}_{\Psi_N}(p)|g^\hbar(q-p)|^2|q-p|^2\,dp\,dq\\
&\qquad -2 \left(\int_{\R^d}pt^{(1)}_{\Psi_N}(p)\,dp\right)\cdot\left(\int_{\R^d}p|g^\hbar(p)|^2\,dp\right).
\end{align*}
The last term vanishes since $f$ is real, hence $g^\hbar$ is even, and we obtain 
\begin{multline}
\pscal{\Psi_N,\bigg(\sum_{j=1}^N-\hbar^2\Delta_j\bigg)\Psi_N}\\
=\frac{1}{(2\pi\hbar)^{d}}\int_{\R^d}\int_{\R^d}|p|^2m_{f,\Psi_N}^{(1)}(x,p)\,dx\,dp-N\hbar\int_{\R^d}|\nabla f|^2.
\label{eq:formula_kinetic_energy}
\end{multline}

With magnetic field, we expand the square 
$$|-i\hbar\nabla+A(x)|^2=-\hbar^2\Delta+A(x)\cdot(-i\hbar\nabla)+(-i\hbar\nabla)\cdot A(x)+|A(x)|^2$$
and apply~\eqref{eq:link_PDM_x} for the term with $|A|^2$ and~\eqref{eq:formula_kinetic_energy} for the term with $-\Delta$. The cross term is calculated using~\eqref{eq:scalar_product_coherent_state} as follows:
\begin{align*}
&\frac{1}{(2\pi\hbar)^{d}}\iint_{\R^d\times\R^d} p\cdot A(x) |\langle u,f_{xp}^\hbar\rangle|^2\,dx\,dp\\
&\qquad\qquad=\iint_{\R^d\times\R^d} p\cdot A(x) |\cF_{\hbar}(f^\hbar_{x,0}u)(p)|^2\,dx\,dp\\
&\qquad\qquad=\int_{\R^d} A(x)\cdot \int_{\R^d}p|\cF_{\hbar}(f^\hbar_{x,0}u)(p)|^2\,dp\,dx\\
&\qquad\qquad=\hbar\int_{\R^d} A(x)\cdot \Im\int_{\R^d}f^\hbar_{x,0}(y)\overline{u(y)}\,\nabla(f^\hbar_{x,0}u)(y)\,dy\,dx\\
&\qquad\qquad=\hbar\int_{\R^d} A(x)\cdot \int_{\R^d}|f^\hbar(y-x)|^2\Im(\overline{u(y)}\,\nabla u(y))\,dy\,dx\\
&\qquad\qquad=\hbar\int_{\R^d} (A\ast|f^\hbar|^2)(y)\cdot\Im(\overline{u(y)}\,\nabla u(y))\,dy.
\end{align*}
\end{proof}

\subsection{Structure of the limiting measures: tight case}\label{sec:dF_tight_case}

For an arbitrary sequence $\Psi_N$ with $N\to\ii$, the functions $(m^{(k)}_{f,\Psi_N})_{N\geq k}$ are bounded in $L^1({\R^{2dk}})\cap L^\ii({\R^{2dk}})$, for every fixed $k$, by Lemma~\ref{lem:prop_m}. It is therefore clear that we can find a subsequence such that $m^{(k)}_{f,\Psi_N}\wto m^{(k)}_{f}$ weakly for every $k\geq1$, in the sense that
\begin{equation*}
\int_{{\R^{2dk}}}m^{(k)}_{f,\Psi_N}\phi\to\int_{{\R^{2dk}}}m^{(k)}_{f}\phi
\end{equation*}
for every $\phi\in L^1({\R^{2dk}})+L^\ii_\epsilon({\R^{2dk}})$. In the limit we obtain a hierarchy of symmetric functions  $(m^{(k)}_f)_{k\geq1}$. The purpose of this section and of the following is to study the properties of these limiting hierarchies. In general, the limiting functions $m^{(k)}_f$ are not probability densities because some mass can be lost at infinity. However, if the sequence $(m^{(1)}_{f,\Psi_N})$ is tight, that is,
$$\lim_{R\to\ii}\limsup_{N\to\ii}\int_{|x|+|p|\geq R}m^{(1)}_{f,\Psi_N}(x,p)\,dx\,dp=0,$$
then the $m^{(k)}_{f,\Psi_N}$ are also tight for $k\geq2$ and the limiting $m_f^{(k)}$ are all probability measures. We consider this simpler case in this section. The tightness of the sequence also implies that the compatibility relation~\eqref{eq:m_k_Psi_marginal} is preserved in the limit:
\begin{equation}
\frac1{(2\pi)^{d}}\int_{{\R^{2d}}}m^{(k)}(x_1,p_1,...,x_k,p_k)\,dx_k\,dp_k\\= m^{(k-1)}(x_1,p_1,...,x_{k-1},p_{k-1})
\label{eq:compatibility}
\end{equation}
for all $k\geq1$. The famous de Finetti-Hewitt-Savage theorem deals with the structure of such infinite sequences of symmetric probability measures~\cite{DeFinetti-31,DeFinetti-37,Dynkin-53,HewSav-55,DiaFre-80}. In our situation, the result can be stated as follows.

\begin{theorem}[Fermionic semi-classical measures on phase space]\label{thm:deFinetti_strong}
Let $m^{(k)}$ be a family of symmetric positive densities in $L^1(M^k)$, with $M\subset \R^D$, satisfying 
\begin{equation}
c\int_{M}m^{(k)}(\xi_1,...,\xi_k)\,d\xi_k= m^{(k-1)}(\xi_1,...,\xi_{k-1})
\label{eq:compatibility_bis}
\end{equation}
with $m^{(0)}=1$ and $0\leq m^{(k)}\leq 1$ for all $k\geq1$. Then there exists a Borel probability measure $\rm P$ on the set 
$$\cS:=\left\{\mu\in L^1(M)\ :\ 0\leq \mu\leq 1,\ c\int_M\mu=1\right\}$$ 
such that 
\begin{equation}
m^{(k)}=\int_\cS \mu^{\otimes k}\,d{\rm P}(\mu),
\label{eq:deFinetti}
\end{equation}
for all $k\geq1$.
\end{theorem}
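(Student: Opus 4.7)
My plan is to reduce to the classical Hewitt-Savage theorem and then extract the $L^\infty$ bound from the hypothesis $m^{(k)}\leq 1$ by a moment argument.

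\textbf{Step 1 (Extension to an exchangeable probability measure on $M^{\mathbb{N}}$).} Set $P^{(k)}:=c^k m^{(k)}(\xi_1,\ldots,\xi_k)\,d\xi_1\cdots d\xi_k$. Iterating the compatibility relation~\eqref{eq:compatibility_bis} together with $m^{(0)}=1$ shows that each $P^{(k)}$ is a symmetric Borel probability measure on $M^k$ and that the family $(P^{(k)})_{k\geq 1}$ is consistent under partial traces. Kolmogorov's extension theorem then produces a unique Borel probability measure $\Pi$ on $M^{\mathbb{N}}$ whose $k$-dimensional marginals are the $P^{(k)}$. By symmetry of the $m^{(k)}$, $\Pi$ is exchangeable, and the classical Hewitt-Savage theorem yields a unique Borel probability measure $\tilde{\rm P}$ on the set $\mathcal{P}(M)$ of Borel probability measures on $M$ such that
$$P^{(k)}=\int_{\mathcal{P}(M)}\nu^{\otimes k}\,d\tilde{\rm P}(\nu),\qquad \forall k\geq 1.$$

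\textbf{Step 2 (Extracting the pointwise bound $\mu\leq 1$).} The nontrivial input is the constraint $m^{(k)}\leq 1$. For any Borel set $A\subset M$ of finite Lebesgue measure, integrating the identity above over $A^k$ gives
$$\int_{\mathcal{P}(M)}\nu(A)^k\,d\tilde{\rm P}(\nu)=c^k\int_{A^k}m^{(k)}\leq (c|A|)^k,$$
so $\|\nu(A)\|_{L^k(d\tilde{\rm P})}\leq c|A|$. Letting $k\to\infty$ (via Chebyshev on $\{\nu(A)>c|A|+\varepsilon\}$) yields $\nu(A)\leq c|A|$ for $\tilde{\rm P}$-a.e.\ $\nu$. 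Applying this simultaneously over a countable algebra $\mathcal{A}$ of bounded Borel subsets of $M$ generating the topology (e.g.\ finite unions of open cubes with rational vertices), we find a $\tilde{\rm P}$-negligible set outside of which $\nu(A)\leq c|A|$ holds for every $A\in\mathcal{A}$, and hence, by outer regularity of both $\nu$ and Lebesgue measure, for every Borel subset $A\subset M$. Consequently, $\tilde{\rm P}$-almost surely $\nu$ is absolutely continuous with density $d\nu/d\xi\leq c$. Setting $\mu:=c^{-1}\,d\nu/d\xi$ then gives $0\leq\mu\leq 1$ and $c\int_M\mu=\nu(M)=1$, i.e.\ $\mu\in\mathcal{S}$.

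\textbf{Step 3 (Conclusion).} Pushing $\tilde{\rm P}$ forward under $\nu\mapsto\mu$ produces a Borel probability measure $\rm P$ on $\mathcal{S}$ (which is a Polish space in the weak-$L^1$ topology). The identity of Step 1 then rewrites as
$$c^k m^{(k)}(\xi_1,\ldots,\xi_k)\,d\xi_1\cdots d\xi_k=\int_{\mathcal{S}}(c\mu\,d\xi)^{\otimes k}\,d{\rm P}(\mu)=c^k\left(\int_{\mathcal{S}}\mu^{\otimes k}\,d{\rm P}(\mu)\right)d\xi_1\cdots d\xi_k,$$
which after dividing by $c^k$ is exactly the representation~\eqref{eq:deFinetti}.

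The main obstacle is Step 2. The Hewitt-Savage theorem alone only produces a decomposition into products of probability \emph{measures}, and the Pauli-type constraint $m^{(k)}\leq 1$ must be transferred to $\tilde{\rm P}$-a.e.\ ergodic component. The $L^k$-moment estimate for $\nu(A)$ combined with outer regularity is precisely what upgrades the decomposition to one supported on densities with $\|\mu\|_\infty\leq 1$; care is required because the pointwise bound on $\nu(A)$ is a priori known only for each $A$ separately, and this is handled by restricting to a countable generating algebra and then exhausting Borel sets.
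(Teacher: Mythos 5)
Your proof is correct and follows essentially the same route as the paper: reduce to the classical Hewitt--Savage decomposition over $\mathcal{P}(M)$, then transfer the constraint $m^{(k)}\leq 1$ to the mixing measure via the $k$-th moment bound $\int \nu(A)^k\,d\tilde{\rm P}\leq (c|A|)^k$ and let $k\to\infty$ to force $\tilde{\rm P}$-a.e.\ absolute continuity with density at most $c$. Your Step~2 merely spells out (via a countable generating algebra and outer regularity) the exchange of quantifiers that the paper leaves implicit when passing from ``for each Borel $A$, a.e.\ $\nu$'' to ``a.e.\ $\nu$, for all Borel $A$''.
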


In our case $M=\R^{2d}$ and $c=(2\pi)^{-d}$ but the result is actually true for any Borel set $M$ in $\R^D$ and any $c>0$. The theorem says that infinite exchangeable (i.e.~symmetric) fermionic systems are always convex combination of independent ones.

\begin{proof}
A very similar theorem is proved in~\cite[p.~510--511]{CagLioMarPul-92} (see also ~\cite[Lemma~4]{MesSpo-82}). The usual theorem from~\cite{HewSav-55} furnishes a probability measure $\rm P\in\cP(\cP(M))$ such that~\eqref{eq:deFinetti} holds with $\cS$ replaced by $\cP(M)$. The condition $0\leq \mu\leq 1$ can be characterized by the property that $\mu(A)\leq |A|$ for all $A$ in a countable set of balls, showing that $\cS$ is a measurable set in $\cP(M)$. We therefore only have to prove that this measure $\rm P$ is concentrated on $\cS$. Now, the assumption that $0\leq m^{(k)}\leq1$ implies $m^{(k)}(A^k)\leq |A|^k$ for any Borel set $A\subset M$, and this gives
$$\int_{\cP(M)} \left(\frac{\mu(A)}{|A|}\right)^k\,d{\rm P}(\mu)\leq 1$$
for every $k\geq1$. Taking $k\to\ii$ proves that $\rm P$ is concentrated on the subset of $\cP(M)$ containing all the probability measures $\mu$ such that $\mu(A)\leq |A|$ for all $A$. These measures are absolutely continuous with respect to the Lebesgue measure and the corresponding density is between $0$ and~$1$.
\end{proof}

\subsection{Structure of the limiting measures: general case}

As we have said, in general the limiting functions $m_f^{(k)}$ need not be probability measures, and they need not satisfy the compatibility condition~\eqref{eq:compatibility}. The idea that a de Finetti theorem nevertheless holds true in this case as well seems to have been advertized for the first time in the quantum case in~\cite{LewNamRou-14}. A similar result with a different interpretation had been published before by Ammari and Nier in~\cite{AmmNie-08}. The following is inspired of~\cite[Thm 2.2]{LewNamRou-14}.

\begin{theorem}[Weak fermionic semi-classical measures on phase space]\label{thm:deFinetti_weak}
Let $m_N^{(N)}$ be a sequence of symmetric positive densities in $L^1(M^N)$, with $M\subset \R^D$, and let $m_N^{(k)}$ be its marginals defined recursively as in~\eqref{eq:compatibility_bis}. We assume that $m^{(0)}=1$, that $0\leq m^{(k)}_N\leq 1$ for every $1\leq k\leq N$ and that $m^{(k)}_N\wto m^{(k)}$ weakly in $L^1(M^k)$ and weakly-$\ast$ in $L^\ii(M^k)$ for every fixed $k\geq1$, as $N\to\ii$. Then there exists a Borel probability measure $\rm P$ on the set $\cB:=\{\mu\in L^1(M)\ :\ 0\leq \mu\leq 1,\ c\int_M\mu\leq1\}$ such that 
\begin{equation}
m^{(k)}=\int_\cB \mu^{\otimes k}\,d{\rm P}(\mu),
\label{eq:deFinetti_weak}
\end{equation}
for all $k\geq1$.
\end{theorem}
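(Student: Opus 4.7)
The plan is to reduce the statement to the classical Hewitt-Savage theorem (Theorem~\ref{thm:deFinetti_strong}) by compactifying $M$ and tracking the mass that escapes to infinity. Since the assumption $m^{(0)}=1$ combined with~\eqref{eq:compatibility_bis} forces $c^N\int_{M^N}m_N^{(N)}=1$, we may view $\nu_N:=c^N m_N^{(N)}(\xi)\,d\xi$ as a symmetric probability measure on $M^N\subset\bar M^N$, where $\bar M:=M\cup\{\infty\}$ denotes the one-point compactification of $M$ (compact and metrizable since $M\subset\R^D$ is $\sigma$-compact). The $k$-th marginal of $\nu_N$ is exactly $c^k m_N^{(k)}(\xi)\,d\xi$, which by hypothesis converges weakly to the sub-probability $c^k m^{(k)}(\xi)\,d\xi$ on $M^k$, the deficit corresponding to mass lost at $\infty$.

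The main step would be the empirical-measure form of de Finetti's theorem in the spirit of Diaconis-Freedman. I would consider the continuous map $\Phi_N:\bar M^N\to\cP(\bar M)$ defined by $\Phi_N(\omega_1,\ldots,\omega_N)=\tfrac1N\sum_{i=1}^N\delta_{\omega_i}$, and the pushforward $\Pi_N:=(\Phi_N)_*\nu_N\in\cP(\cP(\bar M))$. Since $\bar M$ is compact, both $\cP(\bar M)$ and $\cP(\cP(\bar M))$ are weak-$\ast$ compact and metrizable, so, up to a subsequence, $\Pi_N\rightharpoonup\tilde{\mathrm P}$. The Diaconis-Freedman quantitative bound asserts that for every bounded Borel $\phi$ on $\bar M^k$ and every $k\leq N$,
$$\Big|\int_{\bar M^k}\phi\,d\nu_N^{(k)}-\int_{\cP(\bar M)}\Big(\int_{\bar M^k}\phi\,d\mu^{\otimes k}\Big)d\Pi_N(\mu)\Big|\leq\frac{k(k-1)}{N}\|\phi\|_\infty.$$
Passing to the limit along the subsequence and specializing to $\phi\in C_c(M^k)$, this yields the identity of measures on $M^k$:
$$c^km^{(k)}(\xi)\,d\xi=\int_{\cP(\bar M)}\big(\mu|_M\big)^{\otimes k}\,d\tilde{\mathrm P}(\mu).$$

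The main obstacle, and the step where the Pauli-type constraint enters decisively, is to show that $\tilde{\mathrm P}$-almost every $\mu$ satisfies $\mu|_M\ll\mathrm{Leb}$ with density in $[0,c]$. For any Borel set $A\subset M$ of finite Lebesgue measure, the pointwise bound $m_N^{(k)}\leq 1$ gives $\int_{A^k}m_N^{(k)}\leq|A|^k$, and the weak convergence preserves this: $\int_{A^k}m^{(k)}\leq|A|^k$. Inserting the representation above,
$$\int_{\cP(\bar M)}\mu(A)^k\,d\tilde{\mathrm P}(\mu)\leq c^k|A|^k,$$
and letting $k\to\ii$ forces $\tilde{\mathrm P}\big(\{\mu:\mu(A)>c|A|\}\big)=0$. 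A countable argument along a generating algebra of Borel sets of finite Lebesgue measure in $M$ then yields $\mu|_M\leq c\,\mathrm{Leb}$ for $\tilde{\mathrm P}$-a.e.\ $\mu$, so $\mu|_M$ has a density of the form $c\,\hat\mu$ with $0\leq\hat\mu\leq 1$ and $c\int_M\hat\mu=\mu(M)\leq 1$, i.e.\ $\hat\mu\in\cB$. Defining $\mathrm P$ as the pushforward of $\tilde{\mathrm P}$ under $\mu\mapsto\hat\mu$, the representation rewrites as $m^{(k)}=\int_\cB\hat\mu^{\otimes k}\,d\mathrm P(\hat\mu)$, completing the proof.
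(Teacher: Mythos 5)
Your proof is correct in substance, but it takes a genuinely different route from the paper's. The paper localizes the state to balls $B_R\subset M$, tracks the number $n$ of particles inside $B_R$ through an auxiliary variable $t=n/N$, applies a one-parameter Hewitt--Savage theorem on $[0,1]\times\cP(B_R)$, and recovers $\rm P$ on $\cB$ by cylindrical consistency as $R\to\ii$. You instead compactify once and for all, push the exchangeable probability $\nu_N=c^N m_N^{(N)}$ forward under the empirical-measure map, and invoke the quantitative Diaconis--Freedman bound; the mass escaping to infinity accumulates at the added point and is simply discarded when the representation is tested against functions supported in $M^k$. In this purely classical setting your argument is arguably the more economical one (the paper's localization scheme is inherited from the quantum framework of Lewin--Nam--Rougerie, where no empirical measure is available), and your final step --- deducing $\mu|_M\leq c\,{\rm Leb}$ from $\int\mu(A)^k\,d\tilde{\rm P}\leq (c|A|)^k$ by letting $k\to\ii$ --- is exactly the mechanism the paper uses to prove Theorem~\ref{thm:deFinetti_strong}, so nothing new is needed there. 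Two small points should be tightened. First, for an arbitrary Borel set $M\subset\R^D$ the one-point compactification $M\cup\{\ii\}$ need not be Hausdorff (local compactness can fail, e.g.\ for the irrationals), so one should rather take $\bar M$ to be the closure of $M$ in $\R^D\cup\{\ii\}$; the limiting measures may then also charge $\bar M\setminus M$, but this extra mass is handled exactly like the mass at infinity and disappears upon restriction to $M$ --- this also guarantees that your test functions $\phi$ extend continuously to $\bar M^k$, which is what the weak-$\ast$ convergence $\Pi_N\wto\tilde{\rm P}$ actually requires. Second, to pass from the identity tested on $C_c(M^k)$ to the identity on indicators $\1_{A^k}$ (needed for the absolute-continuity step), record that both sides are finite Borel measures on $M^k$, so that agreement on $C_c$ upgrades to agreement on all Borel sets by a monotone-class argument. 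With these routine additions your proof is complete.
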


The theorem says that the natural object obtained in the semi-classical limit of large fermionic systems (with possible lack of compactness) is a Borel probability measure $\rm P$ on the set of all the functions $0\leq \mu\leq1$ on $M$ such that $\int_M\mu\leq1/c$. 

\begin{proof}
The proof follows step by step that of~\cite[Thm 2.2]{LewNamRou-14} in the quantum case and we only outline it. We fix a ball $B_R=\{\xi\in M\ :\ \norm{\xi}\leq R\}$ in the space $M$, with $R$ an arbitrary positive integer. We then look at the probability measures for the particles to be in $B_R$. In particular, we will need to know the exact number of particles in $B_R$, which leads us to introduce the functions
$$
g_{N,R,0}:=c^N\int_{(M\setminus B_R)^{N}}m^{(N)}_N(\xi_1,...,\xi_N)d\xi_{1}\cdots d\xi_N
$$
and
\begin{multline*}
g_{N,R,n}(\xi_1,...,\xi_n)\\:=c^{N-n}{N\choose n} \prod_{j=1}^n\1_{B_R}(\xi_j)\int_{(M\setminus B_R)^{N-n}}m^{(N)}_N(\xi_1,...,\xi_N)d\xi_{n+1}\cdots d\xi_N
\end{multline*}
for $n=1,...,N$, and where $\xi\in M$ is the phase space variable. In words, $g_{N,R,n}$ is the probability density for $n$ particles in $B_R$, under the constraint that the other $N-n$ are all outside of $B_R$. Note that 
\begin{equation}
\sum_{n=0}^Nc^n\int_{(B_R)^n}g_{N,R,n}=c^N\int_{M^N}m^{(N)}_N=1
\label{eq:normalization_localized_state}
\end{equation}
and that the $k$-particle density in $B_R$ can be written as
\begin{align*}
\prod_{j=1}^k\1_{B_R}(\xi_j)m^{(k)}_{N}(\xi_1,...,\xi_k)&=\sum_{n=k}^N\frac{{{N-k}\choose {n-k}}}{{N\choose n}}g_{N,R,n}^{(k)}(\xi_1,...,\xi_k)\\
&=\sum_{n=k}^N\frac{{{n}\choose {k}}}{{N\choose k}}g_{N,R,n}^{(k)}(\xi_1,...,\xi_k)
\end{align*}
where $g_{N,R,n}^{(k)}$ denotes the $k$th marginal of the function $g_{N,R,n}$. 

It is useful to think of the fraction $n/N$ of particles in $B_R$ as an additional variable $0\leq t\leq1$. For fixed $N$ this new parameter takes values in $\Z/N\cap[0,1]$. In the limit $N\to\ii$, the whole interval $[0,1]$ is filled and $t$ will correspond to the radial integration in the ball $\cB$. We therefore introduce the probability measure 
\begin{multline}
dg_{N,R}^{(k)}(t,\xi_1,...,\xi_k):=\sum_{n=k}^N g_{N,R,n}^{(k)}(\xi_1,...,\xi_k)\delta_{n/N}(t)\\+|B_R|^{-k}\left(\sum_{n=0}^{k-1} c^k\int_{(B_R)^k}g_{N,R,n}\right)\delta_0(t)
\label{eq:def_g_N_r}
\end{multline}
on the compact set $[0,1]\times (B_R)^k$. Extracting subsequences, we may assume that $g_{N,R}^{(k)}$ converges weakly to a probability measure $g_{R}^{(k)}$ on $[0,1]\times (B_R)^k$, for every $k\geq1$. Since $R$ is an integer by assumption, the weak convergence can be assumed for every $R$ as well.

Now, from the inequality
$$0\leq \left(\frac{n}{N}\right)^k-\frac{{{n}\choose {k}}}{{N\choose k}}\leq \frac{(k-1)^2}{N-k+1}$$
(see~\cite[Eq.(2.13)]{LewNamRou-14}) and the normalization~\eqref{eq:normalization_localized_state}, we deduce that 
\begin{multline*}
c^k\int_{(B_R)^k}\bigg|\prod_{j=1}^k\1_{B_R}(\xi_j) m^{(k)}_{N}(\xi_1,...,\xi_k)\\ - \int_0^1t^k\,g^{(k)}_{N,R}(t,\xi_1,...,\xi_k)\,dt\bigg|\,d\xi_1\cdots d\xi_k \leq \frac{(k-1)^2}{N-k+1},
\end{multline*}
from which we conclude that
\begin{equation}
\1_{B_R}^{\otimes k} m^{(k)}=\int_0^1 t^k\,dg^{(k)}_{R}(t,\cdot)
\label{eq:link_localized_marginal}
\end{equation}
for every $k,R\geq1$.

Next we remark that
$$t\left(c\int_{B_R}dg_{N,R}^{(k+1)}(\cdot,\xi_{k+1})-dg_{N,R}^{(k)}\right)=t\delta_{k/N}(t)\,g_{N,R}^{(k)}= \frac{k}{N}\delta_{k/N}(t)\,g_{N,R}^{(k)}$$
where the right side converges to $0$ in the sense of measures. Passing to the weak limit $N\to\ii$, we deduce that the limiting hierarchy is consistent for all $0<t\leq1$:
$$ct\int_{B_R}dg_{R}^{(k+1)}(t,\cdot,\xi_{k+1})=t\,dg_{R}^{(k)}(t,\cdot).$$
At $t=0$ this equation tells us nothing. However, we can always change the value of the limit $dg^{(k)}_R$ as we want since the point $t=0$ does not contribute in~\eqref{eq:link_localized_marginal}. 
Similarly to~\eqref{eq:def_g_N_r}, we choose it to be the constant that makes it a probability measure and the consistency at $t=0$ is then obvious.
From the de Finetti-Hewitt-Savage theorem (or, rather, a simple one-parameter version of it which can be proved similarly as sketched in~\cite{LewNamRou-14}), we conclude that there exists a probability measure $\tilde{\rm P}_R$ on $[0,1]\times\cP(B_R)$ such that 
$$dg_{R}^{(k)}(t,\cdot)=\int_{\cP(B_R)}p^{\otimes k}\,d\tilde{\rm P}_R(t,p).$$
Inserting in~\eqref{eq:link_localized_marginal}, we get the representation formula
$$(\1_{B_R})^{\otimes k}m^{(k)}=\int_{[0,1]\times \cP(B_R)}(tp)^{\otimes k}\,d\tilde{\rm P}_R(t,p).$$
Using that $0\leq m^{(k)}\leq1$ we can show similarly as in the proof of Theorem~\ref{thm:deFinetti_strong}, that $\tilde{\rm P}_R$ concentrates on $[0,1]\times\cS$. The right side can then be uniquely interpreted as an integral for a measure ${\rm P}_R$ on $\cB_R:=\{0\leq \mu\leq \1_{B_R}\ :\ \int_{B_R}\mu\leq1/c\}$, with $t$ playing the role of the radial variable:
$$(\1_{B_R})^{\otimes k}m^{(k)}=\int_{\cB_R}\mu^{\otimes k}\,d{\rm P}_R(\mu).$$
Since the previous equality holds for every positive integer $R$, the measure ${\rm P}_R$ is the cylindrical projection of ${\rm P}_{R'}$ for every $R'>R$ and the family $({\rm P}_R)$ defines uniquely a measure ${\rm P}$ over $\cB$. We get the result by passing to the limit $R\to\ii$.
\end{proof}

\subsection{Link with Wigner functions}\label{sec:link_Wigner}
It is well known that the Wigner functions converge to the same limit as the Husimi measures and we quickly recall this here. 
It is indeed a general principle that all regular quantizations are equivalent in the semi-classical limit. 

First, we recall that when $f=(\pi)^{-d/4}e^{-|x|^2/2}$, the Husimi and Wigner measures are related by a convolution:
$$m^{(k)}_{\Psi,f}=\frac{N(N-1)\cdots (N-k+1)}{N^k}\mathscr{W}^{(k)}_{\Psi}\ast \mathscr{G}^\hbar,$$
where $\mathscr{G}^\hbar(x_1,...,p_k)=(\pi\hbar)^{-dk}\exp(-\hbar^{-1}\sum_{j=1}^k|x_j|^2+|p_j|^2)$ is a Gaussian in phase space (see, e.g.~\cite[Prop.~21]{ComRob-12}). It is then clear that the two sequences must have the same weak limits. On the other hand, the Calderon-Vaillancourt theorem from~\cite{Hwang-87} states that
\begin{equation*}
\norm{{\rm Op}_{\rm Weyl}^{k,\hbar} (\phi)}\leq C\sup_{\max(\alpha,\beta)\leq1}\norm{\partial_{x_1}^{\alpha_1}\cdots \partial_{x_k}^{\alpha_k}\partial_{p_1}^{\beta_1}\cdots \partial_{p_k}^{\beta_k} \phi}_{L^\ii(\R^{2dk})},
\end{equation*}
for any function $\phi$ for which the right side is finite. This bound is useful for showing that $\mathscr{W}^{(k)}_{\Psi}$ has weak limits. From this we obtain the estimate
\begin{multline*}
\norm{{\rm Op}_{\rm Weyl}^{N,\hbar} (\phi)-{\rm Op}_{f}^{N,\hbar} (\phi)}\\ \leq C\sup_{\max(\alpha,\beta)\leq1}\norm{\partial_{x_1}^{\alpha_1}\cdots \partial_{x_k}^{\alpha_k}\partial_{p_1}^{\beta_1}\cdots \partial_{p_k}^{\beta_k} (\phi-\phi\ast\mathscr{G}^\hbar)}_{L^\ii(\R^{2dk})}\\
+C\frac{k^2}{N}\sup_{\max(\alpha,\beta)\leq1}\norm{\partial_{x_1}^{\alpha_1}\cdots \partial_{x_k}^{\alpha_k}\partial_{p_1}^{\beta_1}\cdots \partial_{p_k}^{\beta_k} \phi}_{L^\ii(\R^{2dk})}.
\end{multline*}
The term $\phi-\phi\ast\mathscr{G}^\hbar$ can be estimated uniformly by the gradient of $\phi$, using the fundamental theorem of calculus, leading to
\begin{multline*}
\norm{{\rm Op}_{\rm Weyl}^{N,\hbar} (\phi)-{\rm Op}_{f}^{N,\hbar} (\phi)}\\
\leq C_k\sqrt{\hbar}\sup_{\max(\alpha,\beta)\leq2}\norm{\partial_{x_1}^{\alpha_1}\cdots \partial_{x_k}^{\alpha_k}\partial_{p_1}^{\beta_1}\cdots \partial_{p_k}^{\beta_k} \phi}_{L^\ii(\R^{2dk})}.
\end{multline*}
One more derivative of $\phi$ is necessary to prove that the limit of the Wigner function coincides with that of the Husimi measure, but the limit of $\cW_\Psi$ nevertheless holds with only $\max(\alpha,\beta)\leq1$, using the density of smooth test functions.

As a conclusion, these estimates give the convergence of the Wigner functions when tested against any $\phi$ satisfying $\partial_{x_1}^{\alpha_1}\cdots \partial_{x_k}^{\alpha_k}\partial_{p_1}^{\beta_1}\cdots \partial_{p_k}^{\beta_k} \phi \in L^\ii(\R^{2dk})$ for $\max(\alpha,\beta)\leq1$, provided that the convergence of the Husimi measures has been established in the special case $f=(\pi)^{-d/4}e^{-|x|^2/2}$. If the weak convergence for the Husimi measures only holds in $L^1_{\rm loc}(\R^{dk})$, then in addition $\phi$ has to tend to zero at infinity.

\subsection{Proof of Theorem~\ref{thm:subsequences_m_k}}\label{sec:proof_thm_subsequences_m_k}

We are now able to write the proof of Theorem~\ref{thm:subsequences_m_k}.

Let $f\in C^\ii_c(\R^d)$ be a real-valued $L^2$-normalized function. As we have already explained in the beginning of Section~\ref{sec:dF_tight_case}, up to extraction of a subsequence, we have 
\begin{equation}
\int_{M^k}m^{(k)}_{f,\Psi_N}\phi\to\int_{M^k}m^{(k)}_{f}\phi
\label{eq:def_weak_limit3}
\end{equation}
for every $k\geq1$ and every $\phi\in L^1(M^k)+L^\ii_\epsilon(M^k)$. 
By Theorem~\ref{thm:deFinetti_weak}, there exists a probability measure $\rm P$ on $\cB$ such that
$$m^{(k)}_f=\int_\cB \mu^{\otimes k}\,d{\rm P}(\mu)$$
and therefore the convergence~\eqref{eq:def_weak_limit} follows for this $f$. 

Next we establish the independence of the weak convergence with respect to $f$. This is the content of the following

\begin{lemma}[The limiting measures do not depend on $f$]\label{lem:m_do_not_depend_on_f}
Let $\Psi_N$ be a sequence of normalized fermionic functions and let $f\in L^2(\R^d)$ be a normalized function such that $m^{(k)}_{f,\Psi_N}\wto m^{(k)}_f$ weakly as $N\to\ii$ in the sense of~\eqref{eq:def_weak_limit}, for every $k\geq1$. Then $m^{(k)}_{g,\Psi_N}\wto m^{(k)}_f$ weakly, for every normalized $g\in L^2(\R^d)$.
\end{lemma}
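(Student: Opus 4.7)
The strategy is to show that for any two Schwartz normalized real-valued functions $f,g$ and any test function $\phi\in C^\ii_c(\R^{2dk})$, the difference $\int\phi\bigl(m^{(k)}_{f,\Psi_N} - m^{(k)}_{g,\Psi_N}\bigr)$ tends to $0$ as $N\to\ii$, and then to extend by density to arbitrary normalized $f,g\in L^2$ and $\phi\in L^1+L^\ii_\eps$. Combined with the assumed convergence for $f$, this forces $m^{(k)}_{g,\Psi_N}\wto m^{(k)}_f$.

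Step 1 is a reduction to an operator-norm estimate. Using~\eqref{eq:def_m_k_Psi_tensor_product} and the definition~\eqref{eq:def_gamma_k} of the $k$-particle density matrix,
\[
\int_{\R^{2dk}}\phi\,m^{(k)}_{f,\Psi_N} = k!\,\tr\bigl(\gamma^{(k)}_{\Psi_N}\,A^{k,\hbar}_f(\phi)\bigr),\quad A^{k,\hbar}_f(\phi) := \int\phi\,P^\hbar_{x_1,p_1,f}\otimes\cdots\otimes P^\hbar_{x_k,p_k,f}\,dx_1\cdots dp_k,
\]
where $P^\hbar_{x,p,f}=|f^\hbar_{x,p}\rangle\langle f^\hbar_{x,p}|$. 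Since $\tr\gamma^{(k)}_{\Psi_N}=\binom{N}{k}$, one obtains
\[
\Bigl|\int\phi\bigl(m^{(k)}_{f,\Psi_N} - m^{(k)}_{g,\Psi_N}\bigr)\Bigr| \le k!\binom{N}{k}\bigl\|A^{k,\hbar}_f(\phi) - A^{k,\hbar}_g(\phi)\bigr\|_{\mathrm{op}}.
\]

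Step 2 is a Weyl symbol computation. The Weyl symbol of the rank-one projection $P^\hbar_{x,p,f}$ equals $(\hbar/(2\pi))^d\mathscr{W}_{f^\hbar_{x,p}}$, and a direct computation gives $\mathscr{W}_{f^\hbar_{x,p}}(\xi,\eta)=\hbar^{-d}W_f((\xi-x)/\sqrt\hbar,(\eta-p)/\sqrt\hbar)$, where $W_f$ is the unscaled ($\hbar=1$) Wigner transform of $f$. Changing variables yields
\[
\sigma_{A^{k,\hbar}_f(\phi)}(\xi,\eta) = \Bigl(\tfrac{\hbar}{2\pi}\Bigr)^{dk}\bigl(\phi \ast K^{\hbar,k}_f\bigr)(\xi,\eta),\quad K^{\hbar,k}_f(u,v) := \prod_{j=1}^k\hbar^{-d}W_f(u_j/\sqrt\hbar,v_j/\sqrt\hbar),
\]
with $\int K^{\hbar,k}_f=(2\pi)^{dk}$ and support concentrating on scale $\sqrt\hbar$. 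Taylor-expanding $\phi$ under the convolution, the leading $(2\pi)^{dk}\phi$ term is independent of $f$ and cancels in the difference, leaving $\|\phi\ast K^{\hbar,k}_f - \phi\ast K^{\hbar,k}_g\|_\ii = O(\sqrt\hbar)(1+\|\phi\|_{C^2})$ with implicit constants depending on the first moments of $W_f$ and $W_g$. Applying the Calder\'on--Vaillancourt theorem (as already used in Section~\ref{sec:link_Wigner}) then yields $\|A^{k,\hbar}_f(\phi) - A^{k,\hbar}_g(\phi)\|_{\mathrm{op}} = O(\hbar^{dk+1/2})$, so that the bound from Step~1 becomes $O(\sqrt\hbar)\to 0$.

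Finally, the extension from $C^\ii_c$ test functions to all $\phi\in L^1+L^\ii_\eps$ uses the uniform bounds $0\le m^{(k)}_{f,\Psi_N}\le 1$ and $(2\pi)^{-dk}\int m^{(k)}_{f,\Psi_N}\le 1$ from Lemma~\ref{lem:prop_m}, and the extension from Schwartz $f,g$ to arbitrary normalized $L^2$ functions uses the $N$-uniform estimate $\|m^{(k)}_{g,\Psi_N} - m^{(k)}_{g_n,\Psi_N}\|_\ii \le C_k\|g-g_n\|_2$ which follows from Cauchy--Schwarz and the Pauli bound $\|\gamma^{(k)}_{\Psi_N}\|_{\mathrm{op}}\le 1$. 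The main obstacle is the book-keeping of $\hbar$-powers in Step~2: the Taylor improvement is only $O(\sqrt\hbar)$ and must beat the divergent factor $\binom{N}{k}=O(\hbar^{-dk})$; this succeeds precisely because the $f$-independent leading term $(2\pi)^{dk}\phi$ in $\phi\ast K^{\hbar,k}_f$ cancels (both $W_f$ and $W_g$ integrate to $(2\pi)^d$ by normalization), and the first-order moments contribute only an $O(\sqrt\hbar)$ error rather than $O(1)$.
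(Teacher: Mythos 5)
Your proof is correct, but it takes a genuinely different route from the paper's. Both arguments reduce the problem to an operator-norm comparison of two quantizations of $\phi$ that must beat the factor $N^k=\hbar^{-dk}$ coming from $\tr\gamma^{(k)}_{\Psi_N}={N\choose k}$, and both win by a margin of $\sqrt\hbar$. The paper, however, first restricts to product test functions $\phi=u_1(x_1)v_1(p_1)\cdots u_k(x_k)v_k(p_k)$, so that ${\rm Op}^{k,\hbar}_f(\phi)$ factorizes into a tensor product of one-particle operators $B^\hbar_{f,j}$; each $B^\hbar_{f,j}$ is then compared, by an explicit kernel computation and a Schur-type bound, to the $f$-independent reference quantization $\hbar^d u_j(x)v_j(-i\hbar\nabla)$, with error $O(\hbar^d\sqrt\hbar)$. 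This is elementary (no pseudodifferential calculus) but needs the density of linear combinations of product test functions. You instead work with a general $\phi\in C^\ii_c(\R^{2dk})$ at once: you identify the $\hbar$-Weyl symbol of the coherent-state quantization as $\phi$ convolved with the $\sqrt\hbar$-scaled Wigner transform of $f$, observe that the zeroth-order term of the Taylor expansion is $f$-independent because $\int W_f=(2\pi)^d\|f\|^2$ for any normalized $f$, and then invoke Calder\'on--Vaillancourt (already used in Section~\ref{sec:link_Wigner}) to convert the resulting $O(\sqrt\hbar)$ symbol estimate, together with the same estimate for the derivatives $\partial^\alpha\phi\ast K^{\hbar,k}_f$, into the operator bound $O(\hbar^{dk}\sqrt\hbar)$. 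Your reference quantization is thus the Weyl one rather than $u(x)v(-i\hbar\nabla)$; what you buy is the absence of the product-function reduction, at the price of relying on the Calder\'on--Vaillancourt theorem. The two density steps at the end (in $\phi$, using $0\leq m^{(k)}\leq1$ and the mass bound, and in $f,g$, using the Lipschitz estimate $\|m^{(k)}_{f,\Psi_N}-m^{(k)}_{\tilde f,\Psi_N}\|_{L^\ii}\leq C_k\|f-\tilde f\|_{L^2}$) coincide with the paper's; for the latter the cleanest justification is the telescoping bound via $\|a(f-\tilde f)\|=\|f-\tilde f\|_{L^2}$ applied directly to the definition~\eqref{eq:def_m_k_Psi}, rather than an appeal to $\|\gamma^{(k)}_{\Psi_N}\|$, but the estimate is the same.
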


\begin{proof}[Proof of Lemma~\ref{lem:m_do_not_depend_on_f}]
It clearly suffices to prove that
$$\lim_{N\to\ii}\int_{M^k}\phi(\xi)\, \big(m^{(k)}_{f,\Psi_N}(\xi)-m^{(k)}_{g,\Psi_N}(\xi)\big)\,d\xi=0$$
for all functions $\phi$ in a dense set. It will be convenient to take $\phi$ of the form
$$\phi(x_1,p_1,...,x_k,p_k)=u_1(x_1)v_1(p_1)\cdots u_k(x_k)v_k(p_k)$$
with $u_j,v_j$ in the Schwartz class. 

First we remark that we can assume that $f$ and $g$ are smooth with compact support. Indeed, using the definition of $m^{(k)}_{f,\Psi_N}$ and that $\|a(f-\tilde{f})\|=\|a^*(f-\tilde{f})\|=\|f-\tilde{f}\|_{L^2}$, we find that
$$\norm{m^{(k)}_{f,\Psi_N}-m^{(k)}_{\tilde{f},\Psi_N}}_{L^\ii(M^k)}\leq 2k \|f-\tilde{f}\|_{L^2}.$$
Therefore we can approximate $f$ and $g$ by smooth functions $\tilde f$ and $\tilde g$ in $L^2(\R^d)$, at the expense of an error which is uniform in $N$. Then it suffices to prove the result for $\tilde f$ and $\tilde g$. For simplicity, we just assume that $f$ and $g$ are smooth for the rest of the proof.

By definition of the phase space measure $m^{(k)}_{f,\Psi_N}$, we have
\begin{multline*}
\frac1{(2\pi)^{dk}}\int_{M^k}\phi(\xi)\, m^{(k)}_{f,\Psi_N}(\xi)\,d\xi\\
=N(N-1)\cdots (N-k+1)\pscal{\Psi_N,{\rm Op}_f^{k,\hbar}(\phi)\otimes\1_{N-k}\,\Psi_N} 
\end{multline*}
where
\begin{equation}
{\rm Op}_f^{k,\hbar}(\phi)=\frac1{(2\pi)^{dk}}\int_{M^k} \phi(x_1,p_1,...,x_k,p_k)\,P_{x_1,p_1}^\hbar\otimes\cdots\otimes P_{x_k,p_k}^\hbar dx_1\cdots dp_k.
\label{eq:formula_phi_tilde}
\end{equation}
We now claim that
\begin{equation}
\norm{{\rm Op}_f^{k,\hbar}(\phi)-{\rm Op}_g^{k,\hbar}(\phi)}\leq C\hbar^{kd}\sqrt\hbar=CN^{-k-\frac1{2d}}
\label{eq:compare_phi_f_g}
\end{equation}
where $C$ depends on the regularity of $f$, $g$, $\phi$ and on $k$, but not on $N$. This will imply
$$\left|\int_{M^k}\phi(\xi)\, \big(m^{(k)}_{f,\Psi_N}(\xi)-m^{(k)}_{g,\Psi_N}(\xi)\big)\,d\xi\right|\leq C\sqrt\hbar=CN^{-\frac1{2d}}$$
and finish the argument. The proof of~\eqref{eq:compare_phi_f_g} is based on well-known techniques from semi-classical analysis. It consists in estimating the difference between the above quantization~\eqref{eq:formula_phi_tilde} of $\phi$ and another quantization which is independent of $f$. We provide the full proof here for the convenience of the reader.

First we remark that for our specific function $\phi$ we can write
$${\rm Op}_f^{k,\hbar}(\phi)=B^\hbar_{f,1}\otimes \cdots\otimes B^\hbar_{f,k}$$
where 
$$B^\hbar_{f,j}=\frac1{(2\pi)^{d}}\int_{M}u_j(x)v_j(p)|f^\hbar_{x,p}\rangle\langle f_{x,p}^\hbar|\,dx\,dp.$$
Since by~\eqref{eq:resolution_of_identity} we have $\|B^\hbar_{f,j}\|\leq \hbar^d\norm{u_j}_{L^\ii(\R^d)}\norm{v_j}_{L^\ii(\R^d)}$ and the same estimate for $B^\hbar_{g,j}$, we deduce that
$$\norm{{\rm Op}_f^{k,\hbar}(\phi)-{\rm Op}_g^{k,\hbar}(\phi)}\leq \hbar^{d(k-1)}\sum_{j=1}^k\|B^\hbar_{f,j}-B^\hbar_{g,j}\|\prod_{\ell\neq j}\norm{u_\ell}_{L^\ii(\R^d)}\norm{v_\ell}_{L^\ii(\R^d)}.$$
Therefore we only have to prove that 
$$\|B^\hbar_{f,j}-B^\hbar_{g,j}\|\leq C\hbar^d\sqrt\hbar$$
for all $j=1,...,k$. We will actually prove that 
\begin{equation}
\|B^\hbar_{f,j}-\hbar^du_j(x)v_j(-i\hbar\nabla)\|\leq C\hbar^d\sqrt\hbar
\label{eq:change_quantization}
\end{equation}
which will of course imply the previous bound. We ignore the index $j$ for simplicity and compute the kernel 
\begin{align*}
B^\hbar_{f}(y,y')&=\frac{1}{(2\pi)^d\hbar^{\frac{d}{2}}}\int_{\R^d}\int_{\R^d} u(x)v(p)f\left(\frac{y-x}{\sqrt\hbar}\right)f\left(\frac{y'-x}{\sqrt\hbar}\right)e^{i\frac{p\cdot(y-y')}{\hbar}}\,dx\,dp\\
&=\frac{1}{(2\pi)^{\frac{d}2}\hbar^{\frac{d}{2}}} \check{v}\left(\frac{y-y'}{\hbar}\right)\int_{\R^d} u(x)f\left(\frac{y-x}{\sqrt\hbar}\right)f\left(\frac{y'-x}{\sqrt\hbar}\right)\,dx
\end{align*}
where $\check{v}$ is the inverse Fourier  transform of the function $v$.
Now the idea is to replace $u(x)$ by $u(y)$ and $f((y-x)/\hbar)$ by $f((y'-x)/\hbar)$ and to then use that $\int_{\R^d}f^2=1$. Assuming that $\check{v}$ and $f$ have their support in a ball of radius $R$, we find
% \begin{align*}
% \frac{1}{\hbar^{\frac{d}2}}\int_{\R^d} u(x)f\left(\frac{y-x}{\sqrt\hbar}\right)f\left(\frac{y'-x}{\sqrt\hbar}\right)\,dx-u(y)\int_{\R^d}f^2
% \end{align*}
\begin{multline*}
\left|B^\hbar_{f}(y,y')-(2\pi)^{-\frac{d}2}\, u(y)\,\check{v}\left(\frac{y-y'}{\hbar}\right)\right|\\
\leq \frac{R\sqrt{\hbar}}{(2\pi)^{\frac{d}2}}\left|\check{v}\left(\frac{y-y'}{\hbar}\right)\right|\left(\norm{\nabla u}_{L^\ii}+\norm{\nabla f}_{L^\ii}\norm{f}_{L^1}\norm{u}_{L^\ii}\right).
\end{multline*}
The operator with kernel $|\check{v}((y-y')/\hbar)|$ has the operator norm $\norm{\check{v}(\cdot/\hbar)}_{L^1(\R^d)}=\hbar^d\norm{\check{v}}_{L^1(\R^d)}$ and this concludes the proof of~\eqref{eq:change_quantization}, hence of the lemma.
\end{proof}

Together with the content of Section~\ref{sec:link_Wigner} dealing with the link between the Husimi and Wigner measures, this concludes the proof of the first part of Theorem~\ref{thm:subsequences_m_k}. If now the kinetic energy of $\Psi_N$ satisfies the semi-classical bound~\eqref{eq:semi-classical_bound}, then by~\eqref{eq:formula_kinetic_energy} we have 
$$(2\pi)^{-d}\int_{M}|p|^2m^{(1)}_{f,\Psi_N}(x,p)\,dx\,dp\leq C+N^{-1/d}\int_{\R^d}|\nabla f|^2.$$
Therefore the sequence is tight in the $p$ variables and the weak-limit~\eqref{eq:def_weak_limit} remains valid for functions $\phi(x_1,...,x_k,p_1,...,p_k)=U(x_1,...,x_k)$ with $U\in L^\ii_c(\R^{dk})$. By definition this is the weak convergence
$$\rho_{m^{(k)}_{f,\Psi_N}}\wto \int_\cB \rho_\mu^{\otimes k}\;d{\rm P}(\mu),$$
for every $k\geq1$. Now, $\rho^{(k)}_{\Psi_N}/N^k$ is bounded in $L^1(\R^{dk})$ and satisfies
$$\rho_{m^{(k)}_{f,\Psi_N}}=\frac{k!}{N^k}\rho^{(k)}_{\Psi_N}\ast(|f^\hbar|^2)^{\otimes k}$$
by Lemma~\ref{lem:link_PDM}. Since $|f^\hbar|^2\wto\delta$, this proves that 
$$\frac{k!}{N^k}\rho^{(k)}_{\Psi_N}\wto \int_\cB \rho_\mu^{\otimes k}\;d{\rm P}(\mu)$$
weakly-$\ast$ on $C^0_0(\R^{dk})$.

For $k=1$, the Lieb-Thirring inequality~\cite{LieThi-75,LieThi-76,LieSei-09} gives us
$$\int_{\R^d}\rho^{(1)}_{\Psi_N}(x)^{1+2/d}\,dx\leq C\pscal{\Psi_N\sum_{j=1}^N(-\Delta)_{x_j}\Psi_N}\leq CN^{1+2/d}$$
and, therefore, $\rho^{(1)}_{\Psi_N}/N$ is bounded in $L^1\cap L^{1+2/d}$. It must thus converge weakly in that space.
This concludes the proof of Theorem~\ref{thm:subsequences_m_k}.\qed

\section{Convergence of the energy: proof of Theorem~\ref{thm:CV_GS_energy}}\label{sec:proof_CV_energy}
%%%%%%%%%%%%%%%%%%%%%%%%%%%%%%%%%%%%%%%%%%
%%%%%%%%%%%%%%%%%%%%%%%%%%%%%%%%%%%%%%%%%%

This section is devoted to the proof of Theorem~\ref{thm:CV_GS_energy} and to the derivation of some a priori estimates that will be useful in the remaining of the article.

%%%%%%%%%%%%%%%%%%%%%%%%%%%%%%%%%%%%%%%%%%
\subsection{Upper bound on $E(N)$ using Hartree-Fock theory}
%%%%%%%%%%%%%%%%%%%%%%%%%%%%%%%%%%%%%%%%%%

\begin{proposition}[Upper bound]\label{lem:upper_bound}
Assume that $w$ is even, that $w,|A|^2,V_-\in L^{1+d/2}(\R^d)+L^\ii_\epsilon(\R^d)$ and that 
$$V_+\in L^1_{\rm loc}(\R^d).$$
We have
$$\limsup_{N\to\ii}\frac{E(N)}{N}\leq e^V_{\rm TF}(1).$$
\end{proposition}

\begin{proof}
Taking a Hartree-Fock state $\Psi=(N!)^{-1/2}\det(f_i(x_j))$ and introducing the corresponding density matrix
$$\gamma=\sum_{j=1}^N|f_j\rangle\langle f_j|,$$
with $\rho_\gamma:=\sum_{j=1}^N|f_j|^2$, we see that 
\begin{multline*}
E(N)\leq \inf_{\substack{\gamma^2=\gamma=\gamma^*\\ \tr\gamma=N}}\Bigg\{\tr(-i\hbar\nabla+A)^2\gamma+\int_{\R^d}V\rho_\gamma\\
+\frac{1}{2N}\iint_{\R^d\times\R^d}w(x-y)\big(\rho_\gamma(x)\rho_\gamma(y)-|\gamma(x,y)|^2\big)\,dx\,dy\Bigg\}.
\end{multline*}

We first estimate the exchange term (the last term in the previous formula). We write $w=w_1+w_2$ with $w_1\in L^{1+d/2}(\R^d)$ and $w_2\in L^\ii(\R^d)$. Then
$$\iint_{\R^d\times\R^d}|w_2(x-y)|\,|\gamma(x,y)|^2\,dx\,dy\leq \norm{w_2}_{L^\ii}\tr(\gamma^2)=\norm{w_2}_{L^\ii}N.$$
We have used here that $\tr\gamma^2=\tr\gamma=N$. For the term involving $w_1$, we use the Hölder and Gagliardo-Nirenberg-Sobolev inequalities to obtain
\begin{align*}
\int_{\R^d}|w_1(x)|\,|f(x)|^2\,dx&\leq \norm{w_1}_{L^{1+d/2}(\R^d)}\norm{f}_{L^{2+4/d}(\R^d)}^2\\
&\leq\norm{w_1}_{L^{1+d/2}(\R^d)}\left(\epsilon \norm{\nabla f}_{L^2(\R^d)}^2+\frac{C}{\epsilon^{d/2}}\norm{f}_{L^2(\R^d)}^2\right)
\end{align*}
for all $\epsilon>0$. Applying this in the variable $x$ with $y$ fixed, this gives
$$\iint_{\R^d\times\R^d}|w_1(x-y)|\,|\gamma(x,y)|^2\,dx\,dy\leq \norm{w_1}_{L^{1+d/2}(\R^d)}\left(\epsilon \tr(-\Delta)\gamma+\frac{CN}{\epsilon^{d/2}}\right).$$
Combining the two inequalities, we have shown that
$$\left|\frac{1}{2N}\iint_{\R^d\times\R^d}w(x-y)\,|\gamma(x,y)|^2\,dx\,dy\right|\leq \frac{\epsilon}{N} \tr(-\Delta)\gamma+C(1+\epsilon^{-d/2})$$
for all $\epsilon>0$. In dimensions $d\geq3$, we can simply take $\epsilon=1$. In dimensions $d=1$ and $d=2$ we take  for instance $\epsilon=N^{d-2}(\log N)^{-1}$. In all cases, we arrive at
$$\left|\frac{1}{2N}\iint_{\R^d\times\R^d}w(x-y)\,|\gamma(x,y)|^2\,dx\,dy\right|\leq \frac{\epsilon_N}{N^{2/d}} \tr(-\Delta)\gamma+N\epsilon_N$$
for some $\epsilon_N\to0$ which is independent on $\gamma$. 
Finally, in order to relate $\tr(-\Delta)\gamma$ to the magnetic kinetic energy, we may use that
\begin{equation}
|-i\hbar\nabla+A|^2=-\hbar^2\Delta-i\hbar\nabla\cdot A-i\hbar A\cdot\nabla +|A|^2\geq -\frac{\hbar^2}{2}\Delta-|A|^2 
 \label{eq:get_rid_magnetic}
\end{equation}
from which we deduce the final upper bound
\begin{multline}
\limsup_{N\to\ii}\frac{E(N)}{N}\leq \limsup_{N\to\ii}\inf_{\substack{\gamma^2=\gamma=\gamma^*\\ \tr\gamma=N}}\Bigg\{\frac{1+2\epsilon_N}{N}\tr|-i\hbar\nabla+A|^2\gamma\\
+\frac1N\int_{\R^d}\big(V+2\epsilon_N|A|^2\big)\rho_\gamma+\frac1{2N^2}\iint_{\R^d\times\R^d}w(x-y)\rho_\gamma(x)\rho_\gamma(y)\,dx\,dy\Bigg\}.\label{eq:upper_bound_almost}
\end{multline}

By using semi-classical analysis, we now construct an appropriate trial state for the variational problem on the right side, based on the classical probability density on phase space.

\begin{lemma}[Semi-classical analysis in a compact domain]\label{lem:semi-classics}
Let $\rho\geq0$ be a fixed function in $C^\ii_c(\R^d)$ with support in the cube $C_R=(-R/2,R/2)^d$, such that $\rho\geq0$ and $\int_{C_R}\rho=1$. 
Let $A \in L^{2+d}(C_R)$ be a magnetic vector potential.
Define
$$\gamma_N:=\1\left((-i \hbar\nabla + A)^2_{C_R}-c_{\rm TF}\rho(x)^{2/d}\leq 0\right)$$
where $(-i \hbar\nabla + A)^2_{C_R}$ is the magnetic Dirichlet Laplacian in the cube and $c_{\rm TF}=4\pi^2(d/|S^{d-1}|)^{2/d}$. Then we have
\begin{equation}
\lim_{N\to\ii}N^{-1}\tr(-i \hbar\nabla + A)^2\gamma_N
%&=\frac{1}{(2\pi)^d}\int_{\R^d}\int_{B_R}|p|^2\1\left(|p|^2\leq c\rho(x)^{2/d}\right)\,dx\,dp\\
=\frac{d}{d+2}c_{\rm TF}\int_{\R^d}\rho(x)^{1+2/d}\,dx
\end{equation}
and 
\begin{equation}
\lim_{N\to\ii}N^{-1}\tr\gamma_N
%&=\frac{1}{(2\pi)^d}\int_{\R^d}\int_{B_R}\1\left(|p|^2\leq c\rho(x)^{2/d}\right)\,dx\,dp\\
=\int_{\R^d}\rho(x)\,dx.
\end{equation}
Furthermore, $\rho_{\gamma_N}/N\wto\rho$ weakly in $L^1(\R^d)$ and weakly-$\ast$ in $L^\ii(\R^d)$.

The same properties all hold true if $\gamma_N$ is replaced by the projection $\tilde\gamma_N$ onto the $N$ lowest eigenvectors of $(-i \hbar\nabla + A)^2_{C_R}-c_{\rm TF}\rho(x)^{2/d}$.
\end{lemma}

\begin{remark}
The lemma is true under much weaker assumptions on $A$, but we have used the same hypothesis as in the rest of the paper for convenience.
\end{remark}

The lemma is proved in Appendix~\ref{app:proof_lemma_semi-classics} and the proof of Proposition~\ref{lem:upper_bound} now follows immediately. Indeed, let $\rho\in C^\ii_c(\R^d)$ and $\tilde \gamma_N$ be as in the lemma (extended by zero outside of $C_R$). Then, by~\eqref{eq:upper_bound_almost}
\begin{align*}
&\limsup_{N\to\ii}\frac{E(N)}{N}\\
&\qquad\leq \lim_{N\to\ii}\Bigg\{\frac{1+2\epsilon_N}{N}\tr|-i\hbar\nabla+A|^2{\tilde\gamma_N}+\frac1N\int_{\R^d}(V+2\epsilon_N|A|^2)\rho_{\tilde\gamma_N}\\
&\qquad\qquad\qquad+\frac1{2N^2}\iint_{\R^d\times\R^d}w(x-y)\rho_{\tilde\gamma_N}(x)\rho_{\tilde\gamma_N}(y)\,dx\,dy\Bigg\}\\
&\qquad=\frac{d}{d+2}4\pi^2\left(\frac{d}{|S^{d-1}|}\right)^{2/d}\int_{\R^d}\rho^{1+2/d}+\int_{\R^d}V\rho\\
&\qquad\qquad\qquad+\frac1{2}\iint_{\R^d\times\R^d}w(x-y)\rho(x)\rho(y)\,dx\,dy.
\end{align*}
Here we have used that 
$$N^{-1}\int_{\R^d}V\rho_{\tilde\gamma_N}=N^{-1}\int_{C_R}V\rho_{\tilde\gamma_N}\to\int_{C_R}V\rho,$$
that $\int|A|^2\rho_{\tilde\gamma_N}$ is uniformly bounded and that 
$$N^{-2}\iint_{C_R\times C_R}w(x-y)\rho_{\tilde\gamma_N}(x)\rho_{\tilde\gamma_N}(y)\,dx\,dy\to \iint_{C_R\times C_R}w(x-y)\rho(x)\rho(y)\,dx\,dy$$
since $\rho_{\tilde\gamma_N}/N$ converges weakly in $L^1(\R^d)$ and weakly-$\ast$ in $L^\ii(\R^d)$, and $V,w,|A|^2\in L^1_{\rm loc}(\R^d)$ by assumption. 
\end{proof}

%%%%%%%%%%%%%%%%%%%%%%%%%%%%%%%%%%%%%%%%%%
\subsection{A priori estimates}
%%%%%%%%%%%%%%%%%%%%%%%%%%%%%%%%%%%%%%%%%%

We now derive some \emph{a priori} bounds on any (normalized) sequence $\{\Psi_N\}$ of $N$-particle states with $\pscal{\Psi_N,H_N\Psi_N}=O(N)$, which will be useful for the lower bound.

\begin{lemma}[Lieb-Thirring]\label{lem:LT}
There exists a constant $C$ such that 
\begin{equation}
H_N\geq \sum_{j=1}^N\left(-\frac{\Delta_j}{4N^{\frac{2}d}}+V_+(x_j)\right)-CN\geq -CN.
\label{eq:stability}
\end{equation}
In particular, any $\{\Psi_N\}$ such that $\pscal{\Psi_N,H_N\Psi_N}=O(N)$ must satisfy
\begin{multline}
\frac{1}{N^{1+\frac{2}d}}\pscal{\Psi_N,\left(\sum_{j=1}^N-\Delta_j\right)\Psi_N}\\+\int_{\R^d}\left(\frac{\rho^{(1)}_{\Psi_N}(x)}{N}\right)^{1+\frac2d}\,dx+\int_{\R^d}V_+\frac{\rho^{(1)}_{\Psi_N}(x)}N\,dx\leq C,
\label{eq:estim_rho}
\end{multline}
with $\rho^{(1)}_{\Psi_N}$ the one-particle densities defined in~\eqref{eq:def_rho_k}. In addition, for any potential $0\leq f=f_1+f_2\in L^{1+d/2}(\R^d)+L^\ii(\R^d)$, we have
\begin{multline}
\frac1N\int_{\R^d}f(x)\rho^{(1)}_{\Psi_N}(x)\,dx+\frac{1}{N^2}\int_{\R^d}\int_{\R^d}f(x-y)\rho^{(2)}_{\Psi_N}(x,y)\,dx\,dy\\
\leq C\left(\norm{f_1}_{L^{1+d/2}(\R^d)}+\norm{f_2}_{L^\ii(\R^d)}\right).
\label{eq:estim_interaction}
\end{multline}
\end{lemma}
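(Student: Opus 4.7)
The a priori estimates (3.7) and (3.11) will follow from the stability bound (3.6) together with the Lieb-Thirring inequality, so the main work is to establish (3.6). My plan is to first dispose of the magnetic potential via the pointwise operator inequality
\[
(-i\hbar\nabla+A)^2 \geq \tfrac{\hbar^2}{2}(-\Delta) - |A|^2,
\]
which follows by expanding the square and completing the square. This reduces matters to controlling the 1-body negative potential $V_-+|A|^2$ and the mean-field interaction $\frac{1}{N}\sum_{k<\ell}w(x_k-x_\ell)$ in terms of a small fraction of the quantum kinetic energy $\hbar^2\sum_j(-\Delta_j)$, with an error of order $N$.

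\textbf{The 1-body piece.} Decompose $V_-+|A|^2 = U_1+U_2$ with $U_1\in L^{1+d/2}$ and $\|U_2\|_\infty$ arbitrarily small. Setting $T := \langle\Psi_N,\sum_j(-\Delta_j)\Psi_N\rangle$, Holder together with the Lieb-Thirring inequality $T\geq K\int(\rho^{(1)}_{\Psi_N})^{1+2/d}$ gives
\[
\int U_1\rho^{(1)}_{\Psi_N} \leq \|U_1\|_{L^{1+d/2}}\|\rho^{(1)}_{\Psi_N}\|_{L^{1+2/d}} \leq \|U_1\|_{L^{1+d/2}}(T/K)^{d/(d+2)},
\]
and Young's inequality $X^{d/(d+2)} \leq \alpha X + C\alpha^{-d/2}$ with $\alpha = \delta/N^{2/d}$ yields $\int U_1\rho^{(1)}_{\Psi_N} \leq \delta T/N^{2/d} + C_\delta N$. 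For $\delta$ small this absorbs into $T/(8N^{2/d})$ plus an $O(N)$ error. The $L^\infty$ part contributes at most $\|U_2\|_\infty N$.

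\textbf{The 2-body piece (main obstacle).} Write $w = w_1+w_2$ similarly; the $L^\infty$ part is trivially $\leq \|w_2\|_\infty N/2$. For $w_1\in L^{1+d/2}$ I must bound $\frac{1}{N}\int\int|w_1|(x-y)\rho^{(2)}_{\Psi_N}(x,y)\,dx\,dy$ by $\delta T/N^{2/d}+CN$. The naive Sobolev approach --- use $H^1(\R^d)\hookrightarrow L^{2+4/d}(\R^d)$ in one variable to obtain $\langle|w_1|(x_k-x_\ell)\rangle \leq \|w_1\|_{L^{1+d/2}}[\epsilon\langle-\Delta_k\rangle + C\epsilon^{-d/2}]$, sum over pairs and optimize $\epsilon$ --- produces an error of order $N^2$, which is too large. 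The plan is instead to exploit antisymmetry through an exchange-type decomposition in the spirit of the Lieb-Oxford inequality, reducing to the direct term $\frac{1}{2}\int\int|w_1|(x-y)\rho^{(1)}(x)\rho^{(1)}(y)\,dx\,dy$ plus an exchange correction bounded by $C\|w_1\|_{L^{1+d/2}}\int(\rho^{(1)})^{1+2/d}$. The direct term is controlled by Young's convolution $\||w_1|\ast\rho^{(1)}\|_{L^{1+d/2}} \leq \|w_1\|_{L^{1+d/2}}\|\rho^{(1)}\|_{L^1} = N\|w_1\|_{L^{1+d/2}}$, combined with Holder, giving $\frac{1}{N}\int\int|w_1|\rho^{(1)}\otimes\rho^{(1)} \leq \|w_1\|_{L^{1+d/2}}\|\rho^{(1)}\|_{L^{1+2/d}}$; applying LT and Young with $\alpha = \delta/N^{2/d}$ as before then yields the desired $\delta T/N^{2/d}+CN$ bound. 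The exchange correction is $\leq C\|w_1\|_{L^{1+d/2}}T/(KN)$, which is $O(N)$ for $d\geq 2$ thanks to $T \lesssim N^{1+2/d}$; the case $d=1$ is handled separately using the stronger embedding $H^1(\R)\hookrightarrow L^\infty(\R)$.

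\textbf{Consequences.} Combining the above pieces yields (3.6). The estimate (3.7) is then immediate: in any sequence with $\langle\Psi_N,H_N\Psi_N\rangle = O(N)$, rearranging (3.6) gives $T/(4N^{2/d}) + \int V_+\rho^{(1)}_{\Psi_N} \leq O(N)$, proving both the kinetic bound and $\int V_+\rho^{(1)}_{\Psi_N}/N \leq C$; the $L^{1+2/d}$ bound on $\rho^{(1)}_{\Psi_N}/N$ then follows from one further application of LT. Finally, (3.11) follows by combining Holder for the 1-body part, using the now-known bound $\|\rho^{(1)}_{\Psi_N}/N\|_{L^{1+2/d}} \leq C$ together with $\|\rho^{(1)}_{\Psi_N}/N\|_{L^1}=1$, with exactly the 2-body estimate developed in the previous paragraph, applied to the arbitrary $f$ in place of $w$.
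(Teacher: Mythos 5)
Your reduction of the magnetic term and your treatment of the one--body potential $V_-+|A|^2$ are fine and are essentially the dual formulation of what the paper does (the paper applies the operator form of Lieb--Thirring, $\sum_j(-\tfrac{\Delta_j}{8N^{2/d}}-V_1(x_j))\geq -\tfrac14\tr(-\tfrac{\Delta}{N^{2/d}}-4V_1)_-\geq -CN\int V_1^{1+d/2}$, which is equivalent to your H\"older--plus--kinetic--LT--plus--Young chain). The gap is in the two--body piece. You correctly see that the naive Sobolev bound gives an $N^2$ error, but the replacement you propose --- an ``exchange-type decomposition'' $\iint |w_1|(x-y)\,\rho^{(2)}_{\Psi_N}(x,y)\,dx\,dy\leq \tfrac12\iint|w_1|\,\rho^{(1)}\otimes\rho^{(1)}+C\|w_1\|_{L^{1+d/2}}\int(\rho^{(1)})^{1+2/d}$ --- is not an available inequality for general antisymmetric states. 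The identity $\rho^{(2)}=\rho^{(1)}\otimes\rho^{(1)}-|\gamma^{(1)}|^2\leq\rho^{(1)}\otimes\rho^{(1)}$ holds for Slater determinants, and the Lieb--Oxford inequality is a \emph{lower} bound on the repulsive Coulomb energy; neither gives the \emph{upper} bound on $\iint|w_1|\rho^{(2)}$ that you need for an arbitrary $\Psi_N$ (and the lemma must hold for arbitrary states, not just quasi-free ones). In addition, your control of the ``exchange correction'' by $C\|w_1\|T/(KN)=O(N)$ invokes $T\lesssim N^{1+2/d}$, which is precisely the conclusion~\eqref{eq:estim_rho} you are in the course of proving, so the argument is circular as written.

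The paper closes this gap with the L\'evy--Leblond symmetrization trick: since $|\Psi_N|^2$ is symmetric,
\begin{equation*}
\pscal{\Psi_N,\tfrac1N\sum_{k<\ell}w_-(x_k-x_\ell)\Psi_N}
=\tfrac12\,\pscal{\Psi_N,\sum_{j=2}^{N}w_-(x_1-x_j)\Psi_N},
\end{equation*}
and for each fixed $x_1$ the function $\Psi_N(x_1,\cdot)$ is antisymmetric in the remaining $N-1$ variables, so the operator Lieb--Thirring inequality applies in those variables to the one--body potential $y\mapsto w_-(x_1-y)$, whose $L^{1+d/2}$ norm is independent of $x_1$ by translation invariance. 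This yields
$\sum_j(-\tfrac{\Delta_j}{8N^{2/d}})-\tfrac1N\sum_{k<\ell}w_-(x_k-x_\ell)\geq -CN(\int w_1^{1+d/2}+\|w_2\|_{L^\infty})$
directly, with no recourse to a direct/exchange splitting of $\rho^{(2)}$. The same device (with $w_-$ replaced by $f/\epsilon$ and an optimization over $\epsilon$) is what gives the two--body half of~\eqref{eq:estim_interaction}, so your final step inherits the same gap. You should replace your two--body paragraph by this symmetrization argument; the rest of your proposal then goes through.
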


\begin{proof}
First we get rid of the magnetic field at the expense of a factor in front of the kinetic energy, by using the simple bound 
$$|-i\hbar\nabla+A|^2\geq -\frac{\hbar^2}{2}\Delta-|A|^2$$
as in~\eqref{eq:get_rid_magnetic}. Our assumption on $A$ implies that $|A|^2\in L^{1+d/2}(\R^d)+L^\ii_\epsilon(\R^d)$, just as for $V_-$. We may therefore write $V_-+|A|^2=V_1+V_2\in L^{1+d/2}(\R^d)+L^\ii(\R^d)$. The estimate on $V_2$ being obvious, we use the Lieb-Thirring inequality~\cite{LieThi-75,LieThi-76,LieSei-09} to obtain
\begin{equation}
\sum_{j=1}^N\left(-\frac{\Delta_j}{8N^{\frac{2}d}}-V_1(x_j)\right)\geq -\frac14\tr\left(-\frac{\Delta_j}{N^{\frac{2}d}}-4V_1\right)_-\geq -CN\int_{\R^d}V_1^{1+d/2}.
\label{eq:Lieb-Thirring}
\end{equation}
In a similar manner, we note that $|\Psi|^2$ is symmetric and rewrite, following L\'evy-Leblond~\cite{LevyLeblond-69},
\begin{align*}
\pscal{\Psi,\left(\frac{1}{N}\sum_{1\leq k<\ell\leq N}w_-(x_k-x_\ell)\right)\Psi}&=\frac{N-1}{2}\pscal{\Psi,w_-(x_1-x_2)\Psi}\\
&=\frac{1}{2}\pscal{\Psi,\left(\sum_{2\leq j\leq N}w_-(x_1-x_j)\right)\Psi}.
\end{align*}
We now use that $\Psi(x_1,\cdot)$ is antisymmetric in the remaining $N-1$ variables and the above Lieb-Thirring estimate in those variables. Since $w_-=w_1+w_2\in L^{1+d/2}(\R^d)+L^\ii(\R^d)$, we obtain as before
\begin{equation}
\sum_{j=1}^N-\frac{\Delta_j}{8N^{\frac{2}d}}-\frac{1}{N}\sum_{1\leq k<\ell\leq N}w_-(x_k-x_\ell)\\ \geq -CN\left(\int_{\R^d}w_1^{1+d/2}+\norm{w_2}_{L^\ii}\right).
\label{eq:estim_interaction_preliminary}
\end{equation}
This concludes the proof of~\eqref{eq:stability}.
The estimate~\eqref{eq:estim_rho} follows from the Lieb-Thirring inequality written in terms of the density
\begin{equation}
\pscal{\Psi_N,\left(\sum_{j=1}^N-\Delta_j\right)\Psi_N}\geq C\int_{\R^d}\big(\rho_{\Psi_N}^{(1)}\big)^{1+2/d},
\label{eq:Lieb-Thirring_density}
\end{equation}
which is dual to~\eqref{eq:Lieb-Thirring}.
In~\eqref{eq:estim_interaction}, the one-body part follows directly from~\eqref{eq:estim_rho} and Hölder's inequality. For the two-body estimate, we use~\eqref{eq:estim_interaction_preliminary} with $w_-$ replaced by $f/\epsilon$ and obtain
$$\epsilon\sum_{j=1}^N-\frac{\Delta_j}{8N^{\frac{2}d}}-\frac{1}{N}\sum_{1\leq k<\ell\leq N}f(x_k-x_\ell)\\ \geq -CN\left(\epsilon^{-d/2}\int_{\R^d}f_1^{1+d/2}+\norm{f_2}_{L^\ii}\right).$$
Taking the expectation value in the state $\Psi_N$ and using~\eqref{eq:estim_rho}, we find~\eqref{eq:estim_interaction} after optimizing over $\epsilon$.
\end{proof}

%%%%%%%%%%%%%%%%%%%%%%%%%%%%%%%%%%%%%%%%%%
\subsection{Lower bound and end of the proof of Theorem~\ref{thm:CV_GS_energy}}\label{sec:energy_lower_bound}
%%%%%%%%%%%%%%%%%%%%%%%%%%%%%%%%%%%%%%%%%%
Here we establish the lower bound, which concludes the proof of Theorem~\ref{thm:CV_GS_energy}.

\begin{proposition}[Lower bound]\label{lem:lower_bound}
Under the assumptions of Theorem~\ref{thm:CV_GS_energy}, we have
\begin{equation}
\liminf_{N\to\ii}\frac{E(N)}{N}\geq e^V_{\rm TF}(1),
\label{eq:lower_bd_to_be_proven}
\end{equation}
\end{proposition}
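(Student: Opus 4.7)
The plan is to take any sequence $\Psi_N$ with $\pscal{\Psi_N,H_N\Psi_N}=E(N)+o(N)$. By Lemma~\ref{lem:LT} the kinetic bound~\eqref{eq:semi-classical_bound} is satisfied, so Theorem~\ref{thm:subsequences_m_k} furnishes, along a subsequence $\{N_j\}$, a Borel probability measure ${\rm P}$ on $\cB$ such that $m^{(k)}_{f,\Psi_{N_j}}\wto\int_\cB\mu^{\otimes k}\,d{\rm P}(\mu)$ for all $k\geq1$ and every smooth $f$, together with $(k!/N_j^k)\rho^{(k)}_{\Psi_{N_j}}\wto\int_\cB\rho_\mu^{\otimes k}\,d{\rm P}(\mu)$ (weakly in $L^1\cap L^{1+2/d}$ for $k=1$, weakly-$\ast$ in measures for $k\geq2$).

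The main step is to pass to the liminf in $\pscal{\Psi_N,H_N\Psi_N}/N$ term by term. For the magnetic kinetic energy I would apply Corollary~\ref{cor:kinetic} with $f\in C^\ii_c(\R^d)$: the correction $N\hbar\int|\nabla f|^2$ is $O(N^{1-1/d})$, and the two convolution errors involving $A-A\ast|f^\hbar|^2$ and $|A|^2-|A|^2\ast|f^\hbar|^2$ tend to zero in the $L^p$-norms dual to $\rho^{(1)}/N\in L^{1+2/d}$, which is enough to make them $o(N)$ via H\"older and, for the cross term, Cauchy--Schwarz against the a priori kinetic bound. What remains, $(2\pi)^{-d}\int|p+A(x)|^2m^{(1)}_{f,\Psi_N}$, is weakly lower semicontinuous as the pairing of a non-negative symbol against the converging Husimi measure. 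For the external potential I split $V=V_+-V_-$: the part $V_-\in L^{1+d/2}+L^\ii_\epsilon$ pairs directly with the weak convergence of $\rho^{(1)}/N$ in $L^1\cap L^{1+2/d}$, while $V_+\in L^1_{\rm loc}$ is handled by Fatou. For the interaction $(2N^2)^{-1}\iint w(x-y)\rho^{(2)}_{\Psi_N}$ I split $w=w_1+w_2$ with $w_1$ bounded, continuous and compactly supported and $w_2$ small in $L^{1+d/2}+L^\ii$: the term $w_1$ passes to the limit via weak-$\ast$ convergence of $2\rho^{(2)}/N^2$ against $C^0_0$, and the a priori estimate~\eqref{eq:estim_interaction} controls the remainder uniformly in $N$ (lower semicontinuity taking care of its positive part). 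Assembling these bounds gives
\[
\liminf_{j\to\ii}\frac{\pscal{\Psi_{N_j},H_{N_j}\Psi_{N_j}}}{N_j}\;\geq\;\int_\cB \cEVl^{V,A}(\mu)\,d{\rm P}(\mu).
\]

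To conclude, for every $\mu\in\cB$ a bathtub-type minimization in $p$ at fixed marginal $\rho_\mu$, enabled precisely by the fermionic constraint $0\leq\mu\leq1$, yields $\cEVl^{V,A}(\mu)\geq\cETF^V(\rho_\mu)\geq e^V_{\rm TF}(\int\rho_\mu)$. In the confined case $\int\rho_\mu=1$ holds ${\rm P}$-almost surely and~\eqref{eq:lower_bd_to_be_proven} follows immediately. The main obstacle is the unconfined case, where $\int\rho_\mu$ may be strictly less than $1$ on a set of positive ${\rm P}$-measure and the missing mass $1-\int\rho_\mu$ has escaped to infinity. I would recover its contribution $e^0_{\rm TF}(1-\int\rho_\mu)$ via a geometric localization of $\Psi_N$ at a scale $R_N\to\ii$ in the spirit of~\cite{Lewin-11}, using that $V,w,|A|^2\in L^{1+d/2}+L^\ii_\epsilon$ all vanish at infinity in the relevant dual sense so that the outside subsystem behaves as if $V=A=0$. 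Combined with the binding inequality
\[
e^V_{\rm TF}(\lambda)+e^0_{\rm TF}(1-\lambda)\;\ge\; e^V_{\rm TF}(1),\qquad 0\leq\lambda\leq 1,
\]
which is itself immediate from placing two far-apart trial densities with masses $\lambda$ and $1-\lambda$ in the definition of $e^V_{\rm TF}(1)$ and letting the separation go to infinity, integration against ${\rm P}$ yields~\eqref{eq:lower_bd_to_be_proven}.
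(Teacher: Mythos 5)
Your strategy is sound and, in the \emph{confined} case, essentially coincides with the paper's own argument: the authors explicitly remark in Section~\ref{sec:proof_CV_states} that when $V_+\to+\ii$ the de Finetti/Husimi machinery gives the energy lower bound directly, without the dedicated argument of Section~\ref{sec:energy_lower_bound}. Your term-by-term passage to the limit for the kinetic and one-body terms is correct, and the reduction $\cEVl^{V,A}(\mu)\geq\cETF^V(\rho_\mu)\geq e^V_{\rm TF}\big(\int\rho_\mu\big)$ by the bathtub principle is exactly what is needed there.

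The gap is in the unconfined case, and it is not a technicality: your localization step is circular. To recover the contribution $e^0_{\rm TF}(1-\int\rho_\mu)$ of the escaping mass you must bound from below the ground-state energy of the outside subsystem, which is an $n$-body Hamiltonian with $V=A=0$ but with the \emph{full} interaction $\frac{1}{N}\sum w(x_j-x_k)$ among the escaped particles: the hypothesis $w\in L^{1+d/2}(\R^d)+L^\ii_\epsilon(\R^d)$ kills the inside--outside cross terms, not the outside--outside ones. That lower bound is precisely an instance of the proposition you are proving, now with $V=0$, i.e.\ a translation-invariant and hence maximally unconfined problem for which a further localization gains nothing (all of the mass can always be regarded as escaped). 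For purely attractive $w$ (non-relativistic stars, $w=-1/|x|$) this free problem \emph{is} the entire content of the statement, so your sketch has no independent base case. This is why the paper does not prove the lower bound by compactness methods at all: it first regularizes $w$ using~\eqref{eq:estim_interaction} so that $\widehat{w}\in L^1(\R^d)$, splits $\widehat{w}=(\widehat{w})_+-(\widehat{w})_-$, converts the attractive part into an interaction with $L=\lfloor\sqrt N\rfloor$ auxiliary classical particles in the manner of L\'evy-Leblond and Lieb--Yau, and then uses positive-definiteness (Lemma~\ref{lem:Lieb-Oxford_easy}) to bound both two-body terms below by the direct term $\tfrac12 D_w(\rho^{(1)}_{\Psi},\rho^{(1)}_{\Psi})$ up to an $O(N/L)=O(\sqrt N)$ error, reducing $E(N)$ to the reduced Hartree--Fock functional whose semiclassical limit is Thomas--Fermi (Lemma~\ref{lem:rHF}); no compactness issue arises anywhere. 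You need an ingredient of this global kind to control the attractive part of $w$; the Fock-space localization of~\cite{Lewin-11} is used in the paper only for the convergence of states, \emph{after} Theorem~\ref{thm:CV_GS_energy} is available.
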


\begin{proof}
The first step is to regularize the interaction potential $w$. For every $\epsilon>0$, we may write $w=f+g$ with $f\in L^{1+d/2}(\R^d)$ and $\|g\|_{L^\ii}\leq \epsilon$. We then consider the problem $E'(N)$ in which $w$ has been replaced by another potential $w'$, and a state $\Psi_N$ such that $\pscal{\Psi_N,H_N\Psi_N}=E(N)+o(N)$. Then by~\eqref{eq:estim_interaction}, we obtain
\begin{multline*}
\pscal{\Psi_N,\sum_{1\leq j<k\leq N}w(x_j-x_k)\Psi_N}\\ \geq \pscal{\Psi_N,\sum_{1\leq j<k\leq N}w'(x_j-x_k)\Psi_N}-CN\norm{f-w'}_{L^{1+d/2}(\R^d)}-\frac{N-1}{2}\epsilon
\end{multline*}
and, therefore,
\begin{equation}
E(N)\geq E'(N)-CN\norm{f-w'}_{L^{1+d/2}(\R^d)}-\frac{N-1}{2}\epsilon.
\label{eq:regularize}
\end{equation}
A similar bound holds for the Thomas-Fermi model and we conclude by an `$\epsilon/2$ argument' that it suffices to prove~\eqref{eq:lower_bd_to_be_proven} for a smooth interaction potential $w'$ which  approximates $f$. For simplicity of notation, we will assume for the rest of the proof that $w$ is itself smooth enough. The exact property we need is that $\widehat{w}\in L^1(\R^d)$, which implies that $w$ is a continuous bounded function that tends to zero at infinity.

Now, we write $w=w_1-w_2$ where $\widehat{w_1}=(\widehat{w})_+$ and $\widehat{w_2}=(\widehat{w})_-$ which are also in $L^1(\R^d)$. Note that $w_1$ and $w_2$ are in addition both even since $\widehat{w}$ is real. We will use the fact that
\begin{equation}
\pscal{\Psi_N,\sum_{1\leq j<k\leq N}w_{1,2}(x_j-x_k)\Psi_N}=\frac{N(N-1)}{2}\pscal{\Psi_N,w_{1,2}(x_\ell-x_m)\Psi_N}
% \text{and}\quad \pscal{\Psi_N,\sum_{j=1}^NV(x_j)\Psi_N}=\pscal{\Psi_N,V(x_k)\Psi_N}
\label{eq:trick_symmetry}
\end{equation}
for any $\ell\neq m$, by the symmetry of $|\Psi|^2$. The idea is now to split the $N$ particles into two groups of, respectively $M$ and $L$ particles, with $L=\lfloor\sqrt{N}\rfloor$. The first group will just be $x_1,...,x_M$ whereas the second (smaller) group will be denoted for convenience as $y_1=x_{M+1},...,y_{L}=x_N$. Then, we express the repulsive part $w_1$ of the potential $w$ using only the particles in the first group and write the attractive part $w_2$ as an interaction between the two groups. This means that we rewrite, using~\eqref{eq:trick_symmetry},
\begin{multline*}
\pscal{\Psi_N,\sum_{1\leq j<k\leq N}w_1(x_j-x_k)\Psi_N}\\=\frac{N(N-1)}{M(M-1)}\pscal{\Psi_N,\sum_{1\leq m<m'\leq M}w_1(x_m-x_{m'})\Psi_N} 
\end{multline*}
on one hand, and
\begin{align*}
&-\pscal{\Psi_N,\sum_{1\leq j<k\leq N}w_2(x_j-x_k)\Psi_N}\\
&\qquad\qquad=N(N-1)\pscal{\Psi_N,\left(\frac{w_2(y_1-y_2)}2-w_2(x_1-y_1)\right)\Psi_N}\\
&\qquad\qquad=\frac{N(N-1)}{L(L-1)}\pscal{\Psi_N,\sum_{1\leq \ell<\ell'\leq L}w_2(y_\ell-y_{\ell'})\Psi_N}\\
&\qquad\qquad\qquad\qquad-\frac{N(N-1)}{LM}\pscal{\Psi_N,\sum_{m=1}^M\sum_{\ell=1}^Lw_2(x_m-y_{\ell})\Psi_N}
\end{align*}
on the other hand. Expressing the one-particle terms in a similar manner using only the $M$ particles of the first group, we deduce that
$$\frac{\pscal{\Psi_N,H_N\Psi_N}}N=\frac{\pscal{\Psi_N,\tilde H\Psi_N}}M$$
where 
\begin{multline*}
\tilde H = \sum_{m=1}^M |-i\hbar\nabla_m+A(x_m)|^2+V(x_m)+\frac{1-1/N}{M-1}\sum_{1\leq m<m'\leq M}w_1(x_m-x_{m'})\\
+\frac{M(1-1/N)}{L(L-1)}\sum_{1\leq \ell<\ell'\leq L}w_2(y_\ell-y_{\ell'})-\frac{(1-1/N)}{L}\sum_{m=1}^M\sum_{\ell=1}^Lw_2(x_m-y_{\ell}).
\end{multline*}
This new Hamiltonian corresponds to a system with $M\sim N$ quantum particles which repel through the potential $w_1$ and $L\sim\sqrt N$ classical particles that repel through the potential $w_2$, with an additional attraction between the two species given by $-w_2$. In other words, we have transformed the attractive part $w_2$ of $w$ into an interaction with an auxiliary system of $L$ particles that repel each other. A similar approach was used by Lieb-Thirring~\cite{LieThi-84} and Lieb-Yau~\cite{LieYau-87} and it was in turm inspired by arguments of Levy-Leblond~\cite{LevyLeblond-69} and Dyson-Lenard~\cite{DysLen-67}. For bosons the argument was explained in~\cite{Lewin-15} (there one can take $M$ and $L$ of the order $N$, which simplifies a bit the argument). 
From the antisymmetry of $\Psi_N$ in the $M$ first variables, we conclude that
$$\frac{\pscal{\Psi_N,H_N\Psi_N}}N\geq \inf_{y_1,...,y_L}\frac{\inf\sigma_{\bigwedge^M_1L^2(\R^d)}(\tilde H)}M$$
and it therefore remains to estimate the bottom of the spectrum of $\tilde H$ for $M$ fermions, uniformly with respect to the positions $y_1,...,y_L$ of the $L$ classical particles. The rest of the argument will then be based on the following well-known lemma, which allows to bound from below the two-body interaction by a one-particle term.

\begin{lemma}[Estimating the two-body potential by a one-particle term]\label{lem:Lieb-Oxford_easy}
Let $f$ be a function such that $\widehat{f}$ is in $L^1(\R^d)$ and non-negative, and denote
$$D_f(g,g):=\int_{\R^d}\int_{\R^d}g(x)g(y)f(x-y)\,dx\,dy=(2\pi)^{\frac{d}2}\int_{\R^d}|\widehat{g}(k)|^2\widehat{f}(k)\,dk\geq0.$$
Then we have
\begin{equation}
\sum_{1\leq k<k'\leq K}f(z_k-z_{k'})\geq -\frac{f(0)}{2}K+\sum_{k=1}^Kf\ast\eta(z_k)-\frac12D_f(\eta,\eta)
\label{eq:relation_positive_Fourier}
\end{equation}
for every $\eta\in L^1(\R^d)$ and every $z_1,...,z_K\in\R^d$, and
\begin{equation}
\pscal{\tilde\Psi,\sum_{1\leq k<k'\leq K}f(z_k-z_{k'})\tilde\Psi}\geq -\frac{f(0)}{2}K+\frac12D_f(\rho^{(1)}_{\tilde\Psi},\rho^{(1)}_{\tilde\Psi}).
\label{eq:Lieb-Oxford-easy}
\end{equation}
for every $K$-particle state $\tilde\Psi$.
\end{lemma}

\begin{proof}[Proof of Lemma~\ref{lem:Lieb-Oxford_easy}]
Use that $D_f(g,g)\geq0$ for $g=\sum_{k=1}^K\delta_{z_k}-\eta$ and then take $\eta=\rho_{\tilde\Psi}^{(1)}$.
\end{proof}

Now, let $\tilde\Psi$ be any fermionic $M$-particle state. Using~\eqref{eq:Lieb-Oxford-easy} with $f=w_1$, we get
\begin{align*}
&\frac{1-1/N}{M-1}\pscal{\tilde\Psi,\sum_{1\leq m<m'\leq M}w_1(x_m-x_{m'})\tilde\Psi}\\
&\qquad\qquad \geq \frac{1-1/N}{2(M-1)}D_{w_1}(\rho^{(1)}_{\tilde\Psi},\rho^{(1)}_{\tilde\Psi})-\frac{M(1-1/N)}{2(M-1)}w_1(0)\\
&\qquad\qquad\geq \frac{1}{2M}D_{w_1}(\rho^{(1)}_{\tilde\Psi},\rho^{(1)}_{\tilde\Psi})-\frac{M}{2(M-1)}w_1(0).
\end{align*}
Using now~\eqref{eq:relation_positive_Fourier} with $f=w_2$ and $\eta=\rho_{\tilde\Psi}^{(1)}(L-1)/M$, we find 
\begin{align*}
&\frac{M(1-1/N)}{L(L-1)}\sum_{1\leq \ell<\ell'\leq L}w_2(y_\ell-y_{\ell'})\\
&\qquad\qquad\qquad-\frac{(1-1/N)}{L}\pscal{\tilde\Psi,\sum_{m=1}^M\sum_{\ell=1}^Lw_2(x_m-y_{\ell})\tilde\Psi}\\
&\qquad\geq -\frac{(L-1)(1-1/N)}{2ML}D_{w_2}(\rho^{(1)}_{\tilde\Psi},\rho^{(1)}_{\tilde\Psi})-\frac{M(1-1/N)}{2(L-1)}w_2(0)\\
&\qquad\geq -\frac{1}{2M}D_{w_2}(\rho^{(1)}_{\tilde\Psi},\rho^{(1)}_{\tilde\Psi})-\frac{M}{2(L-1)}w_2(0).
\end{align*}
Recall that $\tilde\Psi$ only depends on the $x_m$'s. We have therefore proved the following lower bound, independent of the $y_\ell$'s,
\begin{align*}
\frac{\langle\tilde\Psi,\tilde H\tilde\Psi\rangle}M \geq& \frac1M\tr\left[\big(|-i\hbar\nabla+A|^2+V\big)\gamma_{\tilde\Psi}^{(1)}\right]+\frac{1}{2M^2}D_{w}(\rho^{(1)}_{\tilde\Psi},\rho^{(1)}_{\tilde\Psi})\\
&\qquad-\frac{1}{2(M-1)}w_1(0)-\frac{1}{2(L-1)}w_2(0)\\
\geq&\frac1M\tr\big(|-i\hbar\nabla+A|^2+V\big)\gamma_{\tilde\Psi}^{(1)}+\frac{1}{2M^2}D_{w}(\rho^{(1)}_{\tilde\Psi},\rho^{(1)}_{\tilde\Psi})\\
&\qquad-\frac{\|\widehat w\|_{L^1}}{(2\pi)^{d/2}\sqrt{N}},
\end{align*}
where we recall that $M=N-\lfloor\sqrt{N}\rfloor$ and that $\hbar=N^{-1/d}\simeq M^{-1/d}$. The term on the right is (up to the last constant) the reduced Hartree-Fock (rHF) energy of the $M$ particles, that is well-known to converge to the Thomas-Fermi problem in the limit $M\simeq N\to\ii$~\cite{LieSim-73,LieSim-77,Lieb-81b}. Even if we only need the lower bound, we state and prove the full convergence.

\begin{lemma}[Reduced-Hartree-Fock model]\label{lem:rHF}
Under the same assumptions on $V$, $A$ and $w$ as in Theorem~\ref{thm:CV_GS_energy}, we have
\begin{equation}
\lim_{\substack{N\to\ii\\ \hbar N^{1/d}\to1}}\inf_{\substack{0\leq\gamma\leq 1\\ \tr(\gamma)=N}}\left\{\frac1N\tr\big(|-i\hbar^2\nabla+A|^2+V\big)\gamma+\frac{1}{2N^2}D_{w}(\rho_\gamma,\rho_\gamma)\right\}=e_{\rm TF}^V(1).
\end{equation}
\end{lemma}

\begin{proof}[Proof of Lemma~\ref{lem:rHF}]
The upper bound follows from Lemma~\ref{lem:semi-classics}. By~\eqref{eq:get_rid_magnetic} and the Lieb-Thirring inequality expressed in terms of the one-particle density matrix
$$\tr(-\Delta)\gamma\geq C\int_{\R^d}\rho_\gamma^{1+\frac{2}{d}},\qquad \forall 0\leq \gamma\leq 1$$
we deduce that any appropriate minimizing sequence $\gamma_N$ satisfies
\begin{equation}
N^{2/d}\int_{\R^d} V_+\rho_{\gamma_N}+\int_{\R^d}\rho_{\gamma_N}^{1+2/d}+\tr(-\Delta)\gamma_N\leq CN^{1+2/d}. 
 \label{eq:bound_rHF}
\end{equation}
Under the assumption that $V\in L^{1+d/2}(\R^d)+L^\ii_\eps(\R^d)$, these estimates can be used to replace the potentials $V$, $A$ and $w$ by functions in $C^\ii_c(\R^d)$.
When $V_+\to+\ii$ at infinity, we first bound from below $V_+$ by the truncated potential 
$$V_+^M:=V_+\1(|V_+|\leq M)+M\1(|V_+|\geq M)$$
and take $M\to\ii$ at the very end of the argument, using that 
$$\lim_{M\to\ii}e^{V_+^M-V_-}_{\rm TF}(1)=e^{V}_{\rm TF}(1).$$ 
The previous estimates then allow to replace $A$ and $w$ by functions in $C^\ii_c(\R^d)$ and $V_M$ by a function in $C^\ii(\R^d)$, which is equal to $M$ outside of a sufficiently large ball.

We now fix a real-valued normalized function $f\in L^2(\R^d)$ and use formulas~\eqref{eq:formula_kinetic_energy_magnetic_field} and~\eqref{eq:link_PDM_x} to obtain
\begin{align*}
\tr(|-i\hbar^2\nabla+A|^2+V)\gamma_N=&\frac{1}{(2\pi\hbar)^d}\int_{\R^{2d}}\big(|p+A(x)|^2+V(x)\big)m_N(x,p)\,dx\,dp\\
&-N\hbar\int_{\R^d}|\nabla f|^2-2\tr(A-A\ast|f^\hbar|^2)\cdot(-i\hbar\nabla)\gamma_N\\
&+\int_{\R^d}(|A|^2-|A|^2\ast|f^\hbar|^2+V-V\ast|f^\hbar|^2)\rho_{\gamma_N}
\end{align*}
and
\begin{align*}
D_{w}(\rho_{\gamma_N},\rho_{\gamma_N})=\hbar ^{-2d}D_{w}(\rho_{m_N},\rho_{m_N})+D_{w-w\ast|f^\hbar|^2\ast|f^\hbar|^2}(\rho_{\gamma_N},\rho_{\gamma_N})
\end{align*}
where $m_N(x,p)=\pscal{f_{x,p}^\hbar,\gamma_N f_{x,p}^\hbar}$ and $\rho_{m_N}=(2\pi)^{-d}\int_{\R^d}m_N(x,p)\,dp$. Note that $0\leq m_N\leq 1$ and $(2\pi)^{-d}\iint m_N(x,p)\,dx\,dp=1$, hence $m_N$ is a suitable trial function for the Vlasov problem. Using the bound~\eqref{eq:bound_rHF} we have
\begin{align*}
\tr(A-A\ast|f^\hbar|^2)\cdot(-i\hbar\nabla)\gamma&\geq -\norm{(-i\hbar\nabla)\sqrt{\gamma_N}}_{\gS_2}\norm{\sqrt{\gamma_N}(A-A\ast|f^\hbar|^2)}_{\gS^2}\\
&=-\sqrt{\hbar^2\tr(-\Delta)\gamma_N}\sqrt{\int_{\R^d}(A-A\ast|f^\hbar|^2)^2\rho_{\gamma_N}}\\
&\geq -CN\norm{A-A\ast|f^\hbar|^2}_{L^{2+d}(\R^d)}.
\end{align*}
We can argue similarly for the other terms (when $V\in C^\ii(\R^d)$ is equal to $M$ outside of a large ball $B_R$, we have to use that $V-V\ast|f^\hbar|^2=(V-M)-(V-M)\ast|f^\hbar|^2$ has compact support and converges to $0$ strongly in $L^{1+d/2}(\R^d)$). We obtain
\begin{multline*}
 \frac1N \tr(|-i\hbar^2\nabla+A|^2+V)\gamma_N+\frac{1}{2N^2}D_{w}(\rho_{\gamma_N},\rho_{\gamma_N})\\
 \geq \frac{1}{(2\pi)^dN\hbar^d}\int_{\R^{2d}}\big(|p+A(x)|^2+V(x)\big)m_N(x,p)\,dx\,dp\\
 +\frac{1}{2N^2\hbar^{2d}}D_{w}(\rho_{m_N},\rho_{m_N})+o(1).
\end{multline*}
Since $N\hbar^d\to1$, we obtain the Vlasov energy on the right and the proof of Lemma~\ref{lem:rHF} is complete.
\end{proof}

This now concludes the proof of the lower bound~\eqref{eq:lower_bd_to_be_proven}, hence of Theorem~\ref{thm:CV_GS_energy}.
\end{proof}

%%%%%%%%%%%%%%%%%%%%%%%%%%%%%%%%%%%%%%%%%%
\section{Convergence of states: proof of Theorems~\ref{thm:CV_states_confined} and~\ref{thm:CV_states_unconfined}}\label{sec:proof_CV_states}
%%%%%%%%%%%%%%%%%%%%%%%%%%%%%%%%%%%%%%%%%%

Let $\{\Psi_N\}$ be as in the statement, that is, such that $\pscal{\Psi_N,H_N\Psi_N}=E(N)+o(N)$. By the estimates in Lemma~\ref{lem:LT}, we know that $\Psi_N$ satisfies the semi-classical kinetic energy bound
\begin{equation}
\pscal{\Psi_N,\left(\sum_{j=1}^N-\Delta_j\right)\Psi_N}\leq CN^{1+2/d}.
\label{eq:semi-classical_bound2}
\end{equation}
This implies by Theorem~\ref{thm:subsequences_m_k} that, after extraction of a (not displayed) subsequence,
\begin{equation}
m^{(k)}_{f,\Psi_{N}}\wto\int_\cB \mu^{\otimes k}\;d{\rm P}(\mu),\qquad {\mathscr W}^{(k)}_{\Psi_{N}}\wto\int_\cB \mu^{\otimes k}\;d{\rm P}(\mu),
\label{eq:def_weak_limit2}
\end{equation}
% \begin{equation}
% {\mathscr W}^{(k)}_{\Psi_{N}}\to\int_\cB \mu^{\otimes k}\;d{\rm P}(\mu),
% \label{eq:def_weak_limit_Wigner3}
% \end{equation}
\begin{equation}
\frac{k!}{N^{k}}\rho^{(k)}_{\Psi_N}\wto \int_\cB \rho_\mu^{\otimes k}\;d{\rm P}(\mu),
\label{eq:def_weak_limit_density2}
\end{equation}
weakly for all $k\geq1$, where $\rm P$ is a probability measure on 
$$\cB=\left\{\mu\in L^1(\R^{2d})\ :\ 0\leq \mu\leq 1,\ (2\pi)^{-d}\int_{\R^{2d}} \mu\leq 1\right\}.$$

Therefore, there only remains to show that $\rm P$ has its support in the set of minimizers of the Vlasov energy (in the confined case) and in the set of weak limits of minimizing sequences (in the unconfined case). %Since minimizers are all of the form~\eqref{eq:def_m_rho} where $\rho$ minimizes the Thomas-Fermi energy, the probability measure $\rm P(\mu)$ induces the cylindrical projection $\mathscr P(\rho)$ on $\rho$.

%%%%%%%%%%%%%%%%%%%%%%%%%%%%%%%%%%%%%%%%%%
\subsection{Confined case: proof of Theorem~\ref{thm:CV_states_confined}}
%%%%%%%%%%%%%%%%%%%%%%%%%%%%%%%%%%%%%%%%%%

We start with the much simpler confined case where $V_+\to+\ii$ at infinity. In that case, the argument below actually gives directly the energy lower bound, and the arguments of Section~\ref{sec:energy_lower_bound} are not needed. From the bound~\eqref{eq:estim_rho} we have 
\begin{multline*}
\frac{1}{N}\int_{\R^d} V_+(x)\rho^{(1)}_{\Psi_N}(x)\,dx\\=\frac{1}{N(N-1)}\iint_{\R^{2d}} \big(V_+(x)+V_+(y)\big) \rho^{(2)}_{\Psi_N}(x,y)\,dx\,dy\leq C 
\end{multline*}
and infer that $\rho^{(1)}_{\Psi_N}/N$ and $\rho^{(2)}_{\Psi_N}/N^2$ are tight at infinity. From this we infer that there is no loss of mass at infinity and that $\rm P$ has its support on
$$\cS=\left\{\mu\in L^1(\R^{2d})\ :\ 0\leq \mu\leq 1,\ \frac{1}{(2\pi)^{d}}\int_{\R^{2d}} \mu=\int_{\R^d}\rho_\mu= 1\right\}.$$
Using~\eqref{eq:def_weak_limit_density2} and the tightness property at infinity, we are able to pass to the limit in the interaction term:
\begin{multline*}
\lim_{N\to\ii}\pscal{\Psi_N,\frac{1}{N^2}\sum_{1\leq j<k\leq N}w(x_j-x_k)\Psi_N}\\
=\lim_{N\to\ii}\int_{\R^{2d}}w(x-y)\frac{\rho_{\Psi_N}^{(2)}(x,y)}{N^2}\,dx\,dy  = \frac12\int_\cS D(\rho_\mu,\rho_\mu)\;d{\rm P}(\mu).
\end{multline*}
In order to pass from the second to the third line, one must first use the bound~\eqref{eq:estim_interaction} to replace the interaction potential $w$ by a bounded function, and then use the weak convergence of $\rho^{(2)}_{\Psi_N}$ in $L^1((\R^d)^2)$.
By Fatou's lemma for $V_+$ and the weak convergence of $\rho^{(1)}_{\Psi_N}$ for $V_-$, we also deduce that
$$\liminf_{N\to\ii}\pscal{\Psi_N,\frac{1}{N}\sum_{j=1}^NV(x_j)\Psi_N}=\liminf_{N\to\ii}\int_{\R^d}V\frac{\rho^{(1)}_{\Psi_N}}{N}\geq \int_\cS \left(\int_{\R^d}V\rho_\mu\right)\;d{\rm P}(\mu).$$
Finally, using the same argument as in the proof of Lemma~\ref{lem:rHF} for the magnetic kinetic energy, we also deduce from Fatou's lemma that
\begin{align*}
&\liminf_{N\to\ii}\pscal{\Psi_N,\frac{1}{N}\sum_{j=1}^N|-i\hbar\nabla_j+A(x_j)|^2\Psi_N}\\
&\qquad\qquad= \liminf_{N\to\ii} \frac{1}{(2\pi)^d}\iint_{\R^{2d}}|p+A(x)|^2m^{(1)}_{f,\Psi_N}(x,p)\,dx\,dp\\
&\qquad\qquad\geq \int_\cS\left(\frac{1}{(2\pi)^d}\iint_{\R^{2d}}|p+A(x)|^2\mu(x,p)\,dx\,dp\right)\,d{\rm P}(\mu).
\end{align*}
Therefore, we have proved that
$$e^V_{\rm TF}(1)\geq \lim_{N\to\ii}\frac{E(N)}{N}=\lim_{N\to\ii}\frac{\pscal{\Psi_N,H_N\Psi_N}}{N}\geq \int_\cS \cE_{\rm Vla}^{V,A}(\mu)\,d{\rm P}(\mu).$$
The minimum of the Vlasov energy is precisely $e^V_{\rm TF}(1)$ and since $\rm P$ is a probability, we have
$$\int_\cS \cE_{\rm Vla}^{V,A}(\mu)\,d{\rm P}(\mu)=\int_\cS \underbrace{\left(\cE_{\rm Vla}^{V,A}(\mu)-e_{\rm TF}^V(1)\right)}_{\geq0}\,d{\rm P}(\mu)+e_{\rm TF}^V(1)$$
and there must be equality everywhere. This implies that $\rm P$ is supported on the minimizers of $\cE_{\rm Vla}^{V,A}$. These are all of the form $m_\rho(x,p)=\1(|p+A(x)|^2\leq c_{\rm TF}\rho(x)^{2/d})$ where $\rho$ minimizes the Thomas-Fermi energy. Hence, $\rm P$ induces a probability density $\mathscr P(\rho)$ on the set of minimizers of the Thomas-Fermi energy and this concludes the proof of Theorem~\ref{thm:CV_states_confined}.\qed

%%%%%%%%%%%%%%%%%%%%%%%%%%%%%%%%%%%%%%%%%%
\subsection{Unconfined case: proof of Theorem~\ref{thm:CV_states_unconfined}}
%%%%%%%%%%%%%%%%%%%%%%%%%%%%%%%%%%%%%%%%%%

The proof in the unconfined situation is much more subtle and we will follow here ideas from~\cite{LewNamRou-14}, based on localization methods in Fock space~\cite{Lewin-11}.

Let $\chi$ be a smooth function on $\R^+$ with $0\leq\chi\leq1$, $\chi_{|[0,1]}\equiv1$ and $\chi_{|[2,\ii)}\equiv 0$. Denote $\chi_R(x)=\chi(|x|/R)$ and $\eta_R=(1-\chi_R^2)^{1/2}$. Let 
$$G_{R,n}^{-}={N\choose n}(\chi_R)^{\otimes n}\tr_{n+1,...,N}\Big((\eta_R)^{\otimes N-n}|\Psi_N\rangle\langle\Psi_N|(\eta_R)^{\otimes N-n}\Big)(\chi_R)^{\otimes n}$$
be the projection onto the $n$-particle space of the many-particle state localized using $\chi_R$~\cite{Lewin-11}, and, in a similar fashion,
$$G_{R,n}^{+}={N\choose n}(\eta_R)^{\otimes n}\tr_{n+1,...,N}\Big((\chi_R)^{\otimes N-n}|\Psi_N\rangle\langle\Psi_N|(\chi_R)^{\otimes N-n}\Big)(\eta_R)^{\otimes n}.$$
We remark that
\begin{equation}
\tr G_{R,n}^{+}=\tr G_{R,N-n}^{-}
\label{eq:fundamental}
\end{equation}
and that the $k$-particle density matrices are localized in the usual sense:
\begin{equation}
(\chi_R)^{\otimes k}\gamma_{\Psi_N}^{(k)}(\chi_R)^{\otimes k}=\sum_{n=k}^N\gamma_{G_{n,R}^-}^{(k)},\qquad (\eta_R)^{\otimes k}\gamma_{\Psi_N}^{(k)}(\eta_R)^{\otimes k}=\sum_{n=k}^N\gamma_{G_{n,R}^+}^{(k)}.
\label{eq:DM_localized}
\end{equation}
Next we split the $N$-body quantum energy using a partition of unity in the one-particle state as follows
\begin{align*}
\frac{\pscal{\Psi_N,H_N\Psi_N}}N&=N^{-1}\tr|-i\hbar\nabla+A|^2\gamma^{(1)}_{\Psi_N} +N^{-1}\int\rho_{\Psi_N}^{(1)}V\\
&\quad +N^{-2}\int_{\R^d}\int_{\R^d}w(x-y)\rho_{\Psi_N}^{(2)}(x,y)\,dx\,dy\\
&=N^{-1}\tr|-i\hbar\nabla+A|^2\chi_R\gamma^{(1)}_{\Psi_N}\chi_R +N^{-1}\int\rho_{\Psi_N}^{(1)}\chi_R^2V\\
&\quad +N^{-2}\int_{\R^d}\int_{\R^d}w(x-y)\chi_R^2(x)\chi_R^2(y)\rho_{\Psi_N}^{(2)}(x,y)\,dx\,dy\\
&\quad +N^{-1}\tr|-i\hbar\nabla+A|^2\eta_R\gamma^{(1)}_{\Psi_N}\eta_R\\
&\quad +N^{-2}\int_{\R^d}\int_{\R^d}w(x-y)\eta_R^2(x)\eta_R^2(y)\rho_{\Psi_N}^{(2)}(x,y)\,dx\,dy+\epsilon_{N,R},
\end{align*}
with the localization error
\begin{multline*}
\epsilon_{N,R}:=\frac{1}{R^2N^{1+2/d}}\int \chi'(|x|/R)^2\rho_{\Psi_N^{(1)}}(x)\,dx+\frac{1}{N}\int\rho_{\Psi_N}^{(1)}V\eta_R^2\\
+\frac{2}{N^2}\int_{\R^d}\int_{\R^d}w(x-y)\chi_R^2(x)\eta_R^2(y)\rho_{\Psi_N}^{(2)}(x,y)\,dx\,dy.
\end{multline*}
From the known weak convergence of the densities one has 
$$0=\lim_{R\to\ii}\left(\limsup_{N\to\ii}|\epsilon_{N,R}|\right):=\lim_{R\to\ii}\epsilon_R$$
(the term involving $w$ is treated by first replacing $w$ by a bounded function of compact support, using~\eqref{eq:estim_interaction}, and then by using the weak convergence of $\rho^{(2)}_{\Psi_N}$).
On the other hand, the terms involving $\eta_R$ may be rewritten using the property~\eqref{eq:DM_localized} as
\begin{multline*}
\frac{1}{N^{1+2/d}}\tr(-\Delta)\eta_R\gamma^{(1)}_{\Psi_N}\eta_R+\frac{1}{N^{2}}\int_{\R^d}\int_{\R^d}w(x-y)\eta_R^2(x)\eta_R^2(y)\rho_{\Psi_N}^{(2)}(x,y)\,dx\,dy\\
=\sum_{n=1}^N\frac{n}{N}\tr\left(\frac{H_{N,n}^0}{n}G_{R,n}^+\right)\geq \sum_{n=1}^N\frac{n}{N}\frac{E^0(n,n/N)}{n}\tr(G_{R,n}^+),
\end{multline*}
where $H^0_{N,n}$ is the $n$-body Hamiltonian
$$H^0_{N,n}:=\sum_{j=1}^n|-iN^{-1/d}\nabla_j+A(x_j)|^2+\frac{1}{N}\sum_{1\leq j<k\leq n}w(x_j-x_k)$$
with ground state energy $E^0(n,n/N)$. Passing to the weak limit for the terms involving $\chi_R$, we obtain as before
\begin{multline}
\lim_{N\to\ii}\frac{\pscal{\Psi_N,H_N\Psi_N}}N\geq \int_{\cB}\cE^{V,A}_{\rm Vla}\big(\chi_R^2(x)m\big)\,d{\rm P}(m)\\
+\liminf_{N\to\ii} \sum_{n=1}^N\frac{n}{N}\frac{E^0(n,n/N)}{n}\tr(G_{R,n}^+)+\epsilon_R.
\end{multline}
When $n\to\ii$ and $n/N\to\lambda>0$, then a simple adaptation of Theorem~\ref{thm:CV_GS_energy} states that
\begin{align*}
\lim_{\substack{n\to\ii\\ n/N\to\lambda}}\frac{n}{N}\frac{E^0(n,n/N)}{n}&=\lambda\inf_{\substack{\rho\geq0\\ \int_{\R^d}\rho=1}}\bigg\{\frac{d}{d+2}\lambda^{2/d}c_{\rm TF}\int_{\R^d}\rho^{1+2/d}\\
&\qquad\qquad\qquad\qquad+\frac{\lambda}{2}\int_{\R^d}\int_{\R^d}w(x-y)\rho(x)\rho(y)\,dx\,dy\bigg\}\\
&=e^0_{\rm TF}(\lambda).
\end{align*}
Arguing as in~\cite[p.~613--614]{LewNamRou-14}, it is not difficult to see that
\begin{equation*}
\lim_{N\to\ii} \sum_{n=1}^N\left(\frac{n}{N}\frac{E(n,n/N)}{n}-e^0_{\rm TF}(n/N)\right)\tr(G_{R,n}^+)=0.
\end{equation*}
Finally, using~\eqref{eq:fundamental} we find that
$$\sum_{n=1}^Ne^0_{\rm TF}\left(\frac{n}N\right)\tr(G_{R,n}^+)=\sum_{n=0}^{N-1}e^0_{\rm TF}\left(1-\frac{n}N\right)\tr(G_{R,n}^-).$$
Now the proof can be concluded based on the following Lemma, inspired of~\cite[Thm.~2.6]{LewNamRou-14}.

\begin{lemma}\label{lem:limit_particle_number}
For every continuous function $f$ on $[0,1]$, we have
\begin{equation}
\lim_{N\to\ii} \sum_{n=1}^Nf\left(\frac{n}{N}\right)\tr(G_{R,n}^-)=\int_\cB f\left(\int_{\R^d}\chi^2_R\,\rho_m\right)\,d{\rm P}(m).
\end{equation}
\end{lemma}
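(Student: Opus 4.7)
My plan is to reduce the statement to polynomial test functions $f(t) = t^k$ via Weierstrass approximation, and then identify the factorial moments of the sequence $\{\tr(G_{R,n}^-)\}_n$ with the trace of a localized $k$-particle density matrix, to which the already-established convergence~\eqref{eq:def_weak_limit_density2} can be applied.

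\textbf{Step 1 (Reduction to monomials).} First I would check that $\sum_{n=0}^N \tr(G_{R,n}^-) = 1$ and $\int_\cB d{\rm P}(m) = 1$. The first identity is obtained from the definition of $G_{R,n}^-$ by expanding $\prod_j(\chi_R^2 + \eta_R^2)(x_j) = 1$ against the symmetric density $|\Psi_N|^2$. Together these allow, by the Stone--Weierstrass theorem, to reduce the claim to the special case $f(t) = t^k$ for each integer $k \geq 0$, since the convergence uses only uniform approximation of $f$ on $[0,1]$.

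\textbf{Step 2 (Factorial moments and the localized density matrix).} For each $k$, the elementary bound $|t^k - t(t - 1/N)\cdots (t - (k-1)/N)| \leq C_k/N$ on $[0,1]$ combined with $\sum_n \tr(G_{R,n}^-) = 1$ replaces the $k$th power $(n/N)^k$ by the $k$th falling factorial up to an error of order $N^{-1}$. On the other hand, taking the trace of the localization identity~\eqref{eq:DM_localized} and using $\tr \gamma^{(k)}_G = \binom{n}{k}\tr G$ for any $n$-body state $G$ yields
\begin{equation*}
\sum_{n=k}^N \frac{n(n-1)\cdots(n-k+1)}{N^k}\tr(G_{R,n}^-) = \frac{k!}{N^k}\tr\!\Big[(\chi_R)^{\otimes k}\gamma^{(k)}_{\Psi_N}(\chi_R)^{\otimes k}\Big] = \frac{k!}{N^k}\int_{\R^{dk}}\chi_R^2(x_1)\cdots \chi_R^2(x_k)\,\rho^{(k)}_{\Psi_N}(x_1,\ldots,x_k)\,dx.
\end{equation*}

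\textbf{Step 3 (Passing to the limit via~\eqref{eq:def_weak_limit_density2}).} The function $U(x_1,\ldots,x_k) := \chi_R^2(x_1)\cdots \chi_R^2(x_k)$ is continuous, bounded, and compactly supported, hence it is an admissible test function for~\eqref{eq:def_weak_limit_density2}: for $k=1$ it lies in the dual of $L^1_{\rm loc} \cap L^{1+2/d}$, and for $k \geq 2$ it belongs to $C^0_0(\R^{dk})$. Therefore
\begin{equation*}
\lim_{N\to\ii}\frac{k!}{N^k}\int_{\R^{dk}}\prod_{j=1}^k\chi_R^2(x_j)\,\rho^{(k)}_{\Psi_N}(x_1,\ldots,x_k)\,dx = \int_\cB\left(\int_{\R^d}\chi_R^2\,\rho_m\right)^{\!k}\,d{\rm P}(m).
\end{equation*}

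\textbf{Step 4 (Conclusion).} Combining Steps~2 and~3 proves the lemma for $f(t) = t^k$. By linearity it holds for any polynomial, and the Stone--Weierstrass reduction in Step~1 extends it to arbitrary $f \in C^0([0,1])$.

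The only delicate point is the factorial-moment identification in Step~2; everything else is mechanical. Once one recognizes that $\tr(G_{R,n}^-)$ is the probability that exactly $n$ out of $N$ (indistinguishable fermionic) particles are localized inside $\{\chi_R \neq 0\}$, the identity is just the standard relation between the $k$th factorial moment of this distribution and the trace of the $k$-particle density matrix localized by $\chi_R^{\otimes k}$.
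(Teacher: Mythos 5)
Your proof is correct and follows essentially the same route as the paper: reduce to monomials $f(t)=t^k$ by Stone--Weierstrass using $\sum_n\tr(G_{R,n}^-)=1$, identify the falling-factorial moments with $N^{-k}\,k!\tr[(\chi_R)^{\otimes k}\gamma^{(k)}_{\Psi_N}(\chi_R)^{\otimes k}]$ via~\eqref{eq:DM_localized}, and pass to the limit using the weak convergence~\eqref{eq:def_weak_limit_density2} of $\rho^{(k)}_{\Psi_N}/N^k$. The paper's argument is the same modulo presentation.
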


Indeed, the lemma implies that
$$\lim_{N\to\ii}\sum_{n=0}^{N-1}e^0_{\rm TF}\left(1-\frac{n}N\right)\tr(G_{R,n}^-)=\int_\cB e^0_{\rm TF}\left(1-\int_{\R^d}\chi^2_R\,\rho_m\right)\,d{\rm P}(m)$$
and therefore we have proved that
\begin{multline*}
\lim_{N\to\ii}\frac{\pscal{\Psi_N,H_N\Psi_N}}N\\ \geq \int_{\cB}\left\{\cE^{V,A}_{\rm Vla}\big(\chi_R^2(x)m\big)+e^0_{\rm TF}\left(1-\int_{\R^d}\chi^2_R\,\rho_m\right)\right\}\,d{\rm P}(m)+\epsilon_R.
\end{multline*}
Taking finally the limit $R\to\ii$ gives the final estimate
\begin{equation}
e^V_{\rm TF}(1)=\lim_{N\to\ii}\frac{\pscal{\Psi_N,H_N\Psi_N}}N\geq \int_{\cB}\left\{\cE^{V,A}_{\rm Vla}(m)+e^0_{\rm TF}\left(1-\int_{\R^d}\rho_m\right)\right\}\,d{\rm P}(m). 
\end{equation}
Since
$\cE^{V,A}_{\rm Vla}(m)\geq e^{V}_{\rm TF}\left(\int_{\R^d}\rho_m\right)$
and since
$e^V_{\rm TF}(1)\leq e^V_{\rm TF}(\lambda)+e^V_{\rm TF}(1-\lambda)$
for every $0\leq\lambda\leq1$, we conclude that $\rm P$ must be supported on the set 
\begin{multline*}
\bigg\{0\leq m\leq 1\ :\ \int_{\R^d}\rho_m\leq 1,\\
\cE^{V,A}_{\rm Vla}(m)=e_{\rm TF}^V\Big(\int_{\R^d}\rho_m\Big)=e_{\rm TF}^V(1)-e_{\rm TF}^0\Big(1-\int_{\R^d}\rho_m\Big)\bigg\} 
\end{multline*}
which can easily be seen to be the set of weak limits of minimizing sequences for the variational problem $e^{V,A}_{\rm Vla}(1)$. The probability measure $\rm P$ induces a probability measure $\mathscr{P}$ on the set $\cM$ of the corresponding $\rho$'s and the proof of Theorem~\ref{thm:CV_states_unconfined} is finished.\qed

% There only remains to give the 

\begin{proof}[Proof of Lemma~\ref{lem:limit_particle_number}]
The proof is the same as~\cite[Thm.~2.6]{LewNamRou-14}. For $k\geq1$, we have by~\eqref{eq:DM_localized}
\begin{align*}
\frac{1}{N^k}\sum_{n=k}^N{n\choose k}\tr(G_{R,n}^-)&=\frac{1}{N^k}\tr(\chi_R)^{\otimes k}\gamma_{\Psi_N}^{(k)}(\chi_R)^{\otimes k}\\
&=\frac{1}{N^k}\int_{\R^{dk}}\prod_{j=1}^k \chi_R(x_j)^2\rho_{\Psi_N}^{(k)}(x_1,...,x_k)\,dx_1\cdots dx_k\\
&\underset{N\to\ii}{\longrightarrow}\frac{1}{k!}\int_{\cB}\left(\int_{\R^d}\chi_R^2\rho_m\right)^k\,d{\rm P}(m),
\end{align*}
from the local convergence of $\rho_{\Psi_N}^{(k)}/N^k$. Now 
$$
\frac{k!}{N^k}\sum_{n=k}^N{n\choose k}\tr(G_{R,n}^-)=\sum_{n=k}^N\frac{n(n-1)\cdots (n-k+1)}{N^k}\tr(G_{R,n}^-)
$$
has the same limit as 
$\sum_{n=1}^N(n/N)^k\tr(G_{R,n}^-).$
Therefore we have proved the lemma for $f(x)=x^k$, for all $k\geq1$. The result for any continuous function $f$ follows from the density of polynomials on $[0,1]$, together with the fact that 
$\sum_{n=0}^N\tr(G_{R,n}^-)=1$
for every $N$, by definition of the localized state.
\end{proof}

%%%%%%%%%%%%%%%%%%%%%%%%%%%%%%%%%%%%%%%%%%
%%%%%%%%%%%%%%%%%%%%%%%%%%%%%%%%%%%%%%%%%%
\appendix
\section{Proof of Lemma~\ref{lem:semi-classics}}\label{app:proof_lemma_semi-classics}
%%%%%%%%%%%%%%%%%%%%%%%%%%%%%%%%%%%%%%%%%%
%%%%%%%%%%%%%%%%%%%%%%%%%%%%%%%%%%%%%%%%%%

We first get a uniform estimate on $\rho_{\gamma_N}$. Indeed, using that
$$\gamma_N=\1((-i N^{-1/d}\nabla + A)^2_{C_R}-c_{\rm TF}\rho\leq0)\leq e^{-\beta\big((-i N^{-1/d}\nabla + A)^2_{C_R}-c_{\rm TF}\rho\big)},$$
for any $\beta>0$, we deduce that
$$\rho_{\gamma_N}(x)\leq |e^{-\beta((-i N^{-1/d}\nabla + A)^2_{C_R}-c_{\rm TF}\rho)}(x,x)|.$$
From the Feynman-Kac formula and the diamagnetic inequality, we have
$$|e^{-\beta\big((-i N^{-1/d}\nabla + A)^2_{C_R}-c_{\rm TF}\rho\big)}(x,y)|\leq e^{c_{\rm TF}\beta\norm{\rho}_{L^\ii}}e^{N^{-2/d}\beta\Delta_{C_R}}(x,y),$$
where $\Delta_{C_R}$ is the (non-magnetic) Dirichlet Laplacian on $C_R$,
and hence
\begin{align*}
\rho_{\gamma_N}(x)&\leq \left(\frac{2}{R}\right)^{d}e^{c_{\rm TF}\beta\norm{\rho}_{L^\ii}}\!\!\!\!\sum_{k_1,...,k_d\in (\pi/R)\N\setminus\{0\}} e^{-N^{-2/d}\beta\sum_{j=1}^d|k_j|^2}\times\\
&\qquad\qquad\qquad\qquad\qquad\qquad\qquad\times\prod_{j=1}^d\sin^2\left(k_j (x_j-R/2)\right)\\
&\leq \left(\frac{2}{R}\right)^{d}e^{c_{\rm TF}\beta\norm{\rho}_{L^\ii}}\left(\sum_{k\in \frac{\pi}{RN^{1/d}}\N\setminus\{0\}} e^{-\beta|k|^2}\right)^d\\
&\leq e^{c_{\rm TF}\beta\norm{\rho}_{L^\ii}}N\pi^{-d}\beta^{-d/2}\left(\int_\R e^{-|k|^2}\,dk\right)^d.
\end{align*}
Optimizing over $\beta$ gives
\begin{equation}
\rho_{\gamma_N}(x)\leq \left(\frac{2ec\pi}{\pi d}\right)^{d/2}N\norm{\rho}_{L^\ii(\R^d)}^{d/2}.
\end{equation}

We have therefore shown that $\rho_{\gamma_N}/N$ is bounded in $L^\ii(C_R)$. Up to extraction of a subsequence, we may assume that $\rho_{\gamma_N}/N\wto\tilde\rho$ weakly. Now we use that
\begin{align*}
0&\geq -\frac1N\tr\big((-i N^{-1/d}\nabla + A)^2_{C_R}-c_{\rm TF}\rho\big)_-\\
&=\frac1N\tr((-i N^{-1/d}\nabla + A)^2-c_{\rm TF}\rho)\gamma_N\\
&=\frac1{N}\tr(-i N^{-1/d}\nabla + A)^2_{C_R}\gamma_N-c_{\rm TF}\frac1N\int_{C_R}\rho\rho_{\gamma_N}
\end{align*}
from which we deduce that
$$\frac1{N}\tr(-i N^{-1/d}\nabla + A)^2_{C_R}\gamma_N\leq c_{\rm TF}\frac1N\int_{C_R}\rho\rho_{\gamma_N}\leq C\norm{\rho}_{L^1}\norm{\rho}_{L^\ii}^{d/2},$$
due to the uniform upper bound on $\rho_{\gamma_N}/N$. We then introduce the semi-classical measure $m_N(x,p)=\pscal{f^\hbar_{x,p},\gamma_Nf^\hbar_{x,p}}$ and call its weak limit $m$. Arguing as before, we obtain
\begin{multline*}
\liminf_{N\to\ii}\left(\frac1{N}\tr(-i N^{-1/d}\nabla + A)^2_{C_R}\gamma_N-c_{\rm TF}\frac1N\int_{C_R}\rho\rho_{\gamma_N}\right)\\
\geq \frac1{(2\pi)^d}\int_{\R^d}\int_{\R^d}\big(|p+A(x)|^2-c_{\rm TF}\rho(x)\big)m(x,p)\,dp\,dx. 
\end{multline*}
We now use the well-known Weyl asymptotics for the energy, i.e. 
\begin{align}
\label{eq:Magn-Weyl}
&\lim_{N\to\ii} -\frac1N\tr\big((-i N^{-1/d}\nabla + A)^2_{C_R}-c_{\rm TF}\rho(x)\big)_- \nonumber\\
&\qquad\qquad=-\frac{1}{(2\pi)^d}\int_{\R^d}\int_{\R^d}\big(|p+A(x)|^2-c_{\rm TF}\rho(x)\big)_-\,dx\,dp.
\end{align}
The result \eqref{eq:Magn-Weyl} is standard for smooth vector potentials $A$. Let $A_{\epsilon}$ be a smooth approximation of $A$ in $L^2$.
Using the inequality
\begin{align*}
(1-\delta) (-i N^{-1/d}\nabla + A_{\epsilon})^2 & - \delta^{-2} |A - A_{\epsilon}|^2\leq 
(-i N^{-1/d}\nabla + A)^2 \nonumber \\
&\leq
(1+\delta) (-i N^{-1/d}\nabla + A_{\epsilon})^2  + \delta^{-2} |A - A_{\epsilon}|^2,
\end{align*}
and the uniform upper bound on $\rho_{\gamma_N}/N$, the result follows for general $A$.

So we consider the Weyl asymptotics,
\begin{align*}
&\lim_{N\to\ii} -\frac1N\tr\big((-i N^{-1/d}\nabla + A)^2_{C_R}-c_{\rm TF}\rho(x)\big)_-\\
&\qquad\qquad=-\frac{1}{(2\pi)^d}\int_{\R^d}\int_{\R^d}\big(|p+A(x)|^2-c_{\rm TF}\rho(x)\big)_-\,dx\,dp\\
&\qquad\qquad=\inf_{0\leq m'\leq 1} \frac{1}{(2\pi)^d}\int_{\R^d}\int_{\R^d}\big(|p+A(x)|^2-c_{\rm TF}\rho(x)\big)m'(x,p)\,dx\,dp.
\end{align*}
with unique minimizer $m'(x,p) = \1(|p+A(x)|^2-c_{\rm TF}\rho(x)\leq0)$ in $L^\ii(\R^d\times\R^d)$, we conclude that $m=\1(|p+A(x)|^2-c_{\rm TF}\rho(x)\leq0)$ a.e. This gives in particular that 
$$N^{-1}\rho_{\gamma_N}(x)\wto \rho_m(x)=\frac{1}{(2\pi)^d}\int_{\R^d}\1(|p+A(x)|^2-c_{\rm TF}\rho(x)\big)\,dp=\rho(x)$$
weakly in $L^1\cap L^\ii$, hence that
\begin{equation}
N^{-1}\tr(\gamma_N)=N^{-1}\int_{C_R}\rho_{\gamma_N}\to\int_{C_R}\rho=1.
\label{eq:nb_particles}
\end{equation}
The latter says that $(-i N^{-1/d}\nabla + A)^2_{C_R}-c_{\rm TF}\rho$ has $N+o(N)$ negative eigenvalues.
From the above limits we also have as desired
\begin{align*}
\lim_{N\to\ii} \frac1{N}\tr(-i N^{-1/d}\nabla + A)^2_{C_R}\gamma_N&=\frac{1}{(2\pi)^d}\int_{\R^d}\int_{\R^d}|p|^2\1(|p|^2-c_{\rm TF}\rho(x)\big)\,dx\,dp\\
&=\frac{d}{d+2}4\pi^2\left(\frac{d}{|S^{d-1}|}\right)^{2/d}\int_{\R^d}\rho(x)^{1+2/d}\,dx.
\end{align*}

Finally, the results are all the same for an orthogonal projection $\tilde\gamma_N$ on $N$ first eigenfunctions of $(-i N^{-1/d}\nabla + A)^2_{C_R}-c_{\rm TF}\rho$ since 
$\tr|\gamma_N-\tilde\gamma_N|=o(N)$
by~\eqref{eq:nb_particles} and therefore $\|\rho_{\gamma_N}-\rho_{\tilde\gamma_N}\|_{L^1}=o(N)$. 
For $N$ large we have
$$\1\big((-i N^{-1/d}\nabla + A)^2_{C_R}\leq \rho(x)-\eps\big)\leq \tilde\gamma_N\leq \1\big((-i N^{-1/d}\nabla + A)^2_{C_R}\leq \rho(x)+\eps\big)$$
since, by the above arguments with $\rho$ replaced by $\rho\pm\eps$, 
$$\tr\1\big((-i N^{-1/d}\nabla + A)^2_{C_R}\leq \rho(x)\pm\eps\big)\sim (1\pm\eps R^d)N.$$
From the above estimates we conclude that $\rho_{\tilde\gamma_N}/N$ is bounded in $L^\ii$, and therefore $\rho_{\tilde\gamma_N}/N$ has the same weak limit as $\rho_{\gamma_N}$ in $L^1\cap L^\ii$. We also have
$$\tr((-i N^{-1/d}\nabla + A)^2_{C_R}-c_{\rm TF}\rho)(\gamma_N-\tilde\gamma_N)=o(N)$$
which implies that 
$$\tr(-i N^{-1/d}\nabla + A)^2_{C_R}(\gamma_N-\tilde\gamma_N)=o(N^{1+2/d})$$
and concludes the proof of Lemma~\ref{lem:semi-classics}.\qed

%%%%%%%%%%%%%%%%%%%%%%%%%%%%%%%%%%%%%%%%%%
%%%%%%%%%%%%%%%%%%%%%%%%%%%%%%%%%%%%%%%%%%
% \bibliographystyle{siam}
% \bibliography{biblio}

\end{document}